\newtheorem{thm}{Theorem}[section]
\newtheorem{prop}[thm]{Proposition}
\newtheorem{lem}[thm]{Lemma}
\newtheorem{cor}[thm]{Corollary}
\theoremstyle{definition}
\newtheorem{dfn}[thm]{Definition}
\theoremstyle{remark}
\newtheorem{rem}[thm]{Remark}
\newtheorem{example}[thm]{Example}
\newcommand{\CC}{\mathbb{C}}
\newcommand{\NN}{\mathbb{N}}
\newcommand{\RR}{\mathbb{R}}
\newcommand{\ZZ}{\mathbb{Z}}
\title{Delocalized spectra of Landau operators on helical surfaces}
\author[1]{Yosuke Kubota
\thanks{ykubota@shinshu-u.ac.jp}}
\author[2]{Matthias Ludewig\thanks{matthias.ludewig@mathematik.uni-regensburg.de}}
\author[3]{Guo Chuan Thiang\thanks{guochuanthiang@bicmr.pku.edu.cn}}
\affil[1]{\normalsize Department of Mathematical Sciences, Shinshu University}
\affil[1]{\normalsize RIKEN iTHEMS}
\affil[2]{\normalsize Fakult\"{a}t f\"{u}r Mathematik, Universit\"{a}t Regensburg}
\affil[3]{\normalsize Beijing International Center for Mathematical Research, Peking University}
\date{\today}
\begin{document}

\maketitle

\abstract{
On a flat surface, the Landau operator, or quantum Hall Hamiltonian, has spectrum a discrete set of infinitely degenerate Landau levels. We consider surfaces with asymptotically constant curvature away from a possibly non-compact submanifold, the helicoid being our main example. The Landau levels remain isolated, provided the spectrum is considered in an appropriate Hilbert module over the Roe algebra of the surface delocalized away from the submanifold. Delocalized coarse indices may then be assigned to them. As an application, we prove that Landau operators on helical surfaces have no spectral gaps above the lowest Landau level.
}

\subsection*{Introduction}

On the Euclidean plane, it is well known that the Laplacian twisted by a line bundle of nonzero constant curvature $b$ times the volume form (also called a \emph{Landau operator}), has spectrum $(2\NN+1)|b|$ comprising infinitely degenerate and evenly spaced isolated \emph{Landau levels}. This is known as Landau quantization in the physics literature \cite{Landau}. In the 1980s, with the discovery of the quantum Hall effect \cite{Klitzing}, it was realized that the Landau levels should be ``topological'' in some sense to account for the stability of the effect \cite{TKNN}, and that this has some manifestation on the boundary of the planar material as ``edge states'' \cite{Halperin} whose energies fill up the gaps between Landau levels. Mathematically, the ``topological protection'' of such spectral features of Landau operators should be the result of certain index theorems \cite{ASS, BES}.

So far, spectral analysis of Landau operators has mostly been limited to the \emph{constant} curvature situation, both in the curvature of the ${\rm U}(1)$ line bundle (physically the magnetic field strength perpendicular to the surface) and in the underlying surface geometry. Thus one studied subsets of the Euclidean plane, the product cylinder $S^1\times \RR$, the hyperbolic plane, etc., and applied constant magnetic fields. However, perfectly constant curvatures are merely idealizations, so the Landau level concept should be investigated in geometrically perturbed settings.

The stability of Landau levels (as essential spectra) against perturbations of a constant magnetic field vanishing at infinity were considered in \cite{Inahama, Iwatsuka}. A recent perspective \cite{LT, LTcobordism} is that the Euclidean/hyperbolic plane Landau level spectral projections, while infinitely degenerate and thus not characterized by a Fredholm index, actually realize the \emph{coarse} index \cite{Roebook} of an associated Dirac operator. This coarse geometric perspective, together with supersymmetry techniques, can account for small geometric perturbations very efficiently, see Section \ref{sec:indices.examples}.

\begin{figure}
	\centering
	\includegraphics[scale=0.32]{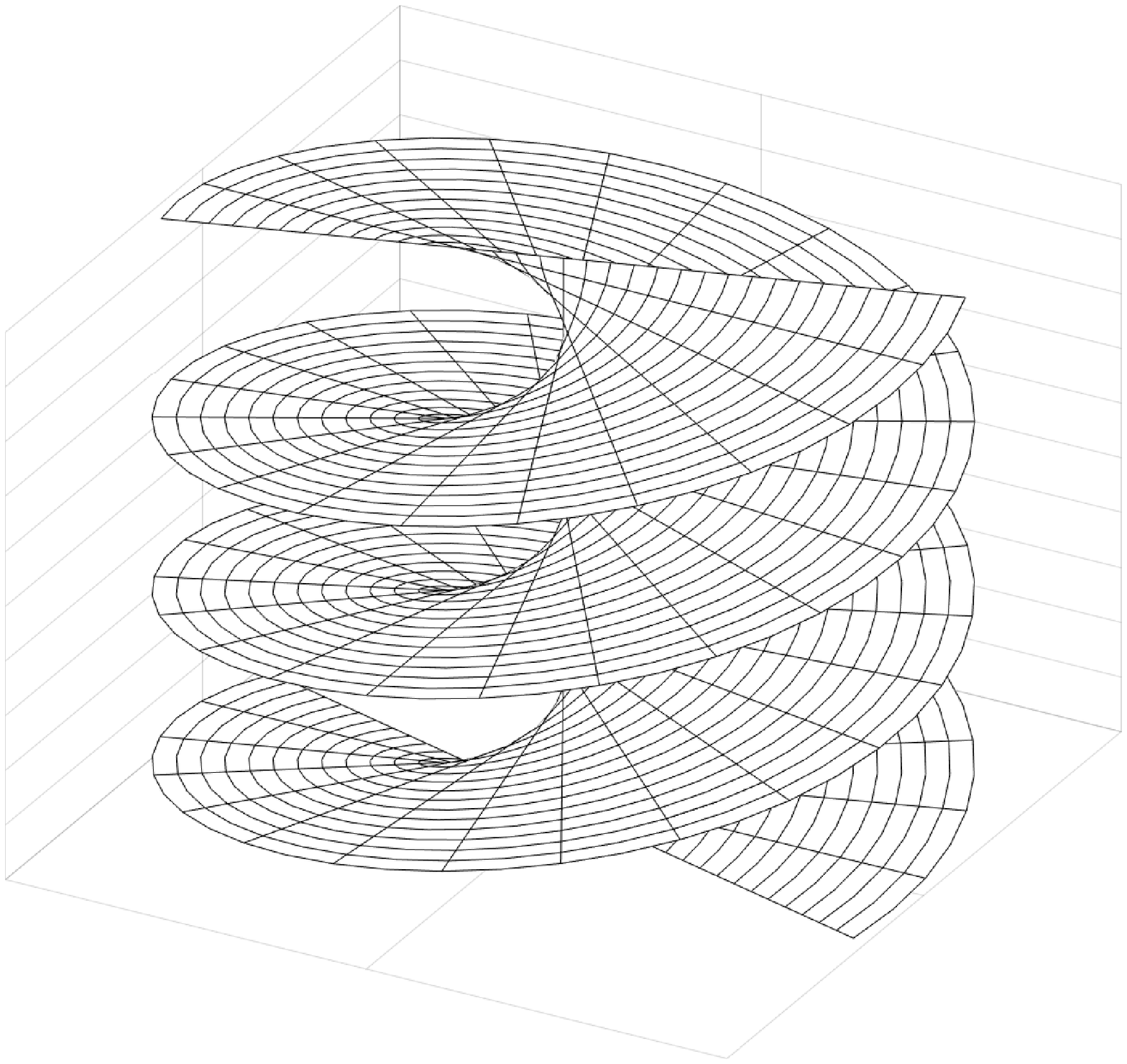}\includegraphics[scale=0.32]{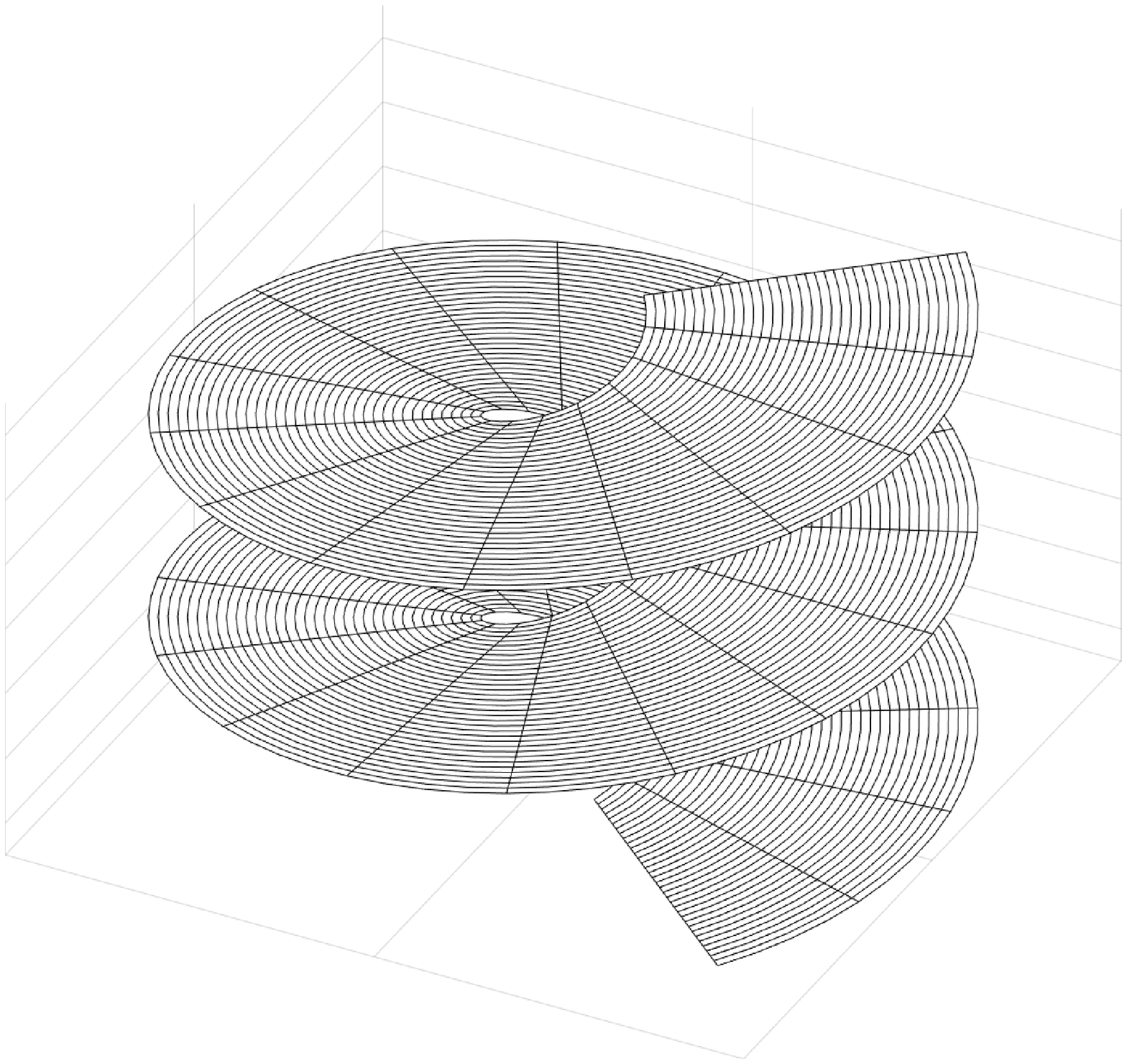} 
	\caption{[Left] A boundary-less helicoid is a minimal surface in $\RR^3$. [Right] A half-helicoid modelling a screw-dislocated surface. Its boundary is a helix winding around the dislocation axis.}\label{fig:helicoids}
\end{figure}

In physical practice, one has a surface embedded in Euclidean 3-space, and submits it to an externally applied magnetic field. An example of such a surface, which unfortunately does \emph{not} have asymptotically constant curvature, is the helicoid (see Sec.\ \ref{sec:helicoid.definition}). The (half-)helicoid models a screw-dislocated surface, as illustrated in Fig.\ \ref{fig:helicoids}. It is a physical heuristic, extrapolated from the Euclidean plane case, that \emph{helically} propagating states binding to the screw dislocation axis will appear in the spectrum of Landau operators on such screw-dislocated surfaces. This idea has been investigated in related \emph{discrete} models physically \cite{RZV}, and mathematically \cite{Kubota}, the latter with newly-developed coarse geometry methods. The relation between \cite{Kubota} and the present work on continuum Landau operators is discussed in Section \ref{sec:relation.discrete}. In fact, in the time-reversal symmetric setting, there have been experimental realizations of such helical states in acoustic topological insulators \cite{Xue,Ye}.

In a helicoid geometry, and also more generally, Landau levels are not spectrally isolated, even inside the essential spectrum. This presents difficulties in defining their indices via spectral projections in the usual way. To overcome this, we develop a perspective of Landau operators as operators on Hilbert modules over Roe $C^*$-algebras (Sect.\ \ref{sec:Dirac.on.Roe}). Using the spectral theory of such operators, we are able to show that the helicoid Landau levels are isolated in the part of the spectrum ``delocalized away from the screw dislocation axis'' (Theorem \ref{thm:quotient.Landau}). This allows for the construction of a \emph{delocalized index} for the Landau levels, \emph{without requiring them to be surrounded by a strict (essential) spectral gap}. Furthermore, we use a coarse Mayer--Vietoris principle to show that this index has a ``dimensional reduction'' to an index supported along the dislocation axis. The spectral implication is that a Landau operator on a helical surface (see Definition \ref{dfn:dislocated.Landau}) is forbidden from having any spectral gaps above the lowest Landau level, with helical edge states filling up the gaps between the delocalized Landau levels (Theorem \ref{thm:helical.gap.filling}). Due to the coarse geometry techniques, this result is very robust against geometric perturbations, see Remark \ref{rem:robust}. We also explain how our analysis specializes to a more concrete one involving Fourier transform and spectral flow, under the assumption of translation invariance along the dislocation axis (Sect.\ \ref{sec:Fourier}).

\section{Background and warm-up}\label{sec:warm-up}
Our main result of physical interest is Theorem \ref{thm:helical.gap.filling}, stating that the Landau operator on a helical surface has no spectral gaps above the lowest Landau level. Here, the helical surface is embedded in 3D Euclidean space with axis along the $z$-direction, and is subject to an external constant-strength magnetic vector field parallel to $z$.

Similar results in geometrically simple settings (e.g.\ Euclidean half-plane) can be obtained by direct spectral analysis. However, such methods are not (known to be) generalizable to other geometrical settings. Our method of proof is very general, and is of independent interest in spectral theory. It also leads to some new stability results for Landau levels (Section \ref{sec:indices.examples}). However, the mathematical techniques involved are relatively new and may be unfamiliar to the non-specialist. This warm-up section discusses the familiar models of the simple harmonic oscillator and Euclidean space Landau Hamiltonian, with a view towards our general coarse index theory perspective.

\subsection{Supersymmetric index of harmonic oscillator}\label{sec:SHO}
The simple harmonic oscillator Hamiltonian on the Euclidean line is
\begin{equation*}
H_{\rm SHO}=-\frac{d^2}{dx^2}+x^2,
\end{equation*}
where we have omitted the customary factor of $\frac{1}{2}$, and physical units. There is a first-order ``square root'',
\begin{align}
D_{\rm SHO}&=\begin{pmatrix}0 & -\frac{d}{dx}+x \\\frac{d}{dx}+x & 0\end{pmatrix}=:\begin{pmatrix} 0 & a^*\\ a &0\end{pmatrix},\label{eqn:supercharge}\\
\tilde{H}:=D_{\rm SHO}^2&=\begin{pmatrix} a^*a & 0 \\ 0 & aa^* \end{pmatrix}=\begin{pmatrix} H_{\rm SHO}-1 & 0 \\ 0 & H_{\rm SHO}+1 \end{pmatrix}=:\begin{pmatrix}\tilde{H}^B & 0 \\ 0 & \tilde{H}^F\end{pmatrix}.\nonumber
\end{align}
The operators $\tilde{H}^B=a^*a\geq 0$ and $\tilde{H}^F=aa^*\geq 0$ are, respectively, the \emph{bosonic} and \emph{fermionic} parts of the \emph{supersymmetric Hamiltonian} $\tilde{H}=\tilde{H}^B\oplus \tilde{H}^F$, and they share the same non-zero eigenvalues,
\begin{equation*}
\sigma(\tilde{H}^B)\setminus\{0\}=\sigma(\tilde{H}^F)\setminus\{0\}.
\end{equation*}
Since $\tilde{H}^F=\tilde{H}^B+2$, the spectrum $\sigma(\tilde{H}^B)$ is invariant under an energy shift by $2$, except for its 0 eigenvalue. It follows that $\sigma(H_{\rm SHO})=\sigma(\tilde{H}^B+1)=2\NN+1$. 

The operators $a,a^*$ are called \emph{ladder operators}, and they exchange the eigenspaces of $\tilde{H}^B$ and $\tilde{H}^F$ that have the same non-zero eigenvalues. In the context of supersymmetric quantum mechanics, the operator $D_{\rm SHO}$ is called a \emph{supercharge} for $\tilde{H}$. It is an odd operator in the sense that it maps the bosonic ($+1$-graded) subspace to the fermionic ($-1$-graded) subspace, and vice versa. The harmonic oscillator ground state space is thus identified with the ``supersymmetric ground state space'', ${\rm ker}(\tilde{H})$, of $\tilde{H}$. The latter kernel is purely ``bosonic'' (it is spanned by $\binom{e^{-x^2/2}}{0}$), thus the \emph{supersymmetric/Witten index} is $+1$. 

This ``algebraic'' discussion is well known in physics (e.g.\ \S10.2 of \cite{Hori}). Crucially, the index is stable under some deformations: replace Eq.\ \eqref{eqn:supercharge} by the odd operator
\begin{equation*}
D_h=\begin{pmatrix} 0 & -\frac{d}{dx} + h(x)\\
\frac{d}{dx} + h(x) & 0 \end{pmatrix},
\end{equation*}
with $h$ some smooth real-valued function, so the ground states of $D_h$ are given by $e^{-\int_0^x h}$ (bosonic) and/or $e^{\int_0^x h}$ (fermionic), depending on normalizability. In fact,  if $h$ is invertible outside a compact set, then $D_h$ is a \emph{Callias-Dirac} operator \cite{Callias}, whose index may be deduced from the large-scale behaviour of $h$ via an index formula, Eq.\ 3.2 of \cite{Callias}.

The harmonic oscillator example indicates that \emph{large-scale} data can lead to ``asymmetry'' (with respect to a $\ZZ_2$-grading) of the kernel of an odd Dirac-type Hamiltonian $D$. This asymmetry is manifested as the ground state space of the Laplace-type operator $\tilde{H}=D^2$. 

\subsection{Index of Landau Hamiltonian}\label{sec:Landau.index}
Let $b>0$. The Landau operator $H_b$ on the Euclidean plane is a ``pure geometric operator'' in the sense that it is simply the Laplace operator coupled to a connection on a ${\rm U}(1)$ line bundle with curvature $b$ times the Riemannian volume form. Physically, it is the Hamiltonian operator for a 2D electron gas subject to a uniform magnetic field perpendicular to the plane. Its explicit expression in Landau gauge is
\begin{equation*}
H_b=-\partial_x^2-(\partial_y-ibx)^2.
\end{equation*}
There is again a first-order ``square root'' (see Section \ref{sec:geometric.setup} for the geometric reason),
\begin{equation*}
D_b=\begin{pmatrix} 0 & -\partial_x+i(\partial_y-ibx) \\ \partial_x+i(\partial_y-ibx) & 0\end{pmatrix},\qquad D_b^2=\begin{pmatrix}H_b - b& 0 \\ 0 & H_b+b\end{pmatrix}.
\end{equation*}
The same argument as in the $H_{\rm SHO}$ case leads to the \emph{Landau quantization} result,
\begin{equation*}
\sigma(H_b)=(2\NN+1)b,
\end{equation*}
with the ground state space (lowest Landau level) being the ``purely bosonic'' Dirac kernel, ${\rm ker}(D_b)$. This kernel is easily determined by observing that $D_b$ is translation invariant in the $y$-direction. The Fourier transform is
\begin{equation*}
D_b(k_y)=\begin{pmatrix} 0 & -\frac{d}{dx}+bx-k_y\\ \frac{d}{dx}+bx-k_y & 0\end{pmatrix},\qquad k_y\in\widehat{\RR}.
\end{equation*}
A change of variable $\tilde{x}=bx-k_y$, turns $D_b(k_y)$ into $b\cdot D_{\rm SHO}$. So each $D_b(k_y)$ has a 1-dimensional ``bosonic'' kernel. Overall, ${\rm ker}(D_b)$ is infinite-dimensional, so how should we quantify the lowest Landau level of $H_b$ as an ``index of $D_b$''?

\paragraph{Equivariant index.} Consider the operators $T_x, T_y$ of translations by a distance $\sqrt{2\pi/b}$ in the $x$ and $y$ directions respectively. These lift to \emph{magnetic} translation operators $\tilde{T}_x=e^{i\sqrt{2\pi b}y}\cdot T_x$ and $\tilde{T}_y=T_y$ respectively, which commute with $D_b$ (and also $H_b$). The integral flux condition through a square of side length $\sqrt{2\pi/b}$ means that $\tilde{T}_x, \tilde{T}_y$ commute, so $D_b$ has an abelian lattice $\ZZ^2$ of symmetries. Fourier transform turns $D_b$ into a family $D_b(k_x,k_y)$ of operators parametrized by the \emph{magnetic Brillouin torus} $\widehat{\ZZ}^2$ dual to $\ZZ^2$. Each $D_b(k_x,k_y)$ is actually a Dirac operator on the square, subject to quasi-periodicity conditions labelled by $(k_x,k_y)\in \widehat{\ZZ}^2$. So the Fourier transform of $D_b$ is a family of elliptic operators on a compact manifold, and the kernel of $D_b$ is an eigen\emph{bundle} over $\widehat{\ZZ}^2$. More precisely, this eigenbundle is the \emph{index bundle} \cite{AS4} representing a $K$-theory class in $K^0(\widehat{\ZZ}^2)$. To a non-technical approximation, this $K$-theory class is the well-known TKNN--Chern number of a Landau level \cite{TKNN}. A difficulty arises when a periodic potential $V$ is added to $H_b$, with periodicity incommensurate with that of the magnetic translation lattice. This can be fixed by passing to a noncommutative Brillouin torus \cite{BES}.

\paragraph{Why is coarse index needed?} The $\ZZ^2$-equivariant approach to indices of Landau levels sketched above (commutative or otherwise) is rather deficient since we cannot go beyond $\ZZ^2$-invariant potentials or perturbations. It is also limited to the flat geometry and uniform magnetic field setting. It is therefore important to recognise that the $\ZZ^2$-equivariant approach is actually a special case of the \emph{coarse index}, Eq.\ \eqref{eqn:coarse.assembly}, which can be thought of as a way to quantify the supersymmetric ground state space of a Dirac operator (thus also a Landau level) on a general Riemannian manifold. Coarse indices, developed in \cite{Roebook,HR-book,HR-coarse,WillettYu}, take into account the large-scale (or coarse) geometric data while ignoring small-scale details.

\subsection{$K$-theory and (coarse) index}
A general reference on operator $K$-theory is \cite{Blackadar}. Let us recall some notions of index theory in this language.

A bounded \emph{Fredholm} operator $F\in\mathcal{B}(\mathscr{H})$ on a Hilbert space $\mathscr{H}$ has finite-dimensional kernel and cokernel, and its index is the difference of their dimensions. It is convenient to write
\begin{equation*}
{\rm Ind}(F):=\dim \ker F - \dim \ker F^* \equiv \dim \ker\begin{pmatrix} 0 & F^*\\ F& 0\end{pmatrix},
\end{equation*}
where the dimension on the right side is counted in the $\ZZ_2$-graded sense. 

A Fredholm operator may also be characterized by its invertibility modulo compact operators. In $C^*$-algebra language, we have the ideal of compact operators $\mathcal{K}(\mathscr{H})\subset\mathcal{B}(\mathscr{H})$, the quotient map $\pi:\mathcal{B}(\mathscr{H})\to \mathcal{B}(\mathscr{H})/\mathcal{K}(\mathscr{H})=:\mathcal{Q}(\mathscr{H})$ to the \emph{Calkin algebra}, and $\pi(F)$ becomes invertible in the Calkin algebra.

The invertible elements of $\mathcal{Q}(\mathscr{H})$ are organized into connected components labelled by an integer. This is exactly the $K_1$-theory group,
$
K_1(\mathcal{Q}(\mathscr{H}))\cong\ZZ.
$
This identification with the integers is given more precisely by a connecting map
\begin{equation}
\partial:K_1(\mathcal{Q}(\mathscr{H}))\overset{\cong}{\longrightarrow} \underbrace{K_0(\mathcal{K}(\mathscr{H}))}_{\ZZ}.\label{eqn:K.index}
\end{equation}
On the right side, the symbol $K_0(\cdot)$ denotes the homotopy classes of projections taken in a certain way. For instance, compact projection operators are finite-rank, and two such projections are homotopic (via a path of projections) iff they have the same rank. Classes in $K_0(\mathcal{K}(\mathscr{H}))$ are represented by a formal difference of finite-rank projections, $p_1\ominus p_2$, and the isomorphism $K_0(\mathcal{K}(\mathscr{H}))\cong \ZZ$ is given by $[p_1\ominus p_2]\mapsto \mathrm{rank}(p_1)-\mathrm{rank}(p_2)$. Under this identification, the map of Eq.\ \eqref{eqn:K.index} is precisely an index map (\S8.3.2 \cite{Blackadar}),
\begin{equation*}
\partial[\pi(F)]=[p_{\ker F}\ominus p_{\ker F^*}]\leftrightarrow {\rm Ind}(F).
\end{equation*}
In practice, one might have a Dirac operator on a compact manifold (thus with compact resolvent), and $F$ arises as its bounded transform (which is Fredholm),
\begin{equation*}
D=\begin{pmatrix} 0 & D_+^* \\ D_+ & 0\end{pmatrix}\rightsquigarrow \frac{D}{\sqrt{1+D^2}}=\begin{pmatrix}0 & F^* \\ F & 0\end{pmatrix},\quad F=D_+(1+D_+^*D_+)^{-1/2}.
\end{equation*}

\paragraph{Why Hilbert $C^*$-modules?}
The above operator $K$-theoretic formulation of the classical index admits a vast generalization. For example, we may have a family of Dirac operators on a compact manifold, parametrized by another compact mani\-fold $X$. Instead of a single Hilbert space $\mathscr{H}$, we have a bundle of Hilbert spaces, or more formally, a Hilbert $C(X)$-module on which the family acts. The spectral theory and functional calculus is similarly parametrized, and the families index defines an element of $K_0(C(X))\cong K^0(X)$, as represented by an index bundle over $X$, see \cite{AS4}. We saw an example of this in Section \ref{sec:Landau.index} --- the $\ZZ^2$-equivariant index of $D_b$ was analyzed by Fourier transforming to a Dirac family parametrized by $X=\widehat{\ZZ}^2$.

Even more generally, we can replace $C(X)$ by some other $C^*$-algebra $B$ suited to the problem at hand, and obtain $K_0(B)$-valued indices (``noncommutative families index''). For example, the coarse index of a Dirac operator on a noncompact $M$ is valued in $K_0(C^*(M))$, where $C^*(M)\subset \mathcal{B}(L^2(M))$ is the Roe $C^*$-algebra of $M$, whose definition is recalled in Section \ref{sec:Dirac.on.Roe}. When the Riemannian manifold $M$ is the Euclidean plane $\RR^2$, the lowest Landau level projection directly defines a (non-trivial) class in $K_0(C^*(\RR^2))$; similarly for $M$ the hyperbolic plane \cite{LT}. 

Motivated by this, we will generalize in Section \ref{sec:generalized.Fredholm} the ideas of supersymmetric quantum mechanics and supercharges to operators acting on Hilbert $C^*$-modules. In Section \ref{sec:Dirac.on.Roe}, we will explain a natural viewpoint of Dirac and Landau operators on $M$ as operators acting on a certain Hilbert $C^*(M)$-module. Much of the spectral theory of operators on Hilbert spaces generalizes to the Hilbert $C^*$-module setting \cite{Lance,Trout}. Although we recover the Hilbert space Dirac/Landau operators via the natural representation of $C^*(M)$ on $L^2(M)$ (Lemma \ref{lem:same.spec}), the Hilbert $C^*(M)$-module viewpoint is more powerful, as it allows us to do spectral theory in quotient algebras.

\paragraph{Why delocalized coarse index?}
The group $K_0(C^*(M))$ gives a way to count the size of infinite-rank projections in $C^*(M)\subset \mathcal{B}(L^2(M))$, such as the Landau level projections. It is insensitive to the addition or removal of finite-rank 
projections (Lemma \ref{rem:flat.LLL}), and we may work with the quotient algebra $C^*(M)/\mathcal{K}$ instead. This has the effect of ignoring the discrete part of the spectrum of operators, much like working in the Calkin algebra $\mathcal{Q}=\mathcal{B}/\mathcal{K}$ isolates the essential spectrum. 

We can go a step further. Euclidean space Landau levels are understood to comprise delocalized (or ``extended'') states, see \cite{Halperin,Kunz,BES}, and \cite{LTwannier2} for a Wannier basis perspective. Generally, introducing curvature causes the Landau levels to broaden into bands. This happens even if the curvature is only concentrated near some positive-codimension submanifold $N$, as occurs for a helicoid. If the bands overlap, then they do not define genuine spectral projections. However, if we work modulo the Roe algebra $C^*_M(N)$ localized near the high curvature region $N$, then the spectrum is reduced. Then it is possible to recover a spectral projection (and therefore $K$-theory index) in the quotient algebra $C^*(M)/C^*_M(N)$. This is a key idea for defining ``delocalized'' coarse indices for helical Landau operators, Eq.\ \eqref{eqn:delocalized.index} in Section \ref{sec:delocalized.index}. Heuristically, the non-triviality of a delocalized coarse index indicates the existence of spectrum delocalized away from $N$.

\section{Spectral supersymmetry}
In Section \ref{sec:warm-up}, we encountered the notion of ``spectral supersymmetry'' and supercharges (e.g.\ \S 5 of \cite{Thaller}, \cite{CKS}, \S10 of \cite{Hori}), and saw how it leads to an index-theoretic understanding of Euclidean space Landau levels. 
We shall abstract these notions to operators acting on Hilbert $C^*$-modules, motivated by the utility of viewing Dirac and Landau operators as acting on Hilbert $C^*$-modules over Roe algebras (Section \ref{sec:Dirac.on.Roe} later).

\subsection{Supercharges on Hilbert $C^*$-modules}
A standard reference on regular operators on Hilbert $C^*$-modules is \cite{Lance}, Section 9.

\medskip
Let $B$ be a $C^*$-algebra, and $\mathcal{E}_+,\mathcal{E}_-$ be countably generated Hilbert $B$-modules. A regular operator $D_+:\mathcal{E}_+\rightarrow \mathcal{E}_-$, is a closed, densely defined $B$-linear operator such that $D_-:=D_+^*:\mathcal{E}_-\rightarrow \mathcal{E}_+$ is densely defined (and closed), and $1+D_+^*D_+$ has dense range.

The $C^*$-algebra of adjointable ($B$-linear) bounded (resp.\ compact) operators $\mathcal{E}_+\rightarrow \mathcal{E}_-$ is denoted by $\mathcal{B}(\mathcal{E}_+, \mathcal{E}_-)$ (resp.\ $\mathcal{K}(\mathcal{E}_+,\mathcal{E}_-)$). 
When $\mathcal{E}_+=\mathcal{E}_-=\mathcal{E}$, we simply write $\mathcal{B}(\mathcal{E})$ and $\mathcal{K}(\mathcal{E})$.
A complex number $z$ is in the resolvent $\rho(T)$ of a regular operator $T$ on $\mathcal{E}$ if $(T-z)^{-1}$ exists in $\mathcal{B}(\mathcal{E})$, and is in the spectrum $\sigma(T)$ otherwise. We say that $T$ has compact resolvent if $(T -z)^{-1} \in \mathcal{K}(\mathcal{E})$ for any $z \in \rho(T)$.

For $B=\CC$, these reduce to the usual notions of resolvent and spectrum of a closed densely defined operator on a Hilbert space.

\begin{rem}[Interior tensor product]\label{rem:regular.homomorphism}
Let $A,B$ be $C^*$-algebras, $\mathcal{E}$ be a Hilbert $B$-module, $\mathcal{F}$ be a Hilbert $A$-module, and let $\varpi: B \to \mathcal{B}(\mathcal{F})$ be a $*$-representation by $A$-linear operators.
Then for a regular (self-adjoint) operator $D$ on $\mathcal{E}$, we can construct a regular (self-adjoint) operator $\varpi(D) = D \otimes _\varpi 1$ on $\mathcal{E} \otimes_{\varpi} \mathcal{F}$  (\cite{Lance}, Proposition 9.10), as the closure of the operator $\varpi(D)_0$ defined by
\[{\rm Dom}(\varpi(D)_{0}) = {\rm Dom}(D) \otimes_{\varpi , \mathrm{alg} } \mathcal{F}, \qquad  \varpi(D)_0 \cdot (x \otimes_\varpi y) = D \cdot x \otimes _\varpi y .
\]
Note that if $\varpi$ is injective, then $D$ and $\varpi(D)$ have the same spectrum. Indeed, the spectra of $D$ and $\varpi(D)$ are the same things as the Gelfand--Naimark spectra of the abelian $C^*$-algebras $\phi_D(C_0(\RR))$ and $\phi_{\varpi(D)}(C_0(\RR))$ respectively, where for a self-adjoint operator $T$ on a Hilbert module $\mathcal{E}$, $\phi_T : C_0(\RR) \to \mathcal{B}(\mathcal{E})$ denotes the continuous functional calculus of $T$ (see Theorem 10.9 of \cite{Lance}).
The injectivity of $\varpi$ implies that $\varpi \colon \phi_D(C_0(\RR)) \to \phi_{\varpi(D)}(C_0(\RR))$ is an isomorphism. 
\end{rem}

\begin{thm}\label{thm:ess.susy}
Let $D_+:\mathcal{E}_+\rightarrow\mathcal{E}_-$ be a regular operator. 
Then $\sigma(D_+^*D_+)\setminus\{0\}=\sigma(D_+D_+^*)\setminus\{0\}$. 
\end{thm}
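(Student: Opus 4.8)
The plan is to mimic the classical Hilbert-space argument showing that $D_+^*D_+$ and $D_+D_+^*$ have the same nonzero spectrum, but carried out with the continuous functional calculus for regular operators on Hilbert $C^*$-modules (Theorem 10.9 of \cite{Lance}). The key structural fact is that for a regular operator $D_+$, the \emph{bounded transform} $F = D_+(1+D_+^*D_+)^{-1/2}$ is an adjointable operator in $\mathcal{B}(\mathcal{E}_+,\mathcal{E}_-)$ with $\|F\|\le 1$, and it intertwines the two Laplacians in the sense that $F^*F = D_+^*D_+(1+D_+^*D_+)^{-1}$ and $FF^* = D_+D_+^*(1+D_+D_+^*)^{-1}$; moreover $D_+^*D_+ = F^*F(1-F^*F)^{-1}$ and likewise for $D_+D_+^*$. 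Thus $\sigma(D_+^*D_+)$ is the image of $\sigma(F^*F)$ under the homeomorphism $t\mapsto t/(1-t)$ of $[0,1)$ onto $[0,\infty)$, and similarly on the other side. So it suffices to prove $\sigma(F^*F)\setminus\{0\} = \sigma(FF^*)\setminus\{0\}$ for the bounded adjointable operator $F$.

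For the bounded statement I would argue purely at the level of the abelian $C^*$-algebras generated by $F^*F$ and $FF^*$ inside $\mathcal{B}(\mathcal{E}_+)$ and $\mathcal{B}(\mathcal{E}_-)$ respectively, using the identity $F(F^*F)^n = (FF^*)^nF$ for all $n\ge 0$, hence $F\,p(F^*F) = p(FF^*)\,F$ for every polynomial $p$, and by continuity $F\,g(F^*F) = g(FF^*)\,F$ for every $g\in C([0,\|F\|^2])$. Now fix $\lambda\ne 0$ in $\sigma(F^*F)$ and suppose for contradiction that $\lambda\notin\sigma(FF^*)$. Choose a continuous $g$ supported near $\lambda$ and away from $0$, with $g(\lambda)\ne 0$; then $g(FF^*)$ is built from the resolvent $(FF^* - \lambda)^{-1}$ and vanishes on a neighbourhood of $0$, so one shows $g(FF^*) = FF^* \cdot h(FF^*)$ for a suitable continuous $h$, giving $g(FF^*) = F\, h(F^*F)\, F^*$. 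This combined with the intertwining identity forces $g(F^*F)\cdot(\text{something invertible near }\lambda)$ to factor through $F^* g'(FF^*) F$ for a bump $g'$, ultimately exhibiting $g(F^*F)$ as an element that would have to vanish if $\lambda\notin\sigma(FF^*)$ — contradicting $g(F^*F)\ne 0$, which holds because $\lambda\in\sigma(F^*F)$ means $g$ is nonzero on the Gelfand spectrum of the abelian algebra $\phi_{F^*F}(C_0)$. A cleaner way to package this: for $\lambda\ne 0$, invertibility of $F^*F - \lambda$ in $\mathcal{B}(\mathcal{E}_+)$ is equivalent to invertibility of $FF^* - \lambda$ in $\mathcal{B}(\mathcal{E}_-)$, which one checks by writing down the explicit inverse, namely $(FF^* - \lambda)^{-1} = \lambda^{-1}\bigl(F(F^*F-\lambda)^{-1}F^* - 1\bigr)$ and verifying the product both ways using $FF^*F = F(F^*F)$.

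Concretely I would therefore organize the proof as: (i) recall the bounded transform $F$ and its basic properties for regular operators, citing \cite{Lance}; (ii) establish the algebraic intertwiner $FF^*F = F(F^*F)$ and deduce $F\,g(F^*F) = g(FF^*)\,F$ for $g\in C_0(\RR)$ via functional calculus; (iii) for $\lambda\ne 0$ write down the candidate inverse $\lambda^{-1}(F(F^*F-\lambda)^{-1}F^* - 1)$ for $FF^*-\lambda$ and verify it is a genuine two-sided inverse in $\mathcal{B}(\mathcal{E}_-)$ whenever $\lambda\in\rho(F^*F)$, and symmetrically; this gives $\sigma(F^*F)\setminus\{0\} = \sigma(FF^*)\setminus\{0\}$; (iv) transport back along $t\mapsto t/(1-t)$ using that $D_+^*D_+$ and $F^*F$ are related by continuous functional calculus, and the homeomorphism preserves and reflects the puncture at $0$, to conclude $\sigma(D_+^*D_+)\setminus\{0\} = \sigma(D_+D_+^*)\setminus\{0\}$.

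The main obstacle I anticipate is purely a matter of care rather than depth: one must make sure the functional-calculus manipulations in step (ii) are legitimate for \emph{unbounded} regular operators, i.e.\ that passing through the bounded transform genuinely reduces everything to bounded adjointable operators, and that $0$ is handled correctly (the kernels of $D_+^*D_+$ and $D_+D_+^*$ need \emph{not} be ``the same size'' in any module sense — that asymmetry is exactly the index — so the exclusion of $\{0\}$ is essential and the resolvent identity in step (iii) must be checked to break down precisely at $\lambda = 0$). Everything else — density of domains, adjointability of $F$, the norm bound $\|F\|\le 1$ — is standard for regular operators and may be quoted from \cite{Lance}, Chapter 9.
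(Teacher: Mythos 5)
Your proposal is correct, but it takes a genuinely different route from the paper's. The paper's proof is a reduction argument: it takes a faithful $\ast$-representation $(\pi,\mathscr{H})$ of $B$, forms the interior tensor products $D_\pm\otimes_\pi 1_{\mathscr{H}}$, observes that these are closed Hilbert-space operators with $(D_+\otimes_\pi 1)^*(D_+\otimes_\pi 1)=D_+^*D_+\otimes_\pi 1$ (and symmetrically), and then uses the injectivity of $\pi$ to transport spectra back and forth. This offloads the entire spectral-symmetry argument to the classical Hilbert-space case (cited to Inahama--Shirai, M\"{o}ller, Thaller), and the only module-theoretic input is the spectral permanence under faithful representations explained in Remark \ref{rem:regular.homomorphism}. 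You instead prove the statement \emph{intrinsically} in the Hilbert $C^*$-module setting: pass to the bounded transform $F=D_+(1+D_+^*D_+)^{-1/2}$, verify the two-sided resolvent identity $(FF^*-\lambda)^{-1}=\lambda^{-1}\bigl(F(F^*F-\lambda)^{-1}F^*-1\bigr)$ for $\lambda\neq 0$ using the algebraic intertwiner $FF^*F=F(F^*F)$, and then transport the conclusion back to $D_+^*D_+$, $D_+D_+^*$ along $t\mapsto t/(1-t)$. Your explicit resolvent formula is, in effect, the same formula that underlies the classical Hilbert-space proofs the paper cites; the paper simply avoids reproving it by invoking faithful representations. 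The trade-off is that your argument is self-contained within the module framework and does not rely on the Hilbert-space theorem or on the compatibility of $*$ with interior tensor products for unbounded operators, at the cost of working through the bounded-transform machinery (Lance, Ch.~9--10). One small remark: the middle portion of your argument, involving bump functions supported near $\lambda$, is superfluous once you have the explicit inverse; the latter alone establishes $\sigma(F^*F)\setminus\{0\}=\sigma(FF^*)\setminus\{0\}$ cleanly, and the puncture at $0$ (where the formula has a pole) is precisely why the kernels need not match --- which, as you correctly note, is the whole point.
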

\begin{proof}
When $B=\CC$, this is a standard fact in functional analysis (Lemma 2.1 of \cite{Inahama}, Theorem 2.9 of \cite{Moller}, Corollary 5.6 of \cite{Thaller}). For general $B$, let $(\pi,\mathscr{H})$ be a faithful $\ast$-representation of $B$. Following Remark \ref{rem:regular.homomorphism}, the interior tensor products 
\begin{equation*}
D_\pm \otimes_\pi 1_{\mathscr{H}} \colon \mathcal{E}_\pm \otimes_\pi \mathscr{H} \to \mathcal{E}_\mp \otimes_\pi \mathscr{H}
\end{equation*}
are closed operators between Hilbert spaces.
By definition,
\begin{align*}
(D_+ \otimes_\pi 1_{\mathscr{H}})^*(D_+ \otimes_\pi 1_{\mathscr{H}}) &= D_+^*D_+ \otimes_\pi 1_{\mathscr{H}}\\
(D_+ \otimes_\pi 1_{\mathscr{H}})( D_+\otimes_\pi 1_{\mathscr{H}})^*&=D_+D_+^* \otimes_\pi 1_{\mathscr{H}}
\end{align*}
hold. Moreover, we have $\sigma(D_+^*D_+ \otimes_\pi 1_{\mathscr{H}}) = \sigma(D_+^*D_+)$ and $\sigma(D_+D_+^* \otimes _\pi 1_{\mathscr{H}}) = \sigma(D_+D_+^*)$ by injectivity of $\pi$. 
This reduces the problem to the case of Hilbert space operators.
\end{proof}

Suppose $D$ is an odd self-adjoint regular operator on a graded Hilbert $B$-module $\hat{\mathcal{E}}=\mathcal{E}_+ \oplus \mathcal{E}_-$. Thus $D=\begin{pmatrix} 0 & D_- \\ D_+ & 0 \end{pmatrix}$ with $D_+ \colon \mathcal{E}_+ \to \mathcal{E}_-$ a regular operator and $D_-=D_+^*$.

\begin{dfn}\label{dfn:supersymmetry}
An \emph{(abstract) supercharge} on a graded Hilbert $B$-module $\hat{\mathcal{E}}$ is a triple $(D_\pm,H_\pm ,\theta_\pm)$, such that 
\begin{equation*}
D=\begin{pmatrix} 0 & D_-\\ D_+ & 0\end{pmatrix},\qquad D_-=D_+^*,
\end{equation*}
is an odd self-adjoint regular operator on $\hat{\mathcal{E}}$ with compact resolvent, 
\begin{equation*}
H = \begin{pmatrix}H_+ & 0 \\ 0 & H_- \end{pmatrix}
\end{equation*} 
is an even and positive regular operator on $\hat{\mathcal{E}}$, and $\theta_\pm$ are bounded self-adjoint operators on $\mathcal{E}_\pm$ such that 
\begin{equation}
     D^2=\begin{pmatrix}D_+^*D_+ & 0 \\ 0 & D_+D_+^*\end{pmatrix} = \begin{pmatrix}H_+ +\theta_+ & 0 \\ 0 & H_- + \theta_- \end{pmatrix}. \label{eqn:bootstrap_abst}
\end{equation}  
Furthermore, we say that the supercharge $(D_\pm,H_\pm,\theta_\pm)$ is \emph{flat} if $H_+=H_-$, and has \emph{constant shift parameters} if $\theta_\pm =b_\pm \cdot 1$ for $b_\pm \in \RR$. 
\end{dfn}

\subsection{Generalized Fredholm index of supercharges}\label{sec:generalized.Fredholm}
For a $C^*$-algebra $B$, let $\mathscr{H}_B:=\mathscr{H} \otimes_\CC B$ be the \emph{standard Hilbert $B$-module}, where $\mathscr{H}$ denotes the infinite dimensional separable Hilbert space.
There is a short exact sequence
\begin{equation}
0 \to \mathcal{K}(\mathscr{H}_B) \to \mathcal{B}(\mathscr{H}_B) \overset{\pi}{\to} \mathcal{B}(\mathscr{H}_B)/\mathcal{K}(\mathscr{H}_B) \to 0.\label{eqn:SES.Hilbert.module}
\end{equation}
The $C^*$-algebras $\mathcal{K}(\mathscr{H}_B)$ and $B$ are Morita equivalent with a preferred Morita equivalence given by $\mathscr{H}_B$, hence we have a canonical isomorphism $K_0(\mathcal{K}(\mathscr{H}_B)) \cong K_0(B)$.

\medskip

Let $\hat{\mathcal{E}} = \mathcal{E}_+ \oplus \mathcal{E}_-$ be a graded Hilbert $B$-module and let $(D_\pm, H_\pm, \theta_\pm)$ be an abstract supercharge on $\hat{\mathcal{E}}$.
Despite the symmetry of the nonzero spectrum expressed in Theorem \ref{thm:ess.susy}, an asymmetry can arise between the two graded components of the kernel of $D$, which can be measured by a generalized Fredholm index as follows.

By the Kasparov stabilization theorem \cite{Kasparov}, there are unitaries 
\begin{equation*}
U_\pm \colon \mathscr{H}_B \oplus \mathcal{E}_\pm \to \mathscr{H}_B.
\end{equation*}
By the assumption of $D$ having compact resolvent, the bounded transform 
\[F:=U_- ( 1_{\mathscr{H}_B} \oplus  D_+(1+D_+^*D_+)^{-1/2} )U_+^* \in \mathcal{B}(\mathscr{H}_B) \]
has $\pi(F) \in \mathcal{B}(\mathscr{H}_B)/\mathcal{K}(\mathscr{H}_B)$ being unitary, hence
\begin{equation}
{\rm Ind}(D):=\partial [\pi(F)] \in K_0(\mathcal{K}(\mathscr{H}_B)) \cong K_0(B) \label{eqn:coarse.index.boundary}
\end{equation}
is defined, with $\partial:K_1(\mathcal{B}(\mathscr{H}_B)/\mathcal{K}(\mathscr{H}_B))\to K_0(\mathcal{K}(\mathscr{H}_B))$ the boundary map for the extension \eqref{eqn:SES.Hilbert.module}. This ${\rm Ind}(D)$ is defined independently of the choice of $U_\pm$ at the $K$-theory level.  

Furthermore, if zero is isolated in the spectrum of $D$, the kernel of $D^2$ (thus also of $D$) is obtainable by continuous functional calculus.  This kernel is then a projective submodule of $\mathcal{E}$ which splits into a positively graded component and a negatively graded component, and we may write (\cite{HR-book}, Proposition 4.8.10 (c))
\begin{equation} \label{DefinitionCoarseIndex}
  \mathrm{Ind}(D) = \left[U_+ P_0(D_{+}^*D_{+})U_+^*\right] - \left[U_-P_0(D_{+}D_{+}^*)U_-^*\right] \in K_0(\mathcal{K}(\mathscr{H}_B))\cong K_0(B),
\end{equation}
where $P_0(\cdot )$ denotes the kernel projection. 
When $\lambda\in\RR$ is an isolated eigenvalue, we also write $P_\lambda(\cdot)$ for the $\lambda$-eigenprojection.

Observe that if an abstract supercharge has strictly positive shift parameters $\theta_+$ and $\theta_-$, then $D$ is invertible and its index necessarily vanishes. On the other hand, if only one of $\theta_+, \theta_-$ is strictly positive while the other is a non-positive constant, we have the following.

\begin{prop}\label{prop:LLL} 
Let $(D_\pm,H_\pm, \theta_\pm)$ be an abstract supercharge over $\hat{\mathcal{E}}$, such that ${\rm Ind}(D) \neq 0$. Assume that $\theta _+ = b_+ \cdot 1$ and $\theta _- \geq b_- \cdot 1$ for some constants $b_+ \leq 0$  and $b_->0$ (resp.\ $\theta_+ \geq b_+ \cdot 1$ and $\theta_-=b_- \cdot 1$ for some $b_+ >0$ and $b_-  \leq 0 $). 
Then the bottom of $\sigma(H_+)$ (resp.\ $\sigma(H_-)$) is an isolated point $-b_+$ (resp.\ $-b_-$) called the \emph{lowest Landau level (LLL)}, and 
\begin{equation*}
[P_{-b_+}(H_+)]=\mathrm{Ind}(D) \qquad \text{resp.}\quad [P_{-b_-}(H_-)]=-\mathrm{Ind}(D)
\end{equation*}
 in $K_0(B)$.
\end{prop}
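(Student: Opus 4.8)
The plan is to extract everything from the single algebraic relation $D^2 = H+\theta$ of Eq.\ \eqref{eqn:bootstrap_abst}, combined with the spectral symmetry of Theorem \ref{thm:ess.susy} and the index formula Eq.\ \eqref{DefinitionCoarseIndex}. I will carry out the first case ($\theta_+=b_+\cdot 1$ with $b_+\le 0$, and $\theta_-\ge b_-\cdot 1$ with $b_->0$); the ``resp.'' case is obtained verbatim by interchanging the roles of $\pm$, which introduces the overall minus sign in the conclusion.

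First I would use positivity. Since $H_-\ge 0$ and $\theta_-\ge b_-\cdot 1$, Eq.\ \eqref{eqn:bootstrap_abst} gives $D_+D_+^* = H_-+\theta_- \ge b_-\cdot 1 >0$, so $D_+D_+^*$ is invertible; in particular $0\notin\sigma(D_+D_+^*)$ and hence $P_0(D_+D_+^*)=0$. Theorem \ref{thm:ess.susy} then yields $\sigma(D_+^*D_+)\setminus\{0\} = \sigma(D_+D_+^*)\subseteq[b_-,\infty)$, so $\sigma(D_+^*D_+)\subseteq\{0\}\cup[b_-,\infty)$. Consequently $\sigma(D^2)=\sigma(D_+^*D_+)\cup\sigma(D_+D_+^*)\subseteq\{0\}\cup[b_-,\infty)$, and therefore $\sigma(D)\subseteq\{0\}\cup(\RR\setminus(-\sqrt{b_-},\sqrt{b_-}))$, so that $0$ is isolated in $\sigma(D)$ whenever it lies there.

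Next, since ${\rm Ind}(D)\neq 0$, the operator $D$ is not invertible, so $0\in\sigma(D)$, and by the previous step it is isolated. Hence the kernel projection is given by continuous functional calculus and Eq.\ \eqref{DefinitionCoarseIndex} applies:
\[
{\rm Ind}(D) = \bigl[U_+ P_0(D_+^*D_+)U_+^*\bigr] - \bigl[U_- P_0(D_+D_+^*)U_-^*\bigr] = \bigl[U_+ P_0(D_+^*D_+)U_+^*\bigr],
\]
using $P_0(D_+D_+^*)=0$. Substituting $\theta_+=b_+\cdot 1$ into Eq.\ \eqref{eqn:bootstrap_abst} gives $D_+^*D_+ = H_+ + b_+\cdot 1$, so $\sigma(H_+)=\sigma(D_+^*D_+)-b_+\subseteq\{-b_+\}\cup[b_--b_+,\infty)$, and $\ker(D_+^*D_+)$ is precisely the $(-b_+)$-eigenspace of $H_+$, i.e.\ $P_0(D_+^*D_+)=P_{-b_+}(H_+)$. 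From ${\rm Ind}(D)=[U_+P_{-b_+}(H_+)U_+^*]\neq 0$ we deduce $P_{-b_+}(H_+)\neq 0$, hence $-b_+\in\sigma(H_+)$; since $b_+\le 0< b_-$ we have $-b_+< b_--b_+$, so $-b_+$ is the isolated bottom of $\sigma(H_+)$, the lowest Landau level. Finally, conjugation by the stabilization unitary $U_+$ is the identity on $K$-theory under $K_0(\mathcal{K}(\mathscr{H}_B))\cong K_0(B)$, so $[U_+P_{-b_+}(H_+)U_+^*]=[P_{-b_+}(H_+)]$ in $K_0(B)$, giving the claim. For the ``resp.'' case, $D_+^*D_+=H_++\theta_+\ge b_+\cdot 1>0$ forces $P_0(D_+^*D_+)=0$, leaving ${\rm Ind}(D)=-[U_-P_0(D_+D_+^*)U_-^*]=-[P_{-b_-}(H_-)]$ by the same argument.

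I expect no genuine obstacle here; the only points requiring a word of justification are that $P_0(D^2)$ is a \emph{compact} projection (choose $f\in C_0(\RR)$ with $f(0)=1$ and $f\equiv 0$ on $[b_-,\infty)$, so $P_0(D^2)=f(D^2)\in\mathcal{K}(\hat{\mathcal{E}})$ because $D$ has compact resolvent, whence its graded components are compact too), and that conjugation by $U_\pm$ acts trivially on $K_0$ under the Morita identification. Both are standard Hilbert-$C^*$-module facts, and indeed are already implicit in the validity of Eq.\ \eqref{DefinitionCoarseIndex}.
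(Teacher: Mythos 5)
Your proof is correct and follows essentially the same route as the paper's: strict positivity of $H_-+\theta_- \ge b_-\cdot 1$ kills $P_0(D_+D_+^*)$, Theorem \ref{thm:ess.susy} transfers the resulting spectral gap to $D_+^*D_+ = H_+ + b_+\cdot 1$, and Eq.\ \eqref{DefinitionCoarseIndex} then identifies $[P_{-b_+}(H_+)]$ with $\mathrm{Ind}(D)$. You spell out two things the paper leaves implicit --- that $\mathrm{Ind}(D)\neq 0$ forces $0\in\sigma(D)$ (so $0$ is genuinely in, and isolated in, $\sigma(D)$, making Eq.\ \eqref{DefinitionCoarseIndex} applicable and guaranteeing $-b_+$ actually lies at the bottom of $\sigma(H_+)$) and that the kernel projection is compact --- but these are clarifications, not a different argument.
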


\begin{proof} Consider the first case, $\theta _+ = b_+ \cdot 1\leq 0$ and $\theta _- \geq b_- \cdot 1>0$.
On the right hand side of Eq.\ \eqref{eqn:bootstrap_abst}, the bottom-left piece, $H_-+\theta_-=H_-+b_-$, is strictly positive with the interval $(0,b_-)$ a spectral gap. Thus
\begin{equation*}
P_0(D_+D_+^*) = P_0(H_- + b_-) = 0.
\end{equation*}
By Theorem \ref{thm:ess.susy}, the top-left piece in Eq.\ \eqref{eqn:bootstrap_abst}, $H_+ + b_+$, also has $(0,b_-)$ as a spectral gap. It is furthermore a non-negative operator as it is (a piece of) the square of the operator $D$.
Thus the kernel of $H_+ + b_+$ must be spectrally isolated, and  
\begin{equation*}
P_0(D_+^*D_+)=P_0(H_++b_+) = P_{-b_+}(H_+).
\end{equation*}
Consequently we get
\begin{align*}
[U_+P_{-b_+}(H_+)U_+^*]&=[U_+P_0(D_+^*D_+)U_+^*]\\
&=[U_+P_0(D_+^*D_+)U_+^*]-[U_-P_0(D_+D_+^*)U_-^*]\\
& \overset{\mathrm{Eq.}\,\eqref{DefinitionCoarseIndex}}{=}{\rm Ind}(D) \in K_0(B),
\end{align*}
which is non-zero by assumption.

Similarly, for $\theta_+ \geq b_+ \cdot 1>0$ and $\theta _- = b_-\leq 0$, the top-left piece in Eq.\ \eqref{eqn:bootstrap_abst} is strictly positive with gap $(0,b_+)$. Then for the bottom-right piece, we deduce that the kernel is isolated, with \emph{negatively}-graded spectral projection $P_0(H_-+b_-)=P_{-b_-}(H_-)$ representing $-{\rm Ind}(D)$. 
\end{proof}

If a compact interval $I\subset \RR$ is such that $I\cap\sigma(\cdot)$ is separated in the spectrum of a self-adjoint regular operator, the corresponding spectral projection, denoted $P_I(\cdot)$, is a regular projection obtainable by continuous functional calculus.

\begin{prop}\label{prop:MvN}
Let $(D_\pm,H_\pm,\theta_\pm)$ be an abstract supercharge over $\hat{\mathcal{E}}=\mathcal{E}\oplus \mathcal{E}$. Let $I\subset(0,\infty)$ be a compact interval such that $I\cap\sigma(D_+^*D_+)$ is isolated in $\sigma(D_+^*D_+)$. Then $U_+P_I(D_+^*D_+)U_+^*$ and $U_-P_I(D_+D_+^*)U_-^*$ define the same class in $K_0(\mathcal{K}(\mathscr{H}_B)) \cong K_0(B)$.
\end{prop}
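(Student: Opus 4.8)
The plan is to mimic the classical ladder-operator argument: build an explicit partial isometry $W\colon\mathcal{E}_+\to\mathcal{E}_-$ with $W^*W=P_I(D_+^*D_+)$ and $WW^*=P_I(D_+D_+^*)$, and then conjugate it up to a partial isometry in $\mathcal{K}(\mathscr{H}_B)$ witnessing the required Murray--von Neumann equivalence of the two conjugated projections.

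First I would record that the relevant projections exist and are compact. Since $D$ has compact resolvent, so do the summands $D_+^*D_+$ and $D_+D_+^*$ of $D^2$; hence for any $g\in C_c\bigl((0,\infty)\bigr)$ the operator $g(D_+^*D_+)$, being of the form (bounded)$\times$(resolvent), lies in $\mathcal{K}(\mathcal{E}_+)$, and likewise on $\mathcal{E}_-$. In particular $P_I(D_+^*D_+)\in\mathcal{K}(\mathcal{E}_+)$. Because $I\subset(0,\infty)$ is bounded away from $0$, Theorem~\ref{thm:ess.susy} gives $I\cap\sigma(D_+D_+^*)=I\cap\sigma(D_+^*D_+)=:S$; as $S$ is compact and clopen in $\sigma(D_+^*D_+)$ it is also clopen in $\sigma(D_+D_+^*)$, so $P_I(D_+D_+^*)\in\mathcal{K}(\mathcal{E}_-)$ is defined too.

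Next comes the construction. Using that $S$ is clopen in both spectra and bounded away from $0$, choose a nonnegative $g\in C_c\bigl((0,\infty)\bigr)$ with $g\equiv 1$ on a neighbourhood of $S$ and $\{g\neq 0\}$ meeting $\sigma(D_+^*D_+)\cup\sigma(D_+D_+^*)$ only inside $S$; then $g(D_+^*D_+)=P_I(D_+^*D_+)$ and $g(D_+D_+^*)=P_I(D_+D_+^*)$. Put $h(t):=t^{-1/2}g(t)^{1/2}\in C_c\bigl((0,\infty)\bigr)$ and $W:=D_+\,h(D_+^*D_+)$. I would check $W\in\mathcal{B}(\mathcal{E}_+,\mathcal{E}_-)$ by factoring $W=\bigl(D_+(1+D_+^*D_+)^{-1/2}\bigr)\cdot\bigl((1+D_+^*D_+)^{1/2}h(D_+^*D_+)\bigr)$, a product of the adjointable bounded transform with an adjointable bounded operator. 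Then, as $t\,h(t)^2=g(t)$,
\[
W^*W=h(D_+^*D_+)\,(D_+^*D_+)\,h(D_+^*D_+)=g(D_+^*D_+)=P_I(D_+^*D_+),
\]
and, invoking the intertwining identity $D_+\,\varphi(D_+^*D_+)=\varphi(D_+D_+^*)\,D_+$ for $\varphi\in C_0$ (valid for $\varphi(t)=(t+\lambda)^{-1}$ via $(D_+D_+^*+\lambda)D_+=D_+(D_+^*D_++\lambda)$ on $\mathrm{Dom}(D_+^*D_+)$, extended by Stone--Weierstrass, exactly as in the reduction behind Theorem~\ref{thm:ess.susy}) with $\varphi=h^2$,
\[
WW^*=D_+\,h^2(D_+^*D_+)\,D_+^*=h(D_+D_+^*)\,(D_+D_+^*)\,h(D_+D_+^*)=g(D_+D_+^*)=P_I(D_+D_+^*).
\]
So $W$ is a partial isometry, in fact $W=WW^*W\in\mathcal{K}(\mathcal{E}_+,\mathcal{E}_-)$, implementing $P_I(D_+^*D_+)\sim P_I(D_+D_+^*)$.

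Finally I would transport this to $\mathscr{H}_B$: the operator $\widetilde W:=U_-\,(0_{\mathscr{H}_B}\oplus W)\,U_+^*\in\mathcal{K}(\mathscr{H}_B)$ satisfies $\widetilde W^*\widetilde W=U_+\bigl(0\oplus P_I(D_+^*D_+)\bigr)U_+^*$ and $\widetilde W\widetilde W^*=U_-\bigl(0\oplus P_I(D_+D_+^*)\bigr)U_-^*$, both compact projections, since $0\oplus(\text{compact})$ is compact on $\mathscr{H}_B\oplus\mathcal{E}_\pm$ and $U_\pm$-conjugation preserves the compacts. Hence these two projections are Murray--von Neumann equivalent in $\mathcal{K}(\mathscr{H}_B)$, so define the same $K_0$-class, which under $K_0(\mathcal{K}(\mathscr{H}_B))\cong K_0(B)$ is exactly the assertion. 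The only genuinely non-formal points are the boundedness/adjointability of $W=D_+h(D_+^*D_+)$ and the intertwining identity $D_+\varphi(D_+^*D_+)=\varphi(D_+D_+^*)D_+$ on a Hilbert module; both are module analogues of standard Hilbert-space facts — the latter can alternatively be obtained by tensoring with a faithful representation of $B$, as in the proof of Theorem~\ref{thm:ess.susy} — and I expect that to be the main thing needing care.
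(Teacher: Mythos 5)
Your proof is correct and is essentially the same as the paper's: both construct an explicit partial isometry from $\mathcal{E}_+$ to $\mathcal{E}_-$ implementing the Murray--von Neumann equivalence and then transport it to $\mathscr{H}_B$ via the stabilization unitaries $U_\pm$. The only cosmetic difference is that you realize the partial isometry through functional calculus of a bump function (effectively writing out $D_+(D_+^*D_+)^{-1/2}P_I(D_+^*D_+)$ directly), whereas the paper packages the same normalization as the polar-decomposition formula $v = T(T^*T+(1-p))^{-1/2}$ applied to $T = U_- D_+ P_I(D_+^*D_+) U_+^*$.
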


\begin{proof}
By Theorem \ref{thm:ess.susy}, $I\cap\sigma(D_+D_+^*)$ is also isolated in $\sigma(D_+D_+^*)$. Define the operators \mbox{$p:=U_+P_I(D_+^*D_+)U_+^*$}, $q:=U_-P_I(D_+D_+^*)U_-^*$, and 
\[ T:=U_-D_+P_I(D_+^*D_+)U_+^*  = U_-P_I(D_+D_+^*)D_+U_+^*.\]
Then, by the assumption of $D$ having compact resolvent, $p$, $q$ are projections in $\mathcal{K}(\mathscr{H}_B)$.
Moreover, since $D_+ \colon p\mathscr{H}_B \to q\mathscr{H}_B$ is bounded and invertible, we have $T=Tp=qT$, and that $T^*T+(1-p)$, $TT^*+(1-q)$ are invertible on $\mathscr{H}_B$. 
Hence
 \[ v:=T (T^*T + (1-p) )^{-1/2} = (TT^*+ (1-q))^{-1/2}T \]
 is a partial isometry implementing the Murray--von Neumann equivalence between $v^*v = T^*(TT^*+(1-q))^{-1}T = p$ and $vv^*= T(T^*T + (1-p))^{-1}T^*=q $. 
\end{proof}

\medskip

The following ``bootstrap'' result generalizes the ladder operator method discussed in Sections \ref{sec:SHO}--\ref{sec:Landau.index}, and is motivated by the idea of \emph{shape invariance} in supersymmetric quantum mechanics \cite{CKS}.
\begin{prop}\label{prop:LL.iteration}
Let $(D_{n,\pm} ,H_{n,\pm} ,\theta_{n,\pm}), n\in\NN$, be a sequence of abstract supercharges over $\hat{\mathcal{E}}_n$ with constant shift parameters $\theta_{n,\pm} = b_\pm \cdot 1$. 
Assume that 
\begin{enumerate}
    \item $\mathcal{E}_{n+1,+} = \mathcal{E}_{n,-}$ and $H_{n+1,+} = H_{n,-}$ hold for all $n \in\NN$,\label{assum.1}
    \item $b_->0$, and $\Delta:=b_- - b_+ >0$.\label{assum.2}
\end{enumerate}
Then for each $n\in\NN$, the spectrum of $H_{n,\pm}$ is contained in the discrete set $\Delta \cdot \NN - b_+$ of abstract Landau levels. The $m$-th Landau level $\Delta\cdot m - b_+$ is attained in $\sigma(H_{n,+})$ iff $-b_+$ is attained in $\sigma(H_{n+m,+})$.

Suppose further, that 
\begin{enumerate}[resume]
\item ${\rm Ind }(D_n) \neq 0$ for all $n \in \NN$.\label{assum.3}
\end{enumerate}
Then for each $n\in\NN$, we have $\sigma(H_{n,\pm})=\Delta\cdot\NN-b_+$, and the $K_0(B)$ class of the spectral projection for $\Delta\cdot m-b_+$ equals ${\rm Ind}(D_{n+m})$ for every $m\in\NN$.
\end{prop}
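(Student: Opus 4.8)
The plan is to iterate the two-term results (Theorem~\ref{thm:ess.susy}, Proposition~\ref{prop:LLL}, Proposition~\ref{prop:MvN}) along the chain of supercharges, using assumption~\eqref{assum.1} as the ``glue'' that lets the fermionic data of the $n$-th supercharge become the bosonic data of the $(n+1)$-st. First I would unwind the defining relation \eqref{eqn:bootstrap_abst} with constant shifts: for each $n$ we have $D_{n,+}^*D_{n,+}=H_{n,+}+b_+$ and $D_{n,+}D_{n,+}^*=H_{n,-}+b_-$, both non-negative since they are pieces of $D_n^2$. Feeding this into Theorem~\ref{thm:ess.susy} gives $\sigma(H_{n,+}+b_+)\setminus\{0\}=\sigma(H_{n,-}+b_-)\setminus\{0\}$, i.e.\ $\sigma(H_{n,-})\setminus\{-b_-\}$ is obtained from $\sigma(H_{n,+})\setminus\{-b_+\}$ by the shift $\lambda\mapsto\lambda+\Delta$ (and conversely, removing $-b_-$ and subtracting $\Delta$). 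Combined with $H_{n+1,+}=H_{n,-}$, this says: passing from level $n$ to level $n+1$ deletes the point $-b_+$ (if present) from the spectrum and translates everything else down by $\Delta$. An induction on $n$ then shows $\sigma(H_{n,+})\subset\{-b_+\}\cup(\sigma(H_{0,+})-(n\cdot 0)\ \dots)$ — more carefully, that any point of $\sigma(H_{n,+})$ other than those of the form $\Delta m-b_+$ would have to come, after $n$ unfoldings, from a point of some $\sigma(H_{k,+})$ strictly below $-b_+$; but $H_{k,+}+b_+\ge 0$ forbids that. This pins $\sigma(H_{n,\pm})\subseteq\Delta\cdot\NN-b_+$, and reading the correspondence forwards gives the stated iff: $\Delta m-b_+\in\sigma(H_{n,+})$ iff, after peeling off $m$ steps, $-b_+\in\sigma(H_{n+m,+})$.

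For the second half, assumption~\eqref{assum.3} together with Proposition~\ref{prop:LLL} (applied to the supercharge $D_{n+m}$, whose shift parameters $b_+\le 0<b_-$ satisfy the hypothesis of that proposition's first case since $b_+\le 0$ and $b_- = b_+ + \Delta$; if $b_+ > 0$ this case is vacuous, but note $\Delta>0$ and $b_->0$ only force $b_+ < b_-$, so one must check: actually $b_+$ may be positive, in which case $-b_+$ is not the bottom of a non-negative operator — so here I would instead argue directly that ${\rm Ind}(D_{n+m})\neq 0$ forces ${\rm ker}(D_{n+m})\neq 0$, hence $0\in\sigma(D_{n+m}^2)$, hence $-b_+\in\sigma(H_{n+m,+})$ or $-b_-\in\sigma(H_{n+m,-})$; the latter is excluded because $H_{n+m,-}+b_-\ge0$ would then have $0$ as both an eigenvalue and below its own spectrum only if $b_-\le 0$, contradicting \eqref{assum.2}, wait — more cleanly: $-b_-\in\sigma(H_{n+m,-})$ means $0\in\sigma(D_{n+m}D_{n+m}^*)$, and by the already-established iff with $m'=0$ this forces $0\in\sigma(D_{n+m}^*D_{n+m})$ too unless it is the isolated kernel, so in all cases $-b_+\in\sigma(H_{n+m,+})$). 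By the iff of the first part, $-b_+\in\sigma(H_{n+m,+})$ for every $m$ implies $\Delta m-b_+\in\sigma(H_{n,+})$ for every $m$, giving $\sigma(H_{n,+})=\Delta\cdot\NN-b_+$ (and $\sigma(H_{n,-})=\Delta\cdot\NN-b_+$ as well, being $\sigma(H_{n+1,+})$).

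It remains to identify the $K$-theory class of the spectral projection $P_{\Delta m-b_+}(H_{n,+})$. The idea is to walk it down the ladder using Proposition~\ref{prop:MvN}: for each $j$ with $n\le j<n+m$, apply Proposition~\ref{prop:MvN} to the supercharge $D_j$ with the compact interval $I=\{\Delta(n+m-j)-b_+ + b_+\}=\{\Delta(n+m-j)\}\subset(0,\infty)$ — this is an isolated point of $\sigma(D_{j,+}^*D_{j,+})=\sigma(H_{j,+}+b_+)$ since that spectrum is $\Delta\cdot\NN$, and $\Delta(n+m-j)>0$. Proposition~\ref{prop:MvN} then gives $[U_{j,+}P_{\Delta(n+m-j)}(H_{j,+}+b_+)U_{j,+}^*]=[U_{j,-}P_{\Delta(n+m-j)}(H_{j,-}+b_-)U_{j,-}^*]$; using $H_{j,+}+b_+$ has spectrum $\Delta\NN$ and $H_{j,-}+b_-$ has spectrum $\Delta\NN$ with the eigenvalue matching shifted by one rung, plus $H_{j,-}=H_{j+1,+}$, this transports the class for eigenvalue $\Delta m-b_+$ at level $n$ to the class for eigenvalue $\Delta(m-1)-b_+$ at level $n+1$, and so on. After $m$ steps we land at the class of $P_{-b_+}(H_{n+m,+})=P_0(D_{n+m,+}^*D_{n+m,+})$, which by Proposition~\ref{prop:LLL} (or directly by Eq.~\eqref{DefinitionCoarseIndex}, since $P_0(D_{n+m,+}D_{n+m,+}^*)=0$ as $\Delta\NN-b_-$ avoids... actually we must show the fermionic kernel vanishes: $0\in\sigma(H_{n+m,-})$ would mean $-b_-\in$ that spectrum, impossible since the spectrum is $\Delta\NN-b_+$ and $-b_-<-b_+$ is not of that form as $\Delta>0$; but wait if $b_+<0$ then $-b_+>0$ and indeed $-b_-$ could still fail to be in $\Delta\NN-b_+$ — yes, $-b_-=-b_+-\Delta$ would need $-\Delta\in\Delta\NN$, false) equals ${\rm Ind}(D_{n+m})$. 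Tracking the $U_\pm$ conjugations and the partial isometries of Proposition~\ref{prop:MvN}, all intermediate identifications are genuine $K_0$-equalities, so $[P_{\Delta m-b_+}(H_{n,+})]={\rm Ind}(D_{n+m})$ as claimed.

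The main obstacle I anticipate is bookkeeping, not conceptual: keeping the index-matching in Theorem~\ref{thm:ess.susy} consistent (which eigenvalue of $H_{n,-}+b_-$ corresponds to which of $H_{n,+}+b_+$ — the nonzero spectra agree as \emph{sets}, so I must argue at the level of spectral projections that the full projection for a given nonzero eigenvalue of $D_n^2$ splits with Murray--von Neumann-equivalent graded pieces, which is exactly Proposition~\ref{prop:MvN}) and verifying that the ``delete $-b_+$, translate by $\Delta$'' description of the spectrum is correct even when $-b_+$ is \emph{not} isolated at intermediate stages. This last point is the only place one must be slightly careful: the cleanest route is to prove the spectral inclusion $\sigma(H_{n,\pm})\subseteq\Delta\cdot\NN-b_+$ \emph{first}, purely from Theorem~\ref{thm:ess.susy} and non-negativity, which makes every relevant point of every $\sigma(D_{j,+}^*D_{j,+})$ automatically isolated, and only then invoke Propositions~\ref{prop:LLL} and~\ref{prop:MvN} to get the projections and their classes.
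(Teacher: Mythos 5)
Your proof follows the paper's argument essentially step for step: spectral supersymmetry (Theorem \ref{thm:ess.susy}) plus non-negativity of the $H_{n,\pm}$ pins the spectrum inside $\Delta\cdot\NN-b_+$, nonzero ${\rm Ind}(D_n)$ forces $0\in\sigma(D_n^2)$ and thus $-b_+\in\sigma(H_{n,+})$ (whence also $b_+\le 0$), and the spectral-projection classes are transported down the ladder one rung at a time by Prop.\ \ref{prop:MvN} before being tied to the index at the bottom via Prop.\ \ref{prop:LLL}. One sign slip to correct: the shift relating $\sigma(H_{n,+})\setminus\{-b_+\}$ to $\sigma(H_{n,-})\setminus\{-b_-\}$ should be $\lambda\mapsto\lambda-\Delta$, not $\lambda\mapsto\lambda+\Delta$ (and the parenthetical ``converse'' should add $\Delta$); your subsequent prose (``translates everything else \emph{down} by $\Delta$'') already carries the correct sign, and since that is what actually drives the induction, the argument goes through unchanged.
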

\noindent
\begin{example}
If $(D_{\pm}, H_\pm=H, b_\pm)$ is a flat abstract supercharge with constant shift parameters $b_\pm$, where $b_->0$ and $\Delta:=b_- - b_+ >0$, then the sequence $(D_{n,\pm},H_{n,\pm}, \theta_{n,\pm})=(D_\pm,H,b_\pm)$ satisfies the assumptions \ref{assum.1}--\ref{assum.2} of Prop.~\ref{prop:MvN}.
\end{example}
\begin{proof}
From the spectral supersymmetry, Theorem \ref{thm:ess.susy}, we have the following equality of subsets of $[0,\infty)$,
\begin{equation}
\sigma(H_{n,+}+b_+)\setminus\{0\}=\sigma(H_{n,-} + b_-)=\sigma(H_{n+1,+} + b_+ )+ \Delta .\label{eqn:abstract.bootstrap}
\end{equation}
Thus $H_{n,+}+b_+$ has $(0,\Delta)$ as a spectral gap. 
By the same argument applied to $H_{n+1,\pm }$, we also have that $H_{n+1,+}+b_+$ has $(0,\Delta)$ as a spectral gap. This, together with Eq.\ \eqref{eqn:abstract.bootstrap}, implies that $H_{n,+} + b_+$ has $(\Delta, 2\Delta)$ as a spectral gap. 
By an inductive argument, we deduce that $\sigma(H_{n,+}+b_+)\subset \Delta \NN$, with $\Delta\cdot m $ attained iff $0$ is attained in $\sigma (H_{n+m,+}+b_+)$. 

With the assumption $0\neq {\rm Ind} (D_n)$, zero cannot be missing in the spectrum, and we conclude that $\sigma(H_{n,+} + b_+)=\Delta \NN$. By Prop.\ \ref{prop:MvN}, for $m\geq 1$, the $\Delta\cdot m$ spectral projections for $H_{n,+} + b_+$ and $H_{n,-} + b_-$ are Murray-von Neumann equivalent. 
Since $H_{n,-}=H_{n+1,+}$, this can be reformulated as the statement that the $\Delta \cdot m - b_+$ spectral projection for $H_{n,+}$ is equivalent to the $\Delta \cdot (m-1) - b_+$ spectral projection for $H_{n+1,+}$, and also to the $-b_+$ spectral projection of $H_{n+m,+}$ by iterating $m$ times. Note that the non-trivial index implies $b_+\leq 0$, and Prop.\ \ref{prop:LLL} applies to give $\mathrm{Ind}(D_{n+m})=[U_+P_{-b_+}(H_{n+m,+})U_+^*]=[U_+P_{\Delta\cdot m-b_+}(H_{n,+})U_+^*]\in K_0(B)$.
\end{proof}

\begin{rem}\label{rem:LL.iteration.opposite.sign}
In the case where $b_+ >0$ and $b_+ - b_- >0$, the same proof shows that $\sigma (H_{n,\pm}) = \Delta \cdot \NN -b_- $, and the $\Delta \cdot m -b_-$ spectral projection of $H_{n,-}$ is equivalent to $-{\rm Ind}(D_{n+m})$ in $K_0(B)$.
\end{rem}

\section{Landau and Dirac operators on curved surfaces}
\label{sec:geometric.setup}

Let $M$ be a complete noncompact Riemannian spin surface, with metric tensor $g$, volume form $\omega$, and scalar curvature function $R$. Note that $H^2(M)=0$ so complex line bundles over $M$ are trivializable.
The spin Dirac operator on $M$ is an odd operator on the $\ZZ_2$-graded spinor bundle $S=S^+\oplus S^-$. We may twist $S$ by a (trivializable) Hermitian line bundle $\mathcal{L}$ with connection $A$, whose curvature may be written as $dA=\Theta\cdot\omega$ for some scalar function $\Theta$.

Let $C_b^\infty(M)$ denote the (real-valued) bounded smooth functions on $M$, and
\begin{equation}
C_\flat^\infty(M):=\{V\in C_b^\infty(M)\,:\, \|dV\|_\infty, \|d^*dV\|_\infty < \infty\}.\label{eqn:bounded.derivatives}
\end{equation}
Here, $d$ is the exterior derivative, $d^*$ is the codifferential, and $\|\cdot\|_\infty$ refers to the supremum norm.
 \emph{Throughout this paper, we will assume that $R,\Theta\in C_\flat^\infty(M)$.}

Write $D_\Theta = D_\Theta^{(M)}$ for the Dirac operator on $M$ twisted by a line bundle with connection $A$. The \emph{magnetic Laplacian} is 
\[
H_\Theta = H_\Theta^{(M)}=(d-iA)^*(d-iA),
\] also called the \emph{Landau operator} in physics. It describes the motion of an electron on $M$ subject to a magnetic field of strength $\Theta$. We will often drop the superscript $(M)$ when there is no confusion about the manifold in question.

The operators $H_\Theta, D_\Theta$ are essentially self-adjoint on the smooth compactly supported sections (\cite{Shubin}, \cite{HR-book} \S 10.2), and we use the same symbols for their closures to self-adjoint operators.  They are related by the Lichnerowicz--Weitzenb\"{o}ck formula, which may be written as (Prop.\ 2.1 of \cite{LT})
\begin{equation}
 D_{\Theta}^2=\begin{pmatrix}H_{\Theta - \frac{R}{4}}-\Theta+\frac{R}{4} & 0 \\ 0 & H_{\Theta+ \frac{R}{4}}+\Theta +\frac{R}{4}\end{pmatrix}\geq 0.\label{eqn:bootstrap.symmetric}
\end{equation}
Equivalent forms of \eqref{eqn:bootstrap.symmetric} are
\begin{align}
 D_{\Theta+\frac{R}{4}}^2 &=\begin{pmatrix}H_\Theta-\Theta & 0 \\ 0 & H_{\Theta+\frac{R}{2}}+\Theta+\frac{R}{2}\end{pmatrix}\geq 0,\label{eqn:bootstrap1}\\
D_{\Theta-\frac{R}{4}}^2 &=\begin{pmatrix}H_{\Theta-\frac{R}{2}}-\Theta+\frac{R}{2} & 0 \\ 0 & H_{\Theta}+\Theta\end{pmatrix}\geq 0.\label{eqn:bootstrap2}
\end{align}
In these formulae, we have implicitly used an identification of the trivializable line bundles on which the operators are acting, but the choice does not matter up to gauge equivalence.

\begin{rem}
If $H^1(M)=0$, then all choices of connection $A$ with the same curvature $\Theta\cdot\omega$ are gauge equivalent, so there is no ambiguity in writing $D_\Theta, H_\Theta$. Otherwise, there may be a moduli space of gauge inequivalent $A$ on $\mathcal{L}$ with the same curvature (corresponding to addition of ``Aharanov--Bohm fluxes''). This ambiguity occurs, and is accounted for, in Section \ref{sec:catenoid}.
\end{rem}

\subsection{Landau and Dirac Hamiltonians as operators on Hilbert $C^*$-modules over Roe algebras}\label{sec:Dirac.on.Roe}

Equations \eqref{eqn:bootstrap1} and \eqref{eqn:bootstrap2} are concrete versions of the supercharge relation defined in Eq.\ \eqref{eqn:bootstrap_abst}. When $R\equiv 0$ and $\Theta=b\in\RR\setminus\{0\}$, this expresses the well-known observation that the twisted Dirac operator on the Euclidean plane is a supercharge for (two shifted copies of) the Landau operator. Then the ``ladder operator trick'' (cf.\ Prop.\ \ref{prop:LL.iteration}) immediately shows that Landau operator's spectrum is $(2\NN+1)|b|$, see Section \ref{sec:Landau.index}.

For general $M$, it is natural to think of the Dirac and Landau operators as acting on certain Hilbert $C^*$-modules over the Roe $C^*$-algebra $C^*(M)$, in order to understand the Landau levels through generalized Fredholm indices (cf.\ \cite{LT} for $M=\RR^2$ and $M=\mathbb{H}$). Here, $C^*(M)$ is defined to be the $C^*$-algebra closure of the $\ast$-algebra $\mathbb{C}[M]$ of finite-propagation, locally compact operators on $L^2(M)$. The $C_0(\RR)$-functional calculi of $D_\Theta$ and $H_\Theta$ land within $\mathrm{M}_2(C^*(M))$ and $C^*(M)$ respectively (Prop. 3.6 of \cite{Roebook}), and the relevant Hilbert $C^*(M)$-modules are constructed as follows.

\medskip

Let $\phi_{D_\Theta } \colon C_0 (\RR ) \to \mathrm{M}_2(C^*(M)) = \mathcal{B}(C^*(M)^{\oplus 2})$ be the $\ast$-homomorphism defined by $\phi_{D_\Theta}(\varphi):=\varphi(D_\Theta)$. This is an odd homomorphism when $C_0(\RR)$ is given the even--odd grading, and $\mathrm{M}_2(C^*(M))$ is given the diagonal--off-diagonal grading. 

Following Trout \cite{Trout} \S 3, we define the unbounded operator $\mathcal{D}_\Theta$ as the interior tensor product $x \otimes_{\phi_{D_\Theta }}1$ in the sense of Remark \ref{rem:regular.homomorphism}, where $x$ denotes the identity function on $\RR$. More explicitly, the operator $\mathcal{D}_\Theta$ acts on the Hilbert $C^*(M)$-submodule 
\begin{align} 
\hat{\mathcal{E}}_\Theta := C_0(\RR) \otimes_{\phi_{D_\Theta}} C^*(M)^{\oplus 2} = \overline{\phi_{D_\Theta}(C_0(\RR)) \cdot C^*(M)^{\oplus 2}} \subset C^*(M)^{\oplus 2}, \label{eq:domain.functional.calculus}
\end{align}
and a core for this operator is $\phi_{D_\Theta}(C_c(\RR)) \cdot C^*(M)^{\oplus 2}$, on which it acts as  
\begin{align*}
    \mathcal{D}_\Theta (\phi_{D_\Theta}(\varphi) \cdot T):= \phi_{D_\Theta}(x\varphi) \cdot T.%\label{eq:operator.functional.calculus}
\end{align*} 
This is closable, and its closure $\mathcal{D}_\Theta$ is regular and self-adjoint.

Similarly, the Landau operator $H_\Theta$ defines a 
$\ast$-homomorphism $\phi_{H_\Theta} \colon C_0(\RR_{\geq 0}) \to C^*(M)$ via $\phi_{H_\Theta} (\varphi):=\varphi(H_\Theta)$. On the Hilbert $C^*(M)$-submodule 
\begin{align}
    \mathcal{E}_\Theta := \overline{\phi_{H_\Theta}(C_0(\RR_{\geq 0})) \cdot C^*(M)} \subset C^*(M),\label{eqn:Landau.Roe.module1}
\end{align}  
we have the regular positive operator $\mathcal{H}_\Theta $ defined by 
\begin{align}
    \mathcal{H}_\Theta (\phi_{H_\Theta}(\varphi) \cdot T):= \phi_{H_\Theta}(x\varphi) \cdot T \label{eqn:Landau.Roe.module2}
\end{align}
on the core $\phi_{H_\Theta}(C_c(\RR_{\geq 0})) \cdot C^*(M) \subset {\rm Dom} (\mathcal{H}_\Theta)$.

\begin{rem}
It may be shown that $C^*(M)$ is not $\sigma$-unital when $M$ is a non-compact manifold (Exercise 5.4.8(ii) of \cite{WillettYu}). So by Prop.\ 12.3.1 of \cite{Blackadar}, the Hilbert module $\mathcal{E}_\Theta$ is a proper submodule of $C^*(M)$. Similarly, $\hat{\mathcal{E}}_\Theta\neq C^*(M)^{\oplus 2}$.
\end{rem}

\begin{rem}
As is stated in \cite{Trout}, Theorem 3.2, the operators $\varphi(\mathcal{D}_\Theta), \varphi\in C_0(\RR)$, are all compact operators on $\hat{\mathcal{E}}_\Theta$. This is seen from
\begin{align}
    \varphi(\mathcal{D}_\Theta) = \varphi (x) \otimes _{\phi_{D_\Theta}} 1 \in \mathcal{K}(C_0(\RR) \otimes_{\phi_{D_\Theta}} C^*(M)^{\oplus 2}). \label{eqn:interior.tensor}
\end{align}
Indeed, for a $\ast$-homomorphism $\varpi \colon B \to \mathcal{K}(\mathcal{F})$ of $C^*$-algebras (where $\mathcal{F}$ is a Hilbert $A$-module) and a Hilbert $B$-module $\mathcal{E}$, we have $T \otimes_{\varpi } 1  \in \mathcal{K}(\mathcal{E} \otimes _\varpi \mathcal{F})$ for any $T \in \mathcal{K}(\mathcal{E})$ (\cite{Lance}, Proposition 4.7). 
\end{rem}

\begin{lem}\label{lem:domain.module}
For $\Theta, R \in C_\flat^\infty(M)$, we have $\hat{\mathcal{E}}_{\Theta} = \mathcal{E}_{\Theta - \frac{R}{4}} \oplus \mathcal{E}_{\Theta + \frac{R}{4}}$.
\end{lem}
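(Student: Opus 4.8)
The plan is to strip $\hat{\mathcal{E}}_\Theta$ of two inessential pieces of structure — the fact that it is built from $D_\Theta$ rather than $D_\Theta^2$, and the $\ZZ_2$-grading — and then to invoke an invariance of the resulting submodules under bounded perturbations of the operator. The basic tool is the elementary description of the ``essential submodule'' attached to a $C^*$-algebra of operators: if $A$ is a $C^*$-algebra of adjointable operators on a Hilbert module $B$, with approximate unit $(u_\lambda)$, $\|u_\lambda\|\le 1$, then $\overline{A\cdot B}=\{b\in B:\lim_\lambda u_\lambda b=b\}$. We apply this with $A=\phi_T(C_0(\RR))$ for the relevant self-adjoint regular operators $T$, noting that $A\subseteq C^*(M)$ or $\mathrm{M}_2(C^*(M))$ by Prop.\ 3.6 of \cite{Roebook}; a convenient approximate unit is $g_n(T)$, where $g_n\in C_c(\RR)$, $0\le g_n\le 1$, $g_n\equiv 1$ on $[-n,n]$ (for the positive operators below we just restrict $g_n$ to $\RR_{\geq 0}$). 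Thus, by \eqref{eq:domain.functional.calculus}, $\hat{\mathcal{E}}_\Theta=\{b\in C^*(M)^{\oplus 2}:g_n(D_\Theta)b\to b\}$, and by \eqref{eqn:Landau.Roe.module1}, $\mathcal{E}_{\Theta'}=\{T\in C^*(M):g_n(H_{\Theta'})T\to T\}$ for any $\Theta'$.

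First I would reduce to $D_\Theta^2$. Since $f(t)=t^2$ is proper, the functions $g_n\circ f$ still lie in $C_c(\RR)$ and still form an approximate unit of $\phi_{D_\Theta}(C_0(\RR))$; as $\overline{A\cdot B}$ does not depend on the choice of approximate unit and $(g_n\circ f)(D_\Theta)=g_n(D_\Theta^2)$, we get $\hat{\mathcal{E}}_\Theta=\{b:g_n(D_\Theta^2)b\to b\}$. Next, the Lichnerowicz--Weitzenb\"ock identity \eqref{eqn:bootstrap.symmetric} says $D_\Theta^2$ is the diagonal operator with entries $A_+:=H_{\Theta-\frac R4}-(\Theta-\frac R4)\ge 0$ and $A_-:=H_{\Theta+\frac R4}+(\Theta+\frac R4)\ge 0$, both affiliated with $C^*(M)$ as corners of $\phi_{D_\Theta^2}(C_0(\RR_{\geq 0}))\subseteq\mathrm{M}_2(C^*(M))$. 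Hence $g_n(D_\Theta^2)$ acts on $(T_+,T_-)\in C^*(M)^{\oplus 2}$ diagonally as $(g_n(A_+)T_+,g_n(A_-)T_-)$, so the convergence condition decouples and $\hat{\mathcal{E}}_\Theta=\mathcal{F}_+\oplus\mathcal{F}_-$ with $\mathcal{F}_\pm=\{T\in C^*(M):g_n(A_\pm)T\to T\}=\overline{\phi_{A_\pm}(C_0(\RR_{\geq 0}))\cdot C^*(M)}$.

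It then remains to identify $\mathcal{F}_\pm$ with $\mathcal{E}_{\Theta\mp R/4}$, i.e.\ to show $A_\pm$ and $H_{\Theta\mp R/4}$ have the same essential submodule. These operators differ only by multiplication by the \emph{bounded} function $\mp(\Theta\mp R/4)$ — bounded precisely because $\Theta,R\in C_\flat^\infty(M)\subseteq C_b^\infty(M)$, which is the one place the hypothesis is used. So the content is the general claim: if $S$ is self-adjoint regular, $c=c^*$ is bounded, and $S$, $S+c$ are both affiliated with $C^*(M)$, then $\overline{\phi_S(C_0(\RR))\cdot C^*(M)}=\overline{\phi_{S+c}(C_0(\RR))\cdot C^*(M)}$. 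By the symmetric roles of $S$ and $S+c$ it suffices to show $\varphi(S+c)T\in\overline{\phi_S(C_0(\RR))\cdot C^*(M)}$ for $\varphi\in C_c(\RR)$ (enough by density) with $\mathrm{supp}\,\varphi\subseteq[-a,a]$ and $T\in C^*(M)$. The scalar estimate $(1-g_n(t))^2\le t^2/n^2$, conjugated by $\varphi(S+c)$ and combined with the $C^*$-identity, gives $\|(1-g_n(S))\varphi(S+c)\|\le\tfrac1n\|S\varphi(S+c)\|$, and $S\varphi(S+c)=(t\varphi)(S+c)-c\,\varphi(S+c)$ is bounded with norm $\le\|t\varphi\|_\infty+\|c\|\,\|\varphi\|_\infty$. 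Hence $g_n(S)\varphi(S+c)T\to\varphi(S+c)T$ with each $g_n(S)\varphi(S+c)T\in\phi_S(C_0(\RR))\cdot C^*(M)$, so $\varphi(S+c)T$ lies in the closure. Assembling the three steps gives $\hat{\mathcal{E}}_\Theta=\mathcal{E}_{\Theta-R/4}\oplus\mathcal{E}_{\Theta+R/4}$.

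The two reductions are pure functional calculus; the real content, and the main obstacle, is the last step — the invariance of the essential submodule under a bounded perturbation. One must resist the temptation to argue both submodules are all of $C^*(M)$: by the Remark preceding the lemma, $\mathcal{E}_\Theta$ is a \emph{proper} submodule of $C^*(M)$ when $M$ is noncompact, so a genuine estimate is unavoidable. The only delicate point inside that estimate is that the a priori unbounded product $S\varphi(S+c)$ is in fact bounded, because $\varphi(S+c)$ maps into $\mathrm{Dom}(S+c)=\mathrm{Dom}(S)$ and $S\varphi(S+c)=(S+c)\varphi(S+c)-c\,\varphi(S+c)$.
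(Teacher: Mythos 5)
Your argument is correct, and it takes a genuinely different route from the paper's. You reduce the lemma to the general principle that for a self-adjoint regular $S$ affiliated with $C^*(M)$ and a bounded self-adjoint perturbation $c$, the essential submodules $\overline{\phi_S(C_0(\RR))\cdot C^*(M)}$ and $\overline{\phi_{S+c}(C_0(\RR))\cdot C^*(M)}$ coincide, which you establish via approximate units and the estimate $\|(1-g_n(S))\varphi(S+c)\|\le \tfrac1n\|S\varphi(S+c)\|$. The paper's proof is instead built around the observation that each of $\hat{\mathcal{E}}_\Theta$ and $\mathcal{E}_{\Theta-\frac R4}\oplus\mathcal{E}_{\Theta+\frac R4}$ is singly generated as a closed submodule of $C^*(M)^{\oplus 2}$ --- by $(D_\Theta^2+1)^{-1}$ and $(\tilde{H}_\Theta+1)^{-1}$ respectively --- and that the two generators differ by the operator $(\tilde{H}_\Theta+1)(D_\Theta^2+1)^{-1}=1-\tilde\theta\,(D_\Theta^2+1)^{-1}$, bounded with bounded inverse $1+\tilde\theta\,(\tilde{H}_\Theta+1)^{-1}$; since this operator and its inverse preserve $C^*(M)^{\oplus 2}$, the generated closed submodules agree. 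Your version is longer but isolates a reusable stability-under-bounded-perturbation lemma; the paper's is shorter and exploits the concrete resolvents directly. Both are driven by the same fact, that $\tilde\theta$ is bounded because $\Theta,R\in C_\flat^\infty(M)$.

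One spot to tighten: you should not phrase the key estimate as an operator inequality $(1-g_n(S))^2\le S^2/n^2$ between an operator and the unbounded $S^2/n^2$. The clean formulation, valid on Hilbert $C^*$-modules, is to set $u_n(t)=(1-g_n(t))^2/(1+t^2)$ and $v(t)=t^2/(1+t^2)$, note $u_n\le v/n^2$ pointwise with both bounded, so $u_n(S)\le v(S)/n^2$ in $\mathcal{B}(\mathcal{E})$, and then evaluate this inequality on $(1+S^2)^{1/2}\varphi(S+c)\xi$ (which makes sense since $\varphi(S+c)$ maps into $\mathrm{Dom}(S)=\mathrm{Dom}(S+c)$). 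This recovers exactly your bound and is routine, but as written the step is slightly loose.
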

\begin{proof}
The Hilbert module $\hat{\mathcal{E}}_\Theta$ is defined through $D_\Theta$ in Eq.\ \eqref{eq:domain.functional.calculus}, while the Hilbert module $\mathcal{E}_\Theta$ is defined through $H_\Theta$.
We have to relate these definitions.
To this end, set $\tilde{\mathcal{E}}_\Theta := \mathcal{E}_{\Theta - \frac{R}{4}} \oplus \mathcal{E}_{\Theta + \frac{R}{4}}$ and
\[ 
 \tilde{H}_\Theta := \begin{pmatrix} H_{\Theta - \frac{R}{4}} & 0 \\
 0 & H_{\Theta + \frac{R}{4}} \end{pmatrix}, 
 \qquad 
 \tilde{\theta} := \begin{pmatrix} -\Theta + \frac{R}{4} & 0 \\ 0 & \Theta + \frac{R}{4} \end{pmatrix}.
 \]
 We have to show that $\tilde{\mathcal{E}}_\Theta = \hat{\mathcal{E}}_\Theta$.
Note that $D_\Theta^2=\tilde{H}_\Theta+\tilde{\theta}$ by Eq.\ \eqref{eqn:bootstrap.symmetric}. 
Since the functions $(x^2+1)^{-1} \in C_0(\RR)$ and $(x+1)^{-1} \in C_0(\RR_{\geq 0})$ are strictly positive,
the spaces $(x^2+1)^{-1} \cdot C_0(\RR) \subset C_0(\RR)$ and $(x+1)^{-1} \cdot C_0(\RR_{\geq 0}) \subset C_0(\RR_{\geq 0})$ are dense. 
(Indeed, $C_c(\RR) \subset C_0(\RR)$ is dense, and for $f \in C_c(\RR)$, we have $f = (x^2+1)^{-1} \cdot (f \cdot (x^2+1))$; the argument for $C_0(\RR_{\geq 0})$ is similar.)
Therefore, the submodules 
\begin{align*}
    \phi_{D_\Theta}((x^2+1)^{-1}) \cdot C^*(M)^{\oplus 2} \subset \hat{\mathcal{E}}_\Theta, \qquad \phi_{\tilde{H}_\Theta}((x+1)^{-1}) \cdot C^*(M)^{\oplus 2} \subset \widetilde{\mathcal{E}}_{\Theta}
\end{align*}
are dense. 
The operator $(\tilde{H}_\Theta+1)(D_\Theta^2+1)^{-1}= 1 - \tilde{\theta}\cdot\phi_{D_\Theta}((x^2+1)^{-1})$ is bounded, with the inverse $(D_\Theta^2+1)(\tilde{H}_\Theta+1)^{-1}$ being bounded as well, hence 
\begin{align*}
    \phi_{D_\Theta}((x^2+1)^{-1}) = (D_\Theta^2+1)^{-1} &=  (\tilde{H}_\Theta+1)^{-1}(\tilde{H}_\Theta +1)(D_\Theta^2+1)^{-1}\\
    &= \phi_{\tilde{H}_\Theta}((x+1)^{-1}) \cdot (1- \tilde{\theta} \cdot \phi_{D_\Theta}((x^2+1)^{-1}))
\end{align*}
implies that
\[ \hat{\mathcal{E}}_\Theta  = \overline{\phi_{D_\Theta}((x^2+1)^{-1}) \cdot C^*(M)^{\oplus 2}} = \overline{\phi_{\tilde{H}_\Theta}((x+1)^{-1})C^*(M)^{\oplus 2}} =\widetilde{\mathcal{E}}_\Theta. \qedhere \]
\end{proof}

The proof of Lemma \ref{lem:domain.module} also shows that $\mathcal{D}_\Theta^2$ and 
\begin{equation*}
\tilde{\mathcal{H}}_\Theta=\begin{pmatrix}\mathcal{H}_{\Theta-\frac{R}{4}} & 0 \\ 0 & \mathcal{H}_{\Theta+\frac{R}{4}}\end{pmatrix}
\end{equation*}
 have the common core 
\[ \hat{\mathcal{E}}_\Theta ^0:=\phi_{D_\Theta}((x^2+1)^{-1}) \cdot C^*(M)^{\oplus 2} = \phi_{\tilde{H}_\Theta}((x+1)^{-1}) \cdot C^*(M)^{\oplus 2},\]
and hence the difference 
\begin{equation} \label{defVarThetaDiff}
\tilde{\vartheta}\equiv\begin{pmatrix} \vartheta_+ & 0 \\ 0 & \vartheta_- \end{pmatrix}:= \mathcal{D}_\Theta^2 - \begin{pmatrix} \mathcal{H}_{\Theta - \frac{R}{4}}  & 0 \\ 0 & \mathcal{H}_{\Theta + \frac{R}{4}} \end{pmatrix}  
\end{equation}
is a densely-defined symmetric operator on $\hat{\mathcal{E}}_\Theta$. 
We will relate this mismatch $\vartheta_\pm$ with the curvature operators $\mp \Theta + \frac{R}{4}$ in Lemma \ref{lem:same.spec}.

\begin{lem}\label{lem:function.module}
Let $V,\Theta \in C_\flat^\infty(M)$ be real-valued. Left multiplication with $V$ acting on $C^*(M)$ preserves the submodule $\mathcal{E}_\Theta$, and acts as a bounded operator on it. 
\end{lem}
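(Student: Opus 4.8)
\emph{Proof plan.} The plan is to reduce the statement to the single assertion that $V(H_\Theta+1)^{-1}$ lies in $\mathcal{E}_\Theta$, and then to prove this by a resolvent/commutator identity.

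Since $V$ is real-valued and bounded, multiplication by $V$ on $L^2(M)$ has propagation zero and preserves local compactness, so $V$ is a multiplier of $C^*(M)$ with $\|V\|_{\mathcal{M}(C^*(M))}\le\|V\|_\infty$. Hence left multiplication $L_V$ is bounded on $C^*(M)$ regarded as a Hilbert module over itself (its module norm being the $C^*$-norm), and once invariance is known it restricts to a bounded — in fact self-adjoint, as $\langle VS,T\rangle=S^*VT=\langle S,VT\rangle$ — operator on $\mathcal{E}_\Theta$. So everything comes down to showing $L_V(\mathcal{E}_\Theta)\subseteq\mathcal{E}_\Theta$. As in the proof of Lemma~\ref{lem:domain.module}, density of $(x+1)^{-1}C_0(\RR_{\geq 0})$ in $C_0(\RR_{\geq 0})$ gives $\mathcal{E}_\Theta=\overline{(H_\Theta+1)^{-1}C^*(M)}$; since $\mathcal{E}_\Theta$ is a closed right $C^*(M)$-submodule and $L_V$ is bounded, it therefore suffices to show $V(H_\Theta+1)^{-1}\in\mathcal{E}_\Theta$: then $V\cdot\big((H_\Theta+1)^{-1}T\big)=\big(V(H_\Theta+1)^{-1}\big)T\in\mathcal{E}_\Theta$ for all $T\in C^*(M)$, and a limiting argument covers all of $\mathcal{E}_\Theta$.

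For the remaining claim, apply the identity
\[
V(H_\Theta+1)^{-1}=(H_\Theta+1)^{-1}V+(H_\Theta+1)^{-1}[H_\Theta,V](H_\Theta+1)^{-1}.
\]
With $g(t)=(t+1)^{-1/2}\in C_0(\RR_{\geq 0})$, the first summand equals $g(H_\Theta)\cdot\big(g(H_\Theta)V\big)$, and $g(H_\Theta)V\in C^*(M)$ because $g(H_\Theta)\in C^*(M)$ and $V$ is a multiplier; hence it lies in $\phi_{H_\Theta}(C_0(\RR_{\geq 0}))\cdot C^*(M)\subseteq\mathcal{E}_\Theta$. For the second summand it suffices that $B:=[H_\Theta,V](H_\Theta+1)^{-1}\in C^*(M)$, because then $(H_\Theta+1)^{-1}B\in\mathcal{E}_\Theta$. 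Now $H_\Theta=(d-iA)^*(d-iA)$ is the connection Laplacian of $\nabla^A=d-iA$, so by a Bochner-type identity $[H_\Theta,V]$ is a first-order differential operator whose zeroth- and first-order coefficients are pointwise controlled by $d^*dV$ and $dV$ respectively — bounded precisely because $V\in C_\flat^\infty(M)$. Writing $[H_\Theta,V]=(\text{multiplication by a bounded function})+(\text{bounded bundle contraction})\circ(d-iA)$ and using the elementary bound $\|(d-iA)(H_\Theta+1)^{-1}\|\le1$, one sees $B$ is bounded; moreover $B\in C^*(M)$, since $(H_\Theta+1)^{-1}$ and $(d-iA)(H_\Theta+1)^{-1}$ both belong to the Roe algebra — the latter as an off-diagonal block of $\widetilde{D}(\widetilde{D}^2+1)^{-1}\in\mathrm{M}_2(C^*(M))$, where $\widetilde{D}$ is the essentially self-adjoint elliptic operator assembling $d-iA$ with its adjoint (Prop.\ 3.6 of \cite{Roebook}) — while multiplication by the bounded tensor fields $d^*dV$ and $dV$ are multipliers of $C^*(M)$. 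Thus $V(H_\Theta+1)^{-1}\in\mathcal{E}_\Theta$, which together with the reduction above proves the lemma.

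The only non-formal step is the last one, $B\in C^*(M)$: it is there that the hypothesis $V\in C_\flat^\infty(M)$ is genuinely used — to make $[H_\Theta,V]$ a bounded-coefficient first-order operator — and it relies on the standard coarse-geometric fact that resolvents and ``gradients of resolvents'' of the Landau operator lie in the Roe algebra. The remaining ingredients — approximate units for $C_0(\RR_{\geq 0})$, the right-module structure of $\mathcal{E}_\Theta$, and the multiplier property of bounded functions — are routine.
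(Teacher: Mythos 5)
Your proof is correct and follows essentially the same route as the paper's: both reduce the claim to membership of a resolvent--commutator expression in $\mathcal{E}_\Theta$, use the identity $[H_\Theta,V]=d^*d(V)-2(dV)^*\circ(d-iA)$, invoke the $C_\flat^\infty$ hypothesis to get bounded coefficients, and use the standard bound and Roe-algebra membership of $(d-iA)(H_\Theta+1)^{-1}$. The only (harmless) cosmetic difference is that you phrase the reduction as showing the single operator $V(H_\Theta+1)^{-1}\in\mathcal{E}_\Theta$ — splitting $(H_\Theta+1)^{-1}V=g(H_\Theta)\bigl(g(H_\Theta)V\bigr)$ to land in $\phi_{H_\Theta}(C_0(\RR_{\geq0}))\cdot C^*(M)$ — whereas the paper applies the same identity to $V(H_\Theta+1)^{-1}\xi$ for $\xi\in C^*(M)$ and lets $V\xi\in C^*(M)$ absorb the multiplier; and you make explicit the justification (via the off-diagonal block of $\widetilde D(\widetilde D^2+1)^{-1}$) of why $(d-iA)(H_\Theta+1)^{-1}\in C^*(M)$, a point the paper asserts without comment.
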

\begin{proof}
For any $\xi \in C^*(M)$, we have 
\begin{align}
    V \phi_{H_\Theta}((x+1)^{-1})\xi &= V \cdot (H_\Theta +1)^{-1} \cdot  \xi \nonumber \\
    &= [V, (H_\Theta +1)^{-1}] \xi + (H_\Theta +1)^{-1} V \xi \nonumber\\
    &= - (H_\Theta +1)^{-1}[V, H_\Theta] (H_\Theta +1)^{-1} \xi  + (H_\Theta +1)^{-1} V  \xi \nonumber\\
    &= (H_\Theta +1)^{-1} \big( -[V, H_\Theta  ] (H_\Theta +1)^{-1} +V \big) \xi .\label{eqn:membership.in.module}
\end{align} 
It is clear that $V \xi \in C^*(M)$.
We will show that $[V, H_\Theta  ](H_\Theta +1)^{-1}  \xi \in C^*(M)$, which implies that the right hand side of Eq.\ \eqref{eqn:membership.in.module} is in $\mathcal{E}_\Theta$. 
Indeed, by using the equalities $\alpha^* \circ \beta = \langle \alpha, \beta \rangle = \beta^* \circ \alpha $ for any compactly supported $1$-forms $\alpha, \beta \in \Omega^1_c(M)$, $[d^*, V] = - ([d,V])^* = -(dV)^*$, and
\[ d^*d(fg) = d^*d(f) \cdot g -2 \langle df, dg \rangle + f \cdot d^*d(g)\]
for any $f,g \in C_c^\infty(M)$, we get $[d^*d , V] = d^*d(V) - 2 (dV)^* \circ d$. Hence $[H_\Theta,V]$ is calculated as
\begin{align*}
    [H_\Theta, V] &= [(d -iA)^*(d-iA) , V] \\
     &= [d^*d, V] - i([d^*, V] \circ A  - A^* \circ  [d,V]) \\
    &=d^*d(V) -2 (dV)^* \circ d   -i(-(dV)^*\circ A - A^*\circ(dV)) \\
    &=d^*d(V) - 2 (dV)^* \circ (d - iA).
\end{align*}
By assumption, $V\in C_\flat^\infty(M)$ (see Eq.\ \eqref{eqn:bounded.derivatives}), so $||dV||<\infty, ||d^*dV||<\infty$. Also, the operator $(d - iA) (H_\Theta + 1)^{-1}$ is bounded, so $[H_\Theta, V](H_\Theta +1)^{-1}$ is well-defined as a bounded operator. 
Moreover, upon identifying 1-forms with functions in a trivialization, the terms $d^*d(V)$ and $(dV)^*$ are bounded multiplication operators and $(H_\Theta+1)^{-1},(d-iA)(H_\Theta+1)^{-1} \in C^*(M)$, thus $[H_\Theta, V](H_\Theta + 1)^{-1}$ is also in $C^*(M)$. 
\end{proof}

\medskip

We apply the interior tensor product (Remark \ref{rem:regular.homomorphism}) to $B=C^*(M)$, $\mathcal{F} = L^2(M)$, and the standard $\ast$-representation
\begin{equation*}
\pi \colon C^*(M) \to \mathcal{B}(L^2(M)).
\end{equation*} 
Then $\pi(\mathcal{D}_\Theta)$, $\pi(\tilde{\mathcal{H}}_\Theta)$ and $\pi(\tilde{\vartheta}_\pm)$ are self-adjoint operators on
\[ \hat{\mathcal{E}}_\Theta \otimes _\pi L^2(M) = C_0(\RR) \otimes _{\phi_{D_\Theta}} C^*(M)^{\oplus 2} \otimes _\pi L^2(M) \cong L^2(M)^{\oplus 2}, \]
where the isomorphism is given by $\varphi \otimes _{\phi_{D_\Theta}} K \otimes_\pi \xi \mapsto \varphi(D_\Theta) \cdot K \cdot \xi$.

\begin{lem}\label{lem:same.spec}
We have $\pi(\mathcal{D}_\Theta)=D_\Theta$ and $\pi(\mathcal{H}_\Theta) = H_\Theta$. Therefore, the spectrum of $\mathcal{H}_\Theta$ and $\mathcal{D}_\Theta$ as regular operators on $\mathcal{E}_\Theta$ and $\hat{\mathcal{E}}_\Theta$, coincide with that of $H_\Theta$ and $D_\Theta$ respectively. Moreover, the differences $\vartheta_\pm$ defined in Eq.\ \eqref{defVarThetaDiff} coincide with $\mp \Theta + \frac{R}{4}$, acting on $\mathcal{E}_{\Theta \pm \frac{R}{4}}$ as in Lemma \ref{lem:function.module}.
\end{lem}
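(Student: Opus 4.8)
The plan is to reduce everything to classical facts about the Dirac and Landau operators $D_\Theta,H_\Theta$ acting on the Hilbert space $L^2(M)$. First I would identify $\pi(\mathcal{D}_\Theta)$ and $\pi(\mathcal{H}_\Theta)$ with $D_\Theta$ and $H_\Theta$; then the spectrum equalities follow from Remark~\ref{rem:regular.homomorphism}, and the formula for $\vartheta_\pm$ follows by pushing the defining equation \eqref{defVarThetaDiff} through $\pi$ and comparing with the Lichnerowicz--Weitzenb\"ock formula \eqref{eqn:bootstrap.symmetric}.

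The crux is $\pi(\mathcal{D}_\Theta) = D_\Theta$. Under the isomorphism $\hat{\mathcal{E}}_\Theta \otimes_\pi L^2(M) \cong L^2(M)^{\oplus 2}$ recorded just before the statement, which sends $\varphi \otimes_{\phi_{D_\Theta}} K \otimes_\pi \xi$ to $\varphi(D_\Theta) K \xi$, the operator $\pi(\mathcal{D}_\Theta) = \mathcal{D}_\Theta \otimes_\pi 1$ sends $\varphi(D_\Theta) K \xi$ to $(x\varphi)(D_\Theta) K \xi = D_\Theta(\varphi(D_\Theta) K \xi)$ for $\varphi \in C_c(\RR)$, $K \in C^*(M)^{\oplus 2}$, $\xi \in L^2(M)$; thus on the span $\mathcal{C}$ of such vectors, which lies in $\mathrm{Dom}(D_\Theta)$, the operator $\pi(\mathcal{D}_\Theta)$ acts as $D_\Theta$. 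Now $\mathcal{C}$ is a core for $\pi(\mathcal{D}_\Theta)$, being the image in $\hat{\mathcal{E}}_\Theta \otimes_\pi L^2(M)$ of the core $\phi_{D_\Theta}(C_c(\RR)) \cdot C^*(M)^{\oplus 2}$ of $\mathcal{D}_\Theta$ tensored with $L^2(M)$ (cf.\ \cite{Lance}, Prop.~9.10). Hence $\pi(\mathcal{D}_\Theta) = \overline{D_\Theta|_{\mathcal{C}}} \subseteq D_\Theta$, and since both operators are self-adjoint they coincide. The same argument with $C_0(\RR_{\geq 0})$ in place of $C_0(\RR)$ yields $\pi(\mathcal{H}_\Theta) = H_\Theta$, and likewise $\pi(\mathcal{H}_{\Theta \pm R/4}) = H_{\Theta \pm R/4}$. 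Conceptually this is the compatibility of Trout's ``generator'' construction (\cite{Trout}, \S 3) with composition of $\ast$-representations; I expect this step to be the main (and only non-formal) obstacle.

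The rest is formal. Since $\pi$ is injective, Remark~\ref{rem:regular.homomorphism} (with $B = C^*(M)$, $\mathcal{F} = L^2(M)$, $\varpi = \pi$) gives $\sigma(\mathcal{D}_\Theta) = \sigma(\pi(\mathcal{D}_\Theta)) = \sigma(D_\Theta)$ and $\sigma(\mathcal{H}_\Theta) = \sigma(H_\Theta)$. For $\vartheta_\pm$: exactly as in the proof of Theorem~\ref{thm:ess.susy}, $\pi(\mathcal{D}_\Theta^2) = \pi(\mathcal{D}_\Theta)^* \pi(\mathcal{D}_\Theta) = D_\Theta^2$, and componentwise $\pi(\mathrm{diag}(\mathcal{H}_{\Theta - R/4}, \mathcal{H}_{\Theta + R/4})) = \mathrm{diag}(H_{\Theta - R/4}, H_{\Theta + R/4})$. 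Applying $\pi$ to \eqref{defVarThetaDiff} on the common core $\hat{\mathcal{E}}_\Theta^0$ and using \eqref{eqn:bootstrap.symmetric}, for every $\eta \in \hat{\mathcal{E}}_\Theta^0$ we get $\pi(\tilde{\vartheta}\,\eta) = \bigl(D_\Theta^2 - \mathrm{diag}(H_{\Theta - R/4}, H_{\Theta + R/4})\bigr)\pi(\eta) = \mathrm{diag}(-\Theta + R/4,\, \Theta + R/4)\,\pi(\eta)$. Since $\pi$ restricted to $C^*(M)^{\oplus 2}$ is the defining inclusion into $\mathcal{B}(L^2(M)^{\oplus 2})$, it is injective on module elements, so $\tilde{\vartheta}\,\eta$ equals the componentwise left multiplication of $\eta$ by $\mathrm{diag}(-\Theta + R/4,\, \Theta + R/4)$; by Lemma~\ref{lem:function.module} this is a bounded operator on $\hat{\mathcal{E}}_\Theta = \mathcal{E}_{\Theta - R/4} \oplus \mathcal{E}_{\Theta + R/4}$. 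As $\hat{\mathcal{E}}_\Theta^0$ is dense, the closure of $\tilde{\vartheta}$ is this bounded operator, that is, $\vartheta_\pm = \mp\Theta + R/4$, acting as in Lemma~\ref{lem:function.module}.
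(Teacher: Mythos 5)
Your proof is correct, and its second and third parts (the spectrum equality via Remark~\ref{rem:regular.homomorphism} and injectivity of $\pi$, and the identification $\vartheta_\pm = \mp\Theta + \tfrac{R}{4}$ via the Lichnerowicz--Weitzenb\"{o}ck formula pushed through $\pi$) match the paper's argument. The genuine difference lies in the key step $\pi(\mathcal{D}_\Theta) = D_\Theta$. The paper invokes Proposition~3.1 of [Trout]: the $C_0(\RR)$-functional calculus $\varphi(\mathcal{D}_\Theta)$ on the Hilbert module $\hat{\mathcal{E}}_\Theta$ coincides with the restriction of $\phi_{D_\Theta}(\varphi) = \varphi(D_\Theta)$ to the submodule; applying $\otimes_\pi 1_{L^2(M)}$ then gives $\varphi(\mathcal{D}_\Theta\otimes_\pi 1) = \varphi(D_\Theta)$ for all $\varphi\in C_0(\RR)$, and the equality of operators follows because a self-adjoint regular operator is uniquely determined by its $C_0$-functional calculus. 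You instead argue directly at the unbounded-operator level: you compute that $\pi(\mathcal{D}_\Theta)$ agrees with $D_\Theta$ on the dense subspace $\mathcal{C}$ spanned by vectors $\varphi(D_\Theta)K\xi$ with $\varphi\in C_c(\RR)$, check that $\mathcal{C}$ is a core for $\pi(\mathcal{D}_\Theta)$ (this requires observing that a core of $\mathcal{D}_\Theta$ yields a core after the interior tensor product, a mild strengthening of Lance Prop.~9.10 that is worth spelling out but poses no problem), and conclude from $\pi(\mathcal{D}_\Theta)\subseteq D_\Theta$ and the maximality of self-adjoint operators. Your route is more elementary and self-contained, sidestepping Trout's compatibility result; the paper's route is shorter once that reference is accepted and is more uniform with the $C_0$-functional-calculus framing already set up in Remark~\ref{rem:regular.homomorphism}.
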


\begin{proof}
As proved in Proposition 3.1 of \cite{Trout}, for $\varphi\in C_0(\RR)$ the functional calculus $\varphi(\mathcal{D}_\Theta) \in \mathcal{K}(\hat{\mathcal{E}}_\Theta)$ coincides with the restriction of $\phi_{D_\Theta}(\varphi)$ to the $C^*$-submodule $\hat{\mathcal{E}}_\Theta$ of $C^*(M)^{\oplus 2}$ (cf.\ Eq.\ \eqref{eqn:interior.tensor}). 
Hence we have $\varphi(\mathcal{D}_\Theta) \otimes_\pi 1_{L^2(M)}  = \phi_{D_\Theta}(\varphi)$ on the dense subspace $\hat{\mathcal{E}}_\Theta \otimes_{\rm alg, \pi} L^2(M)$ of $L^2(M)^{\oplus 2}$, thus
\[ \varphi(\mathcal{D}_\Theta \otimes_\pi 1) = \varphi(\mathcal{D}_\Theta) \otimes _\pi 1_{L^2(M)} = \phi_{D_\Theta}(\varphi)\]
for any $\varphi \in C_0(\RR)$. 
Since a self-adjoint operator can be reconstructed from its $C_0$-functional calculus, we get $\mathcal{D}_\Theta \otimes_\pi 1 = D_\Theta$.

The same argument also shows $\pi(\mathcal{H}_\Theta) = H_\Theta$. Finally, the function $\mp \Theta +\frac{R}{4}$ acting as bounded operators on $\mathcal{E}_{\Theta\mp\frac{R}{4}}$ as in Lemma \ref{lem:function.module}, satisfies $\pi(\mp \Theta +\frac{R}{4}) = \mp \Theta +\frac{R}{4}$ (where the right hand side is a multiplication operator on $L^2(M)$), and hence
\begin{align*}
     \pi (\mathcal{D}_\Theta^2) - \pi\begin{pmatrix} -\Theta + \frac{R}{4} & 0 \\ 0 & \Theta + \frac{R}{4} \end{pmatrix} 
&= D_\Theta^2 - \begin{pmatrix} -\Theta + \frac{R}{4} & 0 \\ 0 & \Theta + \frac{R}{4} \end{pmatrix} \\
&= \tilde{H}_\Theta = \pi \begin{pmatrix} \mathcal{H}_{\Theta - \frac{R}{4}}& 0 \\ 0 & \mathcal{H}_{\Theta +\frac{R}{4}} \end{pmatrix}. 
\end{align*} 
By injectivity of $\pi$, we get $\vartheta_\pm = \mp \Theta + \frac{R}{4}$.  

\end{proof}

Replacing $\Theta$ by $\Theta+\frac{R}{4}$ in Lemmas \ref{lem:domain.module} and \ref{lem:same.spec}, we arrive at the following relation between Dirac and Landau operators viewed as Hilbert $C^*(M)$-module operators.

\begin{cor}\label{cor:ASS.surface}
Let $D_\pm:= \left(\mathcal{D}_{\Theta + \frac{R}{4}}\right)_\pm$, $H_+:=\mathcal{H}_\Theta$, $H_-:= \mathcal{H}_{\Theta + \frac{R}{2}}$, $\theta _+:=-\Theta$ and $\theta_-:=\Theta + \frac{R}{2}$. Then the data $(D_\pm, H_\pm, \theta_\pm)$ determine an abstract supercharge over $\hat{\mathcal{E}}_{\Theta + \frac{R}{4}}$ (Definition \ref{dfn:supersymmetry}).
\end{cor}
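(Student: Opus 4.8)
The plan is to invoke Lemmas \ref{lem:domain.module}, \ref{lem:function.module} and \ref{lem:same.spec} with $\Theta$ replaced throughout by $\Theta+\tfrac R4$, and then to check the structural requirements of Definition \ref{dfn:supersymmetry} one by one. First I would record the properties of $D:=\mathcal D_{\Theta+\frac R4}$: the Trout construction recalled after Eq.\ \eqref{eq:domain.functional.calculus} makes it a regular self-adjoint operator on $\hat{\mathcal E}_{\Theta+\frac R4}$, and it is odd because $\phi_{D_{\Theta+R/4}}$ is an odd $\ast$-homomorphism for the even--odd grading on $C_0(\RR)$ and the diagonal--off-diagonal grading on $\mathrm{M}_2(C^*(M))$; hence $D=\begin{pmatrix}0 & D_-\\ D_+ & 0\end{pmatrix}$ with $D_-=D_+^*$, and consequently $D^2=\mathrm{diag}(D_+^*D_+,\,D_+D_+^*)$ automatically. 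That $D$ has compact resolvent follows from Eq.\ \eqref{eqn:interior.tensor}: taking $\varphi(x)=(x-i)^{-1}\in C_0(\RR)$ gives $(D-i)^{-1}\in\mathcal K(\hat{\mathcal E}_{\Theta+\frac R4})$, whence $(D-z)^{-1}\in\mathcal K$ for every $z\in\rho(D)$ since $\mathcal K$ is an ideal.

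Next I would assemble the even part and the shift parameters. Lemma \ref{lem:domain.module} gives $\hat{\mathcal E}_{\Theta+\frac R4}=\mathcal E_\Theta\oplus\mathcal E_{\Theta+\frac R2}$, so $\mathcal E_+=\mathcal E_\Theta$ and $\mathcal E_-=\mathcal E_{\Theta+\frac R2}$ are exactly the modules on which $H_+=\mathcal H_\Theta$ and $H_-=\mathcal H_{\Theta+\frac R2}$ are defined; both are regular positive operators by their construction in Eqs.\ \eqref{eqn:Landau.Roe.module1}--\eqref{eqn:Landau.Roe.module2} as the functional-calculus lifts of the nonnegative Landau operators $H_\Theta$ and $H_{\Theta+\frac R2}$, so $H=H_+\oplus H_-$ is even and positive. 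The shift parameters $\theta_+=-\Theta$ and $\theta_-=\Theta+\tfrac R2$ lie in the vector space $C^\infty_\flat(M)$ since $\Theta,R\in C^\infty_\flat(M)$, so by Lemma \ref{lem:function.module} they act as bounded operators on $\mathcal E_\Theta$ and $\mathcal E_{\Theta+\frac R2}$ respectively, and they are self-adjoint because real-valued.

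It remains to verify the bootstrap identity \eqref{eqn:bootstrap_abst}. Lemma \ref{lem:same.spec}, with $\Theta+\tfrac R4$ in place of $\Theta$, identifies the mismatch operators of Eq.\ \eqref{defVarThetaDiff} as $\vartheta_+=-(\Theta+\tfrac R4)+\tfrac R4=-\Theta$ and $\vartheta_-=(\Theta+\tfrac R4)+\tfrac R4=\Theta+\tfrac R2$, while that equation itself reads $D^2=\mathrm{diag}(\mathcal H_\Theta,\mathcal H_{\Theta+\frac R2})+\mathrm{diag}(\vartheta_+,\vartheta_-)=\mathrm{diag}(H_++\theta_+,\,H_-+\theta_-)$, which is precisely \eqref{eqn:bootstrap_abst}. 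Combining the three points shows $(D_\pm,H_\pm,\theta_\pm)$ is an abstract supercharge over $\hat{\mathcal E}_{\Theta+\frac R4}$. There is no genuine obstacle here --- the corollary is really a repackaging of the preceding lemmas --- so the one thing I would double-check carefully is that the substitution $\Theta\mapsto\Theta+\tfrac R4$ produces exactly the declared curvature functions $\theta_\pm$ out of Lemma \ref{lem:same.spec}, with no residual $\pm\tfrac R4$; all the analytic input (regularity, compact resolvent, boundedness and self-adjointness of $\theta_\pm$) is inherited verbatim from those lemmas.
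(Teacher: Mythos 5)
Your proof is correct and follows exactly the route the paper intends: the paper dispatches the corollary in one line (``Replacing $\Theta$ by $\Theta+\frac{R}{4}$ in Lemmas~\ref{lem:domain.module} and~\ref{lem:same.spec}\dots''), and your argument simply unpacks that substitution and checks the axioms of Definition~\ref{dfn:supersymmetry} explicitly — the Trout-construction compact resolvent, the module decomposition from Lemma~\ref{lem:domain.module}, boundedness of $\theta_\pm$ from Lemma~\ref{lem:function.module}, and the bootstrap identity via the $\vartheta_\pm=\mp\Theta+\frac{R}{4}$ computation in Lemma~\ref{lem:same.spec}. The arithmetic of the shift $\Theta\mapsto\Theta+\frac{R}{4}$ producing exactly $\theta_+=-\Theta$ and $\theta_-=\Theta+\frac{R}{2}$ with no residual $\frac{R}{4}$ is carried out correctly.
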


\subsection{Supercharges over delocalized Roe algebra}
Let $N$ be a submanifold of $M$. 
Then the localized Roe algebra $C^*_M(N)$ is the $C^*$-algebra closure of the set of finite-propagation, locally compact operators on $L^2(M)$ that are supported on some  finite $r$-neighborhood of $N$. It is a closed ideal of $C^*(M)$. Let 
\begin{equation*}
\varpi \colon C^*(M) \to C^*(M)/C^*_M(N) 
\end{equation*}
denote the quotient $\ast$-homomorphism onto the \emph{delocalized Roe algebra}, and recall the construction in Remark \ref{rem:regular.homomorphism}.

\begin{lem}\label{lem:relative_coarse}
Let $\mathcal{H}_\Theta$ be the Landau operator on the Hilbert $C^*(M)$-module $\mathcal{E}_\Theta$, as defined in Eq. \eqref{eqn:Landau.Roe.module1} and \eqref{eqn:Landau.Roe.module2}. Let $V\in C_\flat^\infty(M)$ be a real-valued function on $M$ such that $V(x) \to 0$ as ${\rm dist}(x,N) \to \infty$. 
Then $\mathcal{H}_\Theta + V$, where $V$ acts on $\mathcal{E}_\Theta$ as in Lemma \ref{lem:function.module}, is also a self-adjoint regular operator on $\mathcal{E}_\Theta $, and the equality
\[ \varpi(\mathcal{H}_\Theta + V)  = \varpi(\mathcal{H}_\Theta ) \]
holds as regular operators on $\mathcal{E}_\Theta \otimes _\varpi C^*(M)/C^*_M(N)$. 
\end{lem}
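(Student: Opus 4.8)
The plan is to prove the two assertions — self-adjointness and regularity of $\mathcal{H}_\Theta+V$, and the identity $\varpi(\mathcal{H}_\Theta+V)=\varpi(\mathcal{H}_\Theta)$ — in turn. For the first, note that by Lemma \ref{lem:function.module} left multiplication by $V$ is bounded on $\mathcal{E}_\Theta\subset C^*(M)$; since $V$ is real-valued and the $C^*(M)$-valued inner product on $\mathcal{E}_\Theta$ is $\langle\eta_1,\eta_2\rangle=\eta_1^*\eta_2$, one checks $\langle V\eta_1,\eta_2\rangle=\eta_1^*V\eta_2=\langle\eta_1,V\eta_2\rangle$, so $V$ is adjointable and self-adjoint on $\mathcal{E}_\Theta$. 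Hence $\mathcal{H}_\Theta+V$, with domain $\mathrm{Dom}(\mathcal{H}_\Theta)$, is a bounded self-adjoint perturbation of the positive self-adjoint regular operator $\mathcal{H}_\Theta$, and in particular is closed and symmetric. For $\lambda\in\CC$ with $\mathrm{dist}(\lambda,[0,\infty))>\|V\|$ we have $\|(\mathcal{H}_\Theta-\lambda)^{-1}\|\le\mathrm{dist}(\lambda,[0,\infty))^{-1}$, so $\mathcal{H}_\Theta+V-\lambda=(\mathcal{H}_\Theta-\lambda)\bigl(1+(\mathcal{H}_\Theta-\lambda)^{-1}V\bigr)$ has a two-sided bounded inverse; taking $\lambda=\pm i\mu$ with $\mu>\|V\|$ and invoking the standard characterisation of self-adjoint regular operators as symmetric operators $T$ with $T\pm i\mu$ surjective (\cite{Lance}, \S\,9) yields the first assertion.

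\emph{Reduction to $V\otimes_\varpi 1=0$.} Recall from Remark \ref{rem:regular.homomorphism} that $\varpi(T)=T\otimes_\varpi 1$ and that $T\mapsto T\otimes_\varpi 1$ is a $*$-homomorphism $\mathcal{B}(\mathcal{E}_\Theta)\to\mathcal{B}\bigl(\mathcal{E}_\Theta\otimes_\varpi C^*(M)/C^*_M(N)\bigr)$ compatible with $C_0(\RR)$-functional calculus. Fix $\mu>\|V\|$, so that $-\mu$ is a resolvent point of $\mathcal{H}_\Theta$ (as $\mathcal{H}_\Theta\ge 0$) and of $\mathcal{H}_\Theta+V$ (by the previous paragraph); then $(\varpi(\mathcal{H}_\Theta)+\mu)^{-1}=(\mathcal{H}_\Theta+\mu)^{-1}\otimes_\varpi 1$ and $(\varpi(\mathcal{H}_\Theta+V)+\mu)^{-1}=(\mathcal{H}_\Theta+V+\mu)^{-1}\otimes_\varpi 1$. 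Applying $T\mapsto T\otimes_\varpi 1$ to the resolvent identity
\[ (\mathcal{H}_\Theta+V+\mu)^{-1}-(\mathcal{H}_\Theta+\mu)^{-1}=-(\mathcal{H}_\Theta+V+\mu)^{-1}\,V\,(\mathcal{H}_\Theta+\mu)^{-1} \]
shows that it suffices to prove $V\otimes_\varpi 1=0$: then the resolvents of $\varpi(\mathcal{H}_\Theta+V)$ and $\varpi(\mathcal{H}_\Theta)$ at $-\mu$ agree, and a self-adjoint regular operator is determined by any single resolvent.

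\emph{Main point.} The core of the argument is that left multiplication by $V$ carries $C^*(M)$ into the ideal $C^*_M(N)$; I would prove $V\xi\in C^*_M(N)$ for $\xi\in C^*(M)$ by approximation. Given $\varepsilon>0$, pick $r$ with $|V|<\varepsilon$ outside the $r$-neighbourhood $B_r(N)$ and let $\rho(x)=\max\{0,\min\{1,r-\mathrm{dist}(x,N)\}\}$, a continuous bounded function with $\rho\equiv1$ on $B_{r-1}(N)$, $\mathrm{supp}\,\rho\subset\overline{B_r(N)}$ and $\|V-\rho V\|_\infty\le\varepsilon$. For $\xi$ in the dense $*$-algebra $\CC[M]$ of finite-propagation locally compact operators, $(\rho V)\xi$ is again finite-propagation and locally compact (as $\rho V$ is continuous and bounded) and is supported on a finite neighbourhood of $N$, hence lies in the generating set of $C^*_M(N)$; by boundedness of left multiplication by $\rho V$, the same membership holds for all $\xi\in C^*(M)$. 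Since $\|V\xi-(\rho V)\xi\|\le\varepsilon\|\xi\|$ and $C^*_M(N)$ is closed, letting $\varepsilon\to0$ gives $V\xi\in C^*_M(N)$. In particular, for $\xi\in\mathcal{E}_\Theta$ the inner product $\langle V\xi,V\xi\rangle_{\mathcal{E}_\Theta}=(V\xi)^*(V\xi)$ lies in $C^*_M(N)$, so $\varpi(\langle V\xi,V\xi\rangle)=0$; hence every elementary tensor $(V\xi)\otimes_\varpi b'$ has zero norm in the interior tensor product, and therefore $V\otimes_\varpi 1=0$, completing the proof.

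I expect the steps of the first two paragraphs to be routine once phrased correctly, and the genuinely careful points to be in the third paragraph: verifying that $(\rho V)\xi$ meets the precise definition of an element of $C^*_M(N)$ (finite propagation, local compactness against $C_0(M)$, support in a finite neighbourhood of $N$), and, in the reduction step, checking that the passage between $T\otimes_\varpi 1$ and the resolvent of $\varpi(T)$ is legitimate even though $(x+\mu)^{-1}\notin C_0(\RR)$ — this is remedied by applying the functional-calculus identity to a function in $C_0(\RR)$ that agrees with $(x+\mu)^{-1}$ on the (bounded-below) spectrum. Neither point is conceptually deep, but both should be spelled out.
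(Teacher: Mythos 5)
Your proof is correct, and the key mechanism is the same as the paper's: multiplication by $V$ carries $C^*(M)$ into the ideal $C^*_M(N)$ because $V$ decays away from $N$, so $\varpi$ kills the perturbation term. The paper reaches the conclusion more directly, by observing that $\varpi(\mathcal{H}_\Theta+V)$ and $\varpi(\mathcal{H}_\Theta)$ share the core $\mathrm{Dom}(\mathcal{H}_\Theta)\otimes_{\varpi,\mathrm{alg}}C^*(M)/C^*_M(N)$ and agree there (since $\varpi(V\cdot T)=0$), whereas you deduce it from agreement of resolvents at a single point $-\mu$; both are standard, and your route requires exactly the extra care you flag about applying the functional-calculus identity with $(x+\mu)^{-1}\notin C_0(\RR)$. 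You actually write out the cutoff argument proving $V\cdot\xi\in C^*_M(N)$, which the paper asserts as immediate; that is a worthwhile detail. One small slip in that argument: with $r$ chosen so that $|V|<\varepsilon$ outside $B_r(N)$ and $\rho(x)=\max\{0,\min\{1,\,r-\mathrm{dist}(x,N)\}\}$, the bound $\|V-\rho V\|_\infty\le\varepsilon$ fails on the annulus where $r-1<\mathrm{dist}(x,N)<r$, since there you only control $|V|$ by $\|V\|_\infty$; replace $\rho$ by $\max\{0,\min\{1,\,r+1-\mathrm{dist}(x,N)\}\}$ (so $\rho\equiv 1$ on $B_r(N)$ and $1-\rho$ is supported where $|V|<\varepsilon$) and the estimate holds.
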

\begin{proof}
Since $V \in C_\flat^\infty(M)$, by Lemma \ref{lem:function.module}, the sum $\mathcal{H}_\Theta + V$ is well-defined, and is again a self-adjoint operator. 
By Remark \ref{rem:regular.homomorphism}, $\varpi(\mathcal{H}_\Theta)$ and $\varpi(\mathcal{H}_\Theta +V)$ share the core 
\[ {\rm Dom}(\mathcal{H}_\Theta+V) \otimes_{\varpi, {\rm alg}}  C^*(M)/C^*_M(N)= {\rm Dom}(\mathcal{H}_\Theta) \otimes_{\varpi, {\rm alg}} C^*(M)/C^*_M(N).\] 
By definition of $\varpi(\mathcal{H}_\Theta+V)$, for any $T \in {\rm Dom}(\mathcal{H}_\Theta)$ and $X \in C^*(M)/C^*_M(N)$, we have
\begin{align*}
\varpi(\mathcal{H}_\Theta +V)(T \otimes_\varpi X) &= \varpi(\mathcal{H}_\Theta T + V\cdot T) \cdot X \\
&= \varpi( \mathcal{H}_\Theta T)X + \varpi(V\cdot T)X \\
&= \varpi(\mathcal{H}_\Theta )(T \otimes_\varpi X),
\end{align*}
where $\varpi(V\cdot T)=0$ comes from $V \cdot T \in C^*_M(N)$. This finishes the proof. 
\end{proof}

\paragraph{Coarse Dirac index ${\rm Ind}(D)$.}
The spin Dirac operator on $M$ can be viewed as an odd self-adjoint regular operator $D$ on a Hilbert submodule of $C^*(M)^{\oplus 2}$ with compact resolvent, and it has an index ${\rm Ind}(D)\in K_0(C^*(M))$ in the sense of Eq.\ \eqref{eqn:coarse.index.boundary}. This is the same thing as the \emph{coarse index} \cite{Roebook}.
Specifically, there is a \emph{coarse assembly map} \cite{HR-coarse},
\begin{equation}
\mu_M:K_0(M)\rightarrow K_0(C^*(M)),\label{eqn:coarse.assembly}
\end{equation}
which when applied to the $K$-homology class of the Dirac operator gives its coarse index $\mathrm{Ind}(D)$.

\begin{thm}\label{thm:ASS.relative}
Let $M$ be a complete Riemannian spin surface and let $N \subset M$ be a submanifold such that any $r$-neighborhood of $N$ is a proper subset of $M$. 
Assume that the scalar curvature $R$ and the magnetic field $\Theta = b + \Theta_{\rm pert}$ satisfy $R(x) \to 0$, $\Theta_{\rm pert}(x)  \to 0$ as ${\rm dist}(x,N)\to \infty$, where $0\neq b \in \RR$ is a constant. 
We define 
\[
\hat{\mathcal{E}}_n:=\hat{\mathcal{E}}_{\Theta + \frac{(2n+1)R}{4}} \otimes_\varpi C^*(M)/C^*_M(N)
\]
 and
\begin{align*}
    D_{n} &= \varpi \big(\mathcal{D}_{\Theta + \frac{(2n+1)R}{4}}\big), \\
     H_{n,+} &= \varpi\big(\mathcal{H}_{\Theta + \frac{nR}{2}}\big), \\
      H_{n,-} &= \varpi\big( \mathcal{H}_{\Theta + \frac{(n+1)R}{2}}\big), \\
        \theta_{n,\pm} &= \mp b\cdot 1.
\end{align*}
Then $(D_{n,\pm}, H_{n,\pm}, \theta_{n,\pm})$ defines a sequence of abstract supercharges over the Hilbert \mbox{$C^*(M)/C^*_M(N)$-module} $\hat{\mathcal{E}}_n$, satisfying assumptions \ref{assum.1}--\ref{assum.2} of Prop.\ \ref{prop:LL.iteration}. If furthermore $\varpi_*\mathrm{Ind}(D)\neq 0$, then assumption \ref{assum.3} of Prop.\ \ref{prop:LL.iteration} is satisfied as well.
\end{thm}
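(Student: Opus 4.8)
The plan is to obtain each triple $(D_{n,\pm},H_{n,\pm},\theta_{n,\pm})$ by pushing the concrete supercharge of Corollary \ref{cor:ASS.surface} through the quotient homomorphism $\varpi$, using the interior tensor product construction of Remark \ref{rem:regular.homomorphism}. First I would apply Corollary \ref{cor:ASS.surface} with $\Theta$ replaced by $\Theta+\tfrac{nR}{2}$ --- which again lies in $C_\flat^\infty(M)$ --- obtaining for each $n$ an abstract supercharge over $\hat{\mathcal{E}}_{\Theta+\frac{(2n+1)R}{4}}$ whose Dirac operator is $\mathcal{D}_{\Theta+\frac{(2n+1)R}{4}}$, whose even Hamiltonian is $\mathcal{H}_{\Theta+\frac{nR}{2}}\oplus\mathcal{H}_{\Theta+\frac{(n+1)R}{2}}$, and whose (bounded, by $C_\flat^\infty$-ness and Lemma \ref{lem:function.module}) shift operators are $\vartheta_{n,+}=-(\Theta+\tfrac{nR}{2})$ and $\vartheta_{n,-}=\Theta+\tfrac{(n+1)R}{2}$. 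Applying $\varpi$, Remark \ref{rem:regular.homomorphism} makes $D_n:=\varpi(\mathcal{D}_{\Theta+\frac{(2n+1)R}{4}})$ an odd self-adjoint regular operator and $H_{n,\pm}$ even positive regular operators on $\hat{\mathcal{E}}_n=\hat{\mathcal{E}}_{\Theta+\frac{(2n+1)R}{4}}\otimes_\varpi C^*(M)/C^*_M(N)$, and --- $\varpi$ being a $\ast$-homomorphism compatible with the common cores, as in the proof of Lemma \ref{lem:relative_coarse}, and functional calculus commuting with $\otimes_\varpi 1$ --- squaring yields $D_n^2=\varpi(\mathcal{D}_{\Theta+\frac{(2n+1)R}{4}}^2)=\mathrm{diag}\bigl(H_{n,+}+\varpi(\vartheta_{n,+}),\,H_{n,-}+\varpi(\vartheta_{n,-})\bigr)$.

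Two points here need care. For the shift parameters, writing $\Theta=b+\Theta_{\rm pert}$ we have $\vartheta_{n,+}=-b\cdot 1-(\Theta_{\rm pert}+\tfrac{nR}{2})$ and $\vartheta_{n,-}=b\cdot 1+(\Theta_{\rm pert}+\tfrac{(n+1)R}{2})$; since $R,\Theta_{\rm pert}\to 0$ as $\mathrm{dist}(\cdot,N)\to\infty$, the multiplication-operator parts act on the Hilbert modules as elements of $C^*_M(N)$ --- this is exactly the computation ``$V\cdot T\in C^*_M(N)$'' in the proof of Lemma \ref{lem:relative_coarse} --- so $\theta_{n,\pm}:=\varpi(\vartheta_{n,\pm})=\mp b\cdot 1$ as claimed, and $D_n^2=\mathrm{diag}(H_{n,+}+\theta_{n,+},H_{n,-}+\theta_{n,-})$. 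For the compact resolvent of $D_n$: the $C_0(\RR)$-functional calculus of $\mathcal{D}_{\Theta+\frac{(2n+1)R}{4}}$ lies in $\mathcal{K}(\hat{\mathcal{E}}_{\Theta+\frac{(2n+1)R}{4}})$ by \cite{Trout}, so $\mathcal{D}_{\Theta+\frac{(2n+1)R}{4}}$ already has module-compact resolvent; moreover $C^*(M)/C^*_M(N)$, regarded as a Hilbert module over itself, satisfies $\mathcal{K}(C^*(M)/C^*_M(N))=C^*(M)/C^*_M(N)$, so $\varpi$ takes values in $\mathcal{K}(C^*(M)/C^*_M(N))$, and Remark \ref{rem:regular.homomorphism} (via \cite{Lance}, Prop.\ 4.7) gives $(\mathcal{D}_{\Theta+\frac{(2n+1)R}{4}}-z)^{-1}\otimes_\varpi 1\in\mathcal{K}(\hat{\mathcal{E}}_n)$. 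Thus each $(D_{n,\pm},H_{n,\pm},\theta_{n,\pm})$ is an abstract supercharge over $\hat{\mathcal{E}}_n$.

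Next I would check assumptions \ref{assum.1}--\ref{assum.2} of Proposition \ref{prop:LL.iteration}. For \ref{assum.1}: Lemma \ref{lem:domain.module} with $\Theta\mapsto\Theta+\tfrac{(2n+1)R}{4}$ gives $\hat{\mathcal{E}}_{\Theta+\frac{(2n+1)R}{4}}=\mathcal{E}_{\Theta+\frac{nR}{2}}\oplus\mathcal{E}_{\Theta+\frac{(n+1)R}{2}}$, so after tensoring with $C^*(M)/C^*_M(N)$ the negatively graded summand of $\hat{\mathcal{E}}_n$ coincides with the positively graded summand of $\hat{\mathcal{E}}_{n+1}$, and likewise $H_{n,-}=\varpi(\mathcal{H}_{\Theta+\frac{(n+1)R}{2}})=H_{n+1,+}$. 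For \ref{assum.2}: from $\theta_{n,\pm}=\mp b\cdot 1$ we read off $b_+=-b$ and $b_-=b$, so (taking $b>0$; for $b<0$ one has $b_+>0$ and $b_+-b_->0$, the symmetric situation of Remark \ref{rem:LL.iteration.opposite.sign}) $b_->0$ and $\Delta=b_--b_+=2b>0$.

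Finally, assuming $\varpi_*\mathrm{Ind}(D)\neq 0$, I would obtain assumption \ref{assum.3} as follows. The generalized Fredholm index of Eq.\ \eqref{eqn:coarse.index.boundary} is natural with respect to $\varpi$ --- it is the image under the boundary map of \eqref{eqn:SES.Hilbert.module}, and $\varpi$ induces a morphism of these extensions --- so $\mathrm{Ind}(D_n)=\varpi_*\,\mathrm{Ind}(\mathcal{D}_{\Theta+\frac{(2n+1)R}{4}})$. Since $H^2(M)=0$ the twisting line bundle is topologically trivial, so rescaling the connection $A\rightsquigarrow tA$ gives a norm-continuous path of bounded transforms whose images in $\mathcal{B}(\mathscr{H}_{C^*(M)})/\mathcal{K}(\mathscr{H}_{C^*(M)})$ are unitary throughout; hence $\mathrm{Ind}(\mathcal{D}_{\Theta'})$ is independent of $\Theta'$ and equals $\mathrm{Ind}(D)$, the coarse index of the spin Dirac operator. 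Therefore $\mathrm{Ind}(D_n)=\varpi_*\mathrm{Ind}(D)\neq 0$ for every $n$. The main obstacle I anticipate is the domain bookkeeping for regular operators through the three operations of squaring, adding the bounded multiplication ``shifts'', and applying $\varpi$ --- in particular verifying that $\varpi$ still lands in the compact operators of the module over the quotient algebra, which is precisely what keeps $D_n$ of compact resolvent; the index-naturality step is conceptually routine once the homotopy invariance above is in hand.
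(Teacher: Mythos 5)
Your proposal follows the same route as the paper's proof: pass the concrete supercharge of Corollary \ref{cor:ASS.surface} (applied with $\Theta\mapsto\Theta+\tfrac{nR}{2}$, which is precisely Eq.\ \eqref{eqn:bootstrap1} in the perturbed setting) through the quotient $\varpi$ via interior tensor product, use the computation underlying Lemma \ref{lem:relative_coarse} to see that the shift operators become the constants $\mp b$, and then read off assumptions \ref{assum.1}--\ref{assum.3}, with the $b<0$ case handled via Remark \ref{rem:LL.iteration.opposite.sign} and $\mathrm{Ind}(D_n)=\varpi_*\mathrm{Ind}(D)$ coming from $\Theta$-independence of the coarse index. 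Your elaborations --- the compact-resolvent check via $\mathcal{K}(B)\cong B$ and Lance Prop.\ 4.7, the connection-rescaling homotopy for index invariance, and the explicit verification of $\mathcal{E}_{n+1,+}=\mathcal{E}_{n,-}$ from Lemma \ref{lem:domain.module} --- are correct and merely fill in steps the paper leaves implicit.
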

\begin{proof}
First, suppose $b>0$.
The basic relation Eq.\ \eqref{eqn:bootstrap1} in this perturbed setting is
\begin{equation*}
\mathcal{D}_{\Theta+\frac{(2n+1)R}{4}}^2=
\begin{pmatrix} \mathcal{H}_{\Theta+\frac{nR}{2}}-(b+\Theta_{\rm pert}+\frac{nR}{2}) & 0 \\ 0 & \mathcal{H}_{\Theta+\frac{(n+1)R}{2}}+(b+\Theta_{\rm pert}+\frac{(n+1)R}{2})\end{pmatrix}.
\end{equation*}
By Lemma \ref{lem:relative_coarse}, 
\begin{align*}
D_n^2\equiv\varpi \Big( \mathcal{D}_{\Theta+\frac{(2n+1)R}{4}}\Big)^2 &=\begin{pmatrix} \varpi \big( \mathcal{H}_{\Theta + \frac{nR}{2}} \Big) -b & 0 \\ 0 & \varpi\big( \mathcal{H}_{\Theta+\frac{(n+1)R}{2}}\big)+b\end{pmatrix}\\
&\equiv\begin{pmatrix}H_{n,+}-b & 0 \\ 0 & H_{n,-}+b\end{pmatrix}.
\end{align*}
This shows that $(D_{n,\pm},H_{n,\pm},\mp b)$ is a sequence of abstract supercharges over $\hat{\mathcal{E}}_{n}$. Furthermore, the assumptions \ref{assum.1}--\ref{assum.2} of Prop.\ \ref{prop:LL.iteration} are satisfied. As for assumption \ref{assum.3}, the coarse index ${\rm Ind}(D_{\Theta})$ does not depend on $\Theta$, hence ${\rm Ind}(D_n) = {\rm Ind}(D)$ for all $n\in\NN$, and similarly for the image under $\varpi_*$.

For $b<0$, we use the basic relation Eq.\ \eqref{eqn:bootstrap2} instead, to get a sequence of abstract supercharges satisfying the assumptions of Prop.\ \ref{prop:LL.iteration} with the sign change described in Remark \ref{rem:LL.iteration.opposite.sign}.
\end{proof}

\begin{rem}
Even if $R$ vanishes as ${\rm dist}(x,N) \to \infty$, there is no guarantee that the ``quotient supercharge'' becomes flat, that is, $H_{0,+}=\varpi(\mathcal{H}_{\Theta }) \neq \varpi(\mathcal{H}_{\Theta+\frac{R}{2}})=H_{0,-}$. This is because changing a field strength from $\Theta$ to $\Theta+\frac{R}{2}$ requires changing the connection $1$-form $A$, and so $H_{\Theta+\frac{R}{2}}-H_\Theta$ is not simply a $C_\flat^\infty(M)$ function. The best we can do is to construct a sequence of quotient supercharges as in Theorem \ref{thm:ASS.relative}.
\end{rem}

\subsection{Indices of Landau levels}\label{sec:indices.examples}
\subsubsection{Constant field: isolated lowest Landau level}
\begin{prop}\label{prop:LLL.concrete} 
Let $M$ be a noncompact complete Riemannian spin surface, such that ${\rm Ind}(D) \neq 0$ holds. Then, for constant magnetic field strength $\Theta=b\in\RR\setminus\{0\}$, with  
\begin{equation}
|b|>-\frac{1}{2} \inf_{x \in M} R(x),\label{eqn:LLL.field.constraint}
\end{equation}
the bottom of $\sigma(H_b)$ is an isolated point $|b|$, the lowest Landau level (LLL). Furthermore, the class of the spectral projection for the LLL is nontrivial in $K_0(C^*(M))$, with a sign change when $b$ changes sign.
\end{prop}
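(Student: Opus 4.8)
The plan is to obtain this as a direct application of the abstract supercharge machinery of Section~\ref{sec:generalized.Fredholm}: feed the constant-field Dirac/Landau supercharge of Corollary~\ref{cor:ASS.surface} into Proposition~\ref{prop:LLL}, after noting that the field constraint \eqref{eqn:LLL.field.constraint} is exactly what makes the relevant shift parameter strictly positive.

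First suppose $b>0$. Corollary~\ref{cor:ASS.surface} with $\Theta=b$ produces an abstract supercharge $(D_\pm,H_\pm,\theta_\pm)$ over $\hat{\mathcal{E}}_{b+\frac{R}{4}}$ with $D_\pm=(\mathcal{D}_{b+\frac{R}{4}})_\pm$, $H_+=\mathcal{H}_b$, $H_-=\mathcal{H}_{b+\frac{R}{2}}$, $\theta_+=-b$ and $\theta_-=b+\frac{R}{2}$; the latter is a bounded self-adjoint operator on $\mathcal{E}_{b+\frac{R}{2}}$ by Lemma~\ref{lem:function.module}, as $R\in C_\flat^\infty(M)$. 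As invoked in the proof of Theorem~\ref{thm:ASS.relative}, the coarse index is unchanged by the twisting connection, so $\mathrm{Ind}(D_{b+\frac{R}{4}})=\mathrm{Ind}(D)\neq 0$ by hypothesis. Writing $\theta_+=b_+\cdot 1$ with $b_+:=-b\le 0$ and noting $\theta_-=b+\frac{R}{2}\geq b_-\cdot 1$ with $b_-:=b+\frac12\inf_{x\in M}R(x)$, the constraint \eqref{eqn:LLL.field.constraint} says precisely that $b_-=|b|-\big(-\tfrac12\inf_{x\in M}R(x)\big)>0$. Hence the first case of Proposition~\ref{prop:LLL} applies: the bottom of $\sigma(\mathcal{H}_b)$ is the isolated point $-b_+=b=|b|$, and the $K_0(C^*(M))$-class of its spectral projection equals $\mathrm{Ind}(D)$, which is nonzero.

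It remains to transport these conclusions to the operator $H_b$ on $L^2(M)$. By Lemma~\ref{lem:same.spec} we have $\sigma(\mathcal{H}_b)=\sigma(H_b)$, so $|b|$ is the isolated bottom of $\sigma(H_b)$; being isolated, the LLL projection $P_{|b|}(H_b)$ equals $\chi(H_b)$ for a compactly supported $\chi$ and so lies in $C^*(M)$. Unwinding the Morita isomorphism $K_0(\mathcal{K}(\mathscr{H}_{C^*(M)}))\cong K_0(C^*(M))$ together with the identification (from the proof of Lemma~\ref{lem:same.spec}, cf.\ \cite{Trout}, Prop.~3.1) of the functional calculus of $\mathcal{H}_b$ with the restriction to $\mathcal{E}_b$ of that of $H_b$, the class furnished by Proposition~\ref{prop:LLL} is precisely $[P_{|b|}(H_b)]\in K_0(C^*(M))$.

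For $b<0$ I would run the mirror argument, using the basic relation \eqref{eqn:bootstrap2} in place of \eqref{eqn:bootstrap1} with $\Theta=b$: this gives an abstract supercharge with $H_-=\mathcal{H}_b$, $\theta_-=b\cdot 1$ a non-positive constant, and $\theta_+=-b+\frac{R}{2}\geq\big(|b|+\tfrac12\inf_{x\in M}R(x)\big)\cdot 1$, which is strictly positive by \eqref{eqn:LLL.field.constraint}. The second (``resp.'') case of Proposition~\ref{prop:LLL} then gives that $|b|$ is again the isolated bottom of $\sigma(\mathcal{H}_b)=\sigma(H_b)$, now with LLL-projection class $-\mathrm{Ind}(D)$; transporting as above, $[P_{|b|}(H_b)]$ changes sign relative to the $b>0$ case, as claimed. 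The only step that needs genuine care is the Morita bookkeeping in the third paragraph --- identifying the index class produced by Proposition~\ref{prop:LLL} with the honest $K_0(C^*(M))$-class of $P_{|b|}(H_b)$; everything else is substitution into machinery already available, with \eqref{eqn:LLL.field.constraint} doing exactly the work of forcing $b_\mp>0$.
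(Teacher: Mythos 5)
Your proof is correct and follows essentially the same route as the paper's: invoke Corollary~\ref{cor:ASS.surface} to produce the abstract supercharge, verify that the field constraint \eqref{eqn:LLL.field.constraint} makes the relevant shift parameter strictly positive, apply Proposition~\ref{prop:LLL}, and transport to $H_b$ via Lemma~\ref{lem:same.spec}, with the $b<0$ case handled by the mirror relation \eqref{eqn:bootstrap2}. Your extra paragraph unpacking the Morita/functional-calculus bookkeeping (matching the class from Proposition~\ref{prop:LLL} with the honest class $[P_{|b|}(H_b)]\in K_0(C^*(M))$) spells out a step the paper leaves implicit, but introduces no new idea.
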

\begin{proof}
Set $\kappa := \inf_{x \in M} R(x)$. First suppose $b=|b|>-\frac{\kappa }{2}$. Corollary \ref{cor:ASS.surface} says that $(D_\pm,H_\pm,\theta_\pm)$ with $D_\pm=\left(\mathcal{D}_{b+\frac{R}{4}}\right)_\pm$, $H_+=\mathcal{H}_b$, $H_-=\mathcal{H}_{b+\frac{R}{2}}$, $\theta_+ =-b $ and $\theta_- = b + \frac{R}{2} >0$ is an abstract supercharge over $\hat{\mathcal{E}}_{b+\frac{R}{4}}$. Hence Prop.\ \ref{prop:LLL} applies to give $b$ as the LLL of $H_+$, which is also that of $H_b$ by Lemma \ref{lem:same.spec}. The $b<0$ case is similar.
\end{proof}

\begin{example}
When $M$ is the Euclidean or hyperbolic plane, the coarse Baum--Connes conjecture is verified, so that $\mathrm{Ind}(D)$ is a generator of $K_0(C^*(M))\cong\ZZ$. Prop.\ \ref{prop:LLL} says that the LLL spectral projection realizes this index class, as was found in \cite{LT}.
\end{example}

\begin{rem}\label{rem:flat.LLL}
For connected $M$, if the LLL spectral projection is non-trivial in $K_0(C^*(M))$, it must be infinitely degenerate (and is said to be a \emph{flat band} of essential spectrum), due to Lemma \ref{lem:finite.projection.trivial} below. Furthermore, this flatness is independent of the scalar curvature $R$, beyond the constraint \eqref{eqn:LLL.field.constraint} needed to spectrally isolate $|b|$. 
\end{rem}

\begin{lem}\label{lem:finite.projection.trivial}
Let $M$ be a connected, noncompact, complete Riemannian manifold.
If $P\in C^*(M)$ is a finite-rank projection, then $[P]=0$ in $K_0(C^*(M))$. 
\end{lem}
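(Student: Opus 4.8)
The goal is to show that any finite-rank projection $P \in C^*(M)$ is trivial in $K_0(C^*(M))$ when $M$ is connected, noncompact, and complete. The key structural fact is that $C^*(M)$ contains a copy of the compact operators $\mathcal{K}(L^2(M))$ as an ideal — indeed $\mathcal{K}(L^2(M)) \subset C^*(M)$ since finite-rank operators on $L^2(M)$ are finite-propagation and locally compact — and a finite-rank projection $P$ on $L^2(M)$ lies in this ideal. So the strategy is to exhibit, inside $C^*(M)$, an infinite family of mutually orthogonal projections each Murray--von Neumann equivalent to $P$, whose partial sums stay in $C^*(M)$; then an Eilenberg-swindle-type argument forces $[P] = 0$. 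Concretely, I would choose a sequence of ``translated'' copies of $P$: pick an exhaustion of $M$ by metric balls, and using noncompactness and completeness, select points $x_n \to \infty$ spread out so that small balls $B(x_n, \varepsilon)$ are pairwise disjoint, then transport $P$ (which I may assume, up to stable equivalence and a compact perturbation, to be supported near a single point $x_0$) to projections $P_n$ supported near $x_n$ via partial isometries of finite propagation. The operators $\sum_{n} P_n$ and $\sum_n V_n$ (with $V_n$ the connecting partial isometries) then lie in $C^*(M)$ because they are norm-limits of finite-propagation locally compact operators with uniformly bounded propagation on each bounded piece — here one must be a little careful that the infinite sum is still locally compact and has a well-defined (block-diagonal) structure, which it does since the supports march off to infinity.

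The cleanest route, avoiding ad hoc constructions, is to invoke that $C^*(M)$ is \emph{stable} — i.e., $C^*(M) \cong C^*(M) \otimes \mathcal{K}$ — which holds for Roe algebras of noncompact spaces (this is standard; see \cite{WillettYu}). Stability immediately gives that the class of any projection coming from a single ``coordinate'' $\mathcal{K}$-block can be absorbed, but more to the point, for a stable $C^*$-algebra $A$ one has that $\mathcal{K}(L^2(M)) \subset A$ sits inside in such a way that its $K_0$ maps to zero: the inclusion $\mathcal{K} \hookrightarrow A$ factors, up to homotopy, through an infinite direct sum construction that kills $K_0(\mathcal{K}) = \ZZ$. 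Even more directly: in a stable $C^*$-algebra, $[p] + [p] = [p]$ for suitable representatives (one can add infinitely many orthogonal copies), hence $[p] = 0$ for any projection $p$ in a corner isomorphic to a matrix algebra over $\mathbb{C}$. Since $P$ lies in $\mathcal{K}(L^2(M)) \subset C^*(M)$ and $\mathcal{K}$ itself, viewed inside the stable algebra $C^*(M)$, has its $K_0$ killed, we get $[P] = 0$ in $K_0(C^*(M))$.

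The main obstacle — and the place where connectedness and noncompactness genuinely enter — is verifying that the translated copies $P_n$ and the connecting partial isometries $V_n$ really do assemble into elements of $C^*(M)$, i.e.\ that one can perform the swindle \emph{coarsely}. Connectedness is used to join $x_0$ to each $x_n$ by a path along which a finite-propagation partial isometry can be built (propagation bounded by the path length, which is finite for each fixed $n$); noncompactness guarantees the sequence $x_n$ can be pushed to infinity so the sum is locally compact and the summands' supports are eventually disjoint from any fixed bounded set. I would present this by first reducing to $P$ supported in a small ball (harmless: finite-rank projections supported in a fixed compact set are cofinal, and any finite-rank $P$ is a compact perturbation — in fact norm-close — of one with small support, and homotopic to it through finite-rank projections, hence $K_0$-equivalent), then citing stability of $C^*(M)$ to finish. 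If one prefers to be self-contained, spell out the swindle: with $Q := \sum_{n \geq 0} P_n \in C^*(M)$ and $Q' := \sum_{n \geq 1} P_n \in C^*(M)$, the partial isometries $V_n$ assemble to $W \in C^*(M)$ with $W^*W = Q$, $WW^* = Q'$, so $[Q] = [Q']$ in $K_0$; but $[Q] = [P] + [Q']$ (orthogonality), whence $[P] = 0$.
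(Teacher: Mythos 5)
Your underlying idea — an Eilenberg swindle using translated copies of $P$ marching off to infinity — is exactly what makes the statement true, and it is in fact the mechanism behind the paper's own proof. The paper packages the same swindle differently: it first observes that $P$ lies in the localized Roe algebra $C^*_M(K) = \mathcal{K}$ for a compact $K$, then factors the inclusion $C^*_M(K) \hookrightarrow C^*(M)$ through $C^*_M(K \cup L)$ where $L$ is a geodesic half-line (here is where completeness and noncompactness enter), and cites the triviality of $K_\bullet(C^*(L))$ from \cite{HRY}, which is itself proved by a swindle along $[0,\infty)$.

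Two points in your write-up need repair.

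First, the aside about stability is wrong as stated. Stability of a $C^*$-algebra $A$ does \emph{not} imply that the inclusion of $\mathcal{K}$ as an ideal kills $K_0$; the algebra $\mathcal{K}$ itself is stable, yet $K_0(\mathcal{K}) \cong \ZZ$ with rank-one projections generating. Likewise ``$[p]+[p]=[p]$ in any stable $C^*$-algebra'' is false; stability gives enough room to form orthogonal copies inside matrix amplifications, not an infinite orthogonal sum inside the algebra. What you actually need is the swindle, which is a statement about the specific coarse structure of $M$, not an abstract consequence of stability. This paragraph should simply be dropped.

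Second, and this is the real content, you need to justify why $W = \sum_n V_n$ lies in $C^*(M)$. Your phrase ``uniformly bounded propagation on each bounded piece'' doesn't quite do it: the partial sums $\sum_{n\leq N} V_n$ do \emph{not} converge to $W$ in norm (the tail always has norm $1$), so $W$ must itself be a finite-propagation (or a genuine norm limit of finite-propagation) locally compact operator. That forces the distances $d(x_{n-1}, x_n)$ to be uniformly bounded. The clean way to arrange this, and to see precisely where completeness and noncompactness are used, is to choose the $x_n$ along a geodesic ray $\gamma : [0,\infty) \to M$ at unit spacing; such a ray exists by completeness and noncompactness, and then each $V_n$ can be taken with propagation $\leq 1 + 2\varepsilon$, so $W \in \CC[M] \subset C^*(M)$. (Connectedness is not actually needed once you have a ray; the paper does not assume it.) Once this is fixed, your final paragraph gives a correct, self-contained swindle essentially equivalent to the paper's.
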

\begin{proof}
Without loss of generality, suppose $P=|\psi\rangle\langle\psi|$ is a rank-1 projection onto the span of $\psi\in L^2(M)$. 
Let $(\psi_n)_{n \in \NN}$ be a sequence of compactly supported functions such that $\psi_n \to \psi$ in $L^2(M)$. 
Then $P_n = |\psi_n\rangle\langle\psi_n|$ has finite propagation and is compact, hence contained in $C^*(M)$. 
Moreover, we have $P_n \to P$ in operator norm, hence also $P \in C^*(M)$.

Now, let $K \subset M$ be any compact subset. 
Then each $P_n$ has support near $K$ (i.e., support within a ball of finite radius around $K$, not necessarily uniformly in $n$).
Therefore each $P_n$ and consequently also $P$ is contained in the localized Roe algebra at $K$, $C^*_M(K) \subset C^*(M)$.
Hence the $K$-theory class $[P] \in K_0(C^*(M))$ is contained in the image of the map $K_0(C^*_M(K)) \to K_0(C^*(M))$.

However, this map is the zero map: As $M$ is geodesically complete, for each point $x \in M$, there exists a \emph{half-line} starting at $x$, i.e., a geodesic $\gamma : [0, \infty) \to M$ with $\gamma(0) = x$ such that the Riemannian distance $d(\gamma(s), \gamma(t)) = |t-s|$ for all $t, s \geq 0$.
Consequently, $\gamma$ provides an isometric embedding of $[0, \infty)$ into $M$; let $L$ denote its image. 
Therefore, the inclusion map $C^*_M(K) \to C^*(M)$ factors through $C^*_M(K \cup L)$, whose $K$-theory is isomorphic to that of $C^*(K \cup L)$.
But $K$ is compact, so $K\cup L$ and the half-line $L$ have Roe algebras with isomorphic $K$-theory, which is known to be trivial for the latter (\cite{HRY} \S 7 Prop.\ 1).
Consequently, the map $K_0(C^*_M(K)) \to K_0(C^*(M))$ factors through zero.
\end{proof}

\begin{rem}\label{rem:localized.is.compact}
The localized Roe algebra $C^*_M(K)$ is the ideal of compact operators $\mathcal{K}$, see pp.\ 22 of \cite{Roebook}. It may be shown, by a more abstract argument, that $K_0(C^*_M(K))=K_0(\mathcal{K}) \to K_0(C^*(M))$ is the zero map, see pp.\ 23 of \cite{Roebook}.
\end{rem}

\subsubsection{Constant field and flat surface: isolated higher Landau levels}
\begin{prop}\label{prop:Euclidean.LL}
Let $M$ be a noncompact complete Riemannian spin surface. 
 Suppose $M$ has vanishing scalar curvature, $R=0$, i.e., $M$ is either $S^1 \times \RR$, or $\RR^2$. Then once $b\neq 0$, the Landau operator $H_b^{(M)}$ has spectrum the Landau levels $(2\NN+1)|b|$.
Furthermore, the class of the $(2n+1)|b|$ spectral projection in $K_0(C^*(M))$ is independent of $n\in \NN$.
\end{prop}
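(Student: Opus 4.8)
The plan is to exploit that vanishing scalar curvature makes the abstract supercharge of Corollary \ref{cor:ASS.surface} \emph{flat} with constant shift parameters, run the ``ladder'' argument of Proposition \ref{prop:LL.iteration}, and then supply by hand the one ingredient the coarse index cannot always furnish here, namely that the lowest Landau level is attained. Concretely, set $R = 0$ and $\Theta = b$ in Corollary \ref{cor:ASS.surface}: since $R/2 = 0$ we get $H_+ = H_- = \mathcal{H}_b$ and $\theta_\pm = \mp b$, so $(D_\pm, \mathcal{H}_b, \mp b)$ with $D_\pm = (\mathcal{D}_b)_\pm$ is a flat abstract supercharge with constant shift parameters; equivalently $D_b^2 = \mathrm{diag}(H_b - b,\, H_b + b)$ by Eq.\ \eqref{eqn:bootstrap.symmetric} and Lemma \ref{lem:same.spec}. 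Assume $b > 0$ (the case $b < 0$ being entirely symmetric, interchanging the two graded summands and using Remark \ref{rem:LL.iteration.opposite.sign}). By the Example following Proposition \ref{prop:LL.iteration}, the constant sequence $(D_{n,\pm}, H_{n,\pm}, \theta_{n,\pm}) := (D_\pm, \mathcal{H}_b, \mp b)$, $n \in \NN$, satisfies assumptions \ref{assum.1}--\ref{assum.2} of Proposition \ref{prop:LL.iteration} with $b_+ = -b$, $b_- = b$, $\Delta = 2b$, so its first part (which does not use assumption \ref{assum.3}) already gives $\sigma(H_b) \subseteq \Delta \cdot \NN - b_+ = (2\NN+1)b$, together with the equivalence that $(2m+1)b \in \sigma(H_b)$ iff the LLL $b$ lies in $\sigma(H_b)$.

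Next I would check that the LLL is attained, $b \in \sigma(H_b)$. Since $D_+ D_+^* = H_b + b \geq b > 0$ is invertible we have $\ker D_b = \ker D_+$, and $D_+ x = 0 \iff (H_b - b)x = 0$; hence it suffices to exhibit a nonzero element of $\ker D_b$ (which then shows $b$ is even an eigenvalue). For $M = \RR^2$ this is exactly the computation recalled in Section \ref{sec:Landau.index}; for $M = S^1 \times \RR$ the same argument applies verbatim with a Fourier \emph{series} in place of the Fourier transform in the circle direction, each mode turning $D_b$ into a scalar multiple of $D_{\rm SHO}$ after an $x$-translation and so contributing a one-dimensional kernel spanned by a normalizable Gaussian. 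This is the step I expect to be the crux: on $M = S^1 \times \RR$ one \emph{cannot} instead invoke assumption \ref{assum.3} of Proposition \ref{prop:LL.iteration}, because $S^1 \times \RR$ is coarsely equivalent to $\RR$ and $K_0(C^*(\RR)) = 0$ (Mayer--Vietoris from the half-line, whose Roe $K$-theory is trivial), so $\mathrm{Ind}(D)$ vanishes there; the explicit Fourier-mode argument is what bridges the gap from the inclusion $\sigma(H_b) \subseteq (2\NN+1)|b|$ to the equality $\sigma(H_b) = (2\NN+1)|b|$. (For $M = \RR^2$ one could, alternatively, quote the Example after Proposition \ref{prop:LLL.concrete}, where $\mathrm{Ind}(D)$ generates $K_0(C^*(\RR^2)) \cong \ZZ$ and assumption \ref{assum.3} does hold.)

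Finally, for the $n$-independence of the spectral-projection classes, I would apply Proposition \ref{prop:MvN} to the flat supercharge $(D_\pm, \mathcal{H}_b, \mp b)$ over $\hat{\mathcal{E}}_b = \mathcal{E}_b \oplus \mathcal{E}_b$ (Lemma \ref{lem:domain.module} with $R = 0$), noting $D_+^* D_+ = \mathcal{H}_b - b$ and $D_+ D_+^* = \mathcal{H}_b + b$. For each $m \geq 1$ pick a compact interval $I \subset (0,\infty)$ isolating $\{2mb\}$ in $\sigma(\mathcal{H}_b - b) = 2b\NN$; then $P_I(\mathcal{H}_b - b) = P_{(2m+1)b}(\mathcal{H}_b)$ while $P_I(\mathcal{H}_b + b) = P_{(2m-1)b}(\mathcal{H}_b)$, so Proposition \ref{prop:MvN} yields $[P_{(2m+1)|b|}(H_b)] = [P_{(2m-1)|b|}(H_b)]$ in $K_0(C^*(M))$ (identifying $\mathcal{H}_b$ with $H_b$ via Lemma \ref{lem:same.spec} and the spectral projections with their stabilizations as in Eq.\ \eqref{DefinitionCoarseIndex}). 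Induction on $m$ shows the classes $[P_{(2n+1)|b|}(H_b)]$, $n \in \NN$, all coincide, completing the proof.
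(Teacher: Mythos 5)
Your proposal is correct and runs on the same supercharge machinery (Corollary \ref{cor:ASS.surface}, Proposition \ref{prop:LL.iteration}, Lemma \ref{lem:same.spec}, Proposition \ref{prop:MvN}), but you take a genuinely different, and more laborious, route in the step that upgrades the inclusion $\sigma(H_b)\subseteq(2\NN+1)|b|$ to an equality. You observe, correctly, that assumption \ref{assum.3} of Prop.\ \ref{prop:LL.iteration} is unavailable on $S^1\times\RR$ since $K_0(C^*(S^1\times\RR))=0$, and so you propose to exhibit the lowest Landau level by hand via a Fourier-mode computation producing a normalizable Gaussian in $\ker D_b$. That works, but it misses the soft argument that the paper actually uses: because the supercharge sequence here is \emph{constant}, we have $\sigma(H_{n,+})=\sigma(\mathcal{H}_b)=\sigma(H_b)$ for every $n$ by Lemma \ref{lem:same.spec}, and the ``iff'' in the first conclusion of Prop.\ \ref{prop:LL.iteration} collapses to the statement ``$(2m+1)|b|\in\sigma(H_b)$ iff $|b|\in\sigma(H_b)$''. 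Since $\sigma(H_b)$ is nonempty (it is the Hilbert-space spectrum of a self-adjoint operator) and contained in $(2\NN+1)|b|$, some level is attained, hence $|b|$ is, hence all levels are. No explicit eigenfunction construction is needed, nor any appeal to the coarse index for the spectral claim; your Fourier-series computation is a valid but avoidable detour. Your last paragraph, using Prop.\ \ref{prop:MvN} to propagate the $K_0$-class from one Landau level to the next, is exactly the ladder step that Prop.\ \ref{prop:LL.iteration} performs internally, so that part of your argument coincides with the paper's in substance.
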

\begin{proof}
First, suppose $b>0$. Let $(D_\pm,H_\pm,\theta_\pm)=((\mathcal{D}_b)_\pm,\mathcal{H}_b,\mp b)$ be the abstract supercharge given in Corollary \ref{cor:ASS.surface}. The sequence $(D_{n,\pm}, H_{n,\pm}, \theta_{n,\pm}):=(D_\pm,H_\pm,\mp b)$ satisfies the assumptions \ref{assum.1}--\ref{assum.2} of Prop.\ \ref{prop:LL.iteration}.
Thus for all $n\in\NN$, $\sigma(H_{n,+})\subset(2b)\cdot\NN+b=(2\NN+1)b$. Furthermore, for each $m\in\NN$, $(2m+1)b$ is attained in $\sigma(H_{n,+})=\sigma(H_+)$ iff $b$ is attained in $\sigma(H_{n+m,+})=\sigma(H_+)$. Together with Lemma \ref{lem:same.spec}, this shows that the non-empty set $\sigma(H_b)=\sigma(\mathcal{H}_b)=\sigma(H_+)$ must be the entire set $(2\NN+1)b$. Prop.\ \ref{prop:LL.iteration} also shows that each eigenprojection represents the same class $\mathrm{Ind}(D)$ in $K_0(C^*(M))$. The $b<0$ case is similar (see Remark \ref{rem:LL.iteration.opposite.sign}).
\end{proof}

\begin{example}\label{ex:Euclidean.cylinder}
For $M$ the Euclidean plane $\RR^2$, $(2\NN+1)|b|$ are the famous \emph{Landau levels} of $H_b^{(\RR^2)}$. Each Landau level spectral projection represents a generator of $K_0(C^*(\RR^2))\cong\ZZ$.
\end{example}
\begin{example}
For $M$ the cylinder $S^1\times\RR$ with product metric, we may modify the connection $A$ by a flat connection with holonomy $e^{ik}$ around $S^1$, but the resulting Landau operator spectrum remains $(2\NN+1)|b|$, independent of such choices. Although these Landau levels are ``trivial'' due to $K_0(C^*(S^1\times\RR))=0$, we can think of $H_b^{(\RR^2)}$ as a lift to the universal cover, and the indices in $K_0(C^*(\RR^2))$ as the non-trivial ``higher'' indices for the Landau levels of $H_b^{(S^1\times\RR)}$.
\end{example}

\begin{rem}
In \cite{LT}, a similar result was obtained for $M$ the hyperbolic plane, which has constant negative curvature. In this case, only finitely many Landau levels are isolated, with the precise number dependent on the size of $|b|$ (see \cite{Comtet}).
\end{rem}

\subsubsection{Asymptotically constant field and asymptotically flat surface: essentially isolated Landau levels}
Now assume that the magnetic field and scalar curvature are only asymptotically constant and asymptotically zero on $M$. By this, we mean that $R,\Theta\in C_\flat^\infty(M)$ satisfy
\begin{equation*}
\Theta=b+\Theta_{\rm pert},\quad R=R_{\rm pert},\qquad \Theta_{\rm pert}, R_{\rm pert}\in C_0^\infty(M),
\end{equation*}
where $C_0^\infty(M)$ denotes the smooth functions on $M$ vanishing at infinity.

\begin{prop}\label{prop:ess.spec}
Let $M$ be a noncompact complete Riemannian spin surface, which has scalar curvature $R\in C_\flat^\infty(M)$ satisfying $R=R_{\rm pert}\in C_0^\infty(M)$. Suppose the magnetic field $\Theta\in C_\flat^\infty(M)$ has the form $\Theta=b+\Theta_{\rm pert}$, with $0\neq b\in\RR$ and $\Theta_{\rm pert}\in C_0^\infty(M)$. Then $\sigma_{\rm ess}(H_\Theta)\subset(2\NN+1)|b|$, where $\sigma_{\rm ess}$ denotes the essential spectrum. If the Dirac coarse index ${\rm Ind}(D)$ is non-zero, then $\sigma_{\rm ess}(H_\Theta)=(2\NN+1)|b|$.
\end{prop}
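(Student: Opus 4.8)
The plan is to apply Theorem \ref{thm:ASS.relative} with the ``high-curvature region'' $N$ taken to be a single point $p \in M$. Then the localized Roe algebra $C^*_M(\{p\})$ is precisely the ideal $\mathcal{K} = \mathcal{K}(L^2(M))$ of compact operators (Remark \ref{rem:localized.is.compact}), so the delocalized Roe algebra is the quotient $C^*(M)/\mathcal{K}$. The hypotheses of Theorem \ref{thm:ASS.relative} are met: since $M$ is complete and noncompact it contains a geodesic ray emanating from $p$, so every metric ball around $p$ is a proper subset of $M$; and since $R = R_{\rm pert}$ and $\Theta_{\rm pert}$ lie in $C_0^\infty(M)$, both tend to $0$ as $\mathrm{dist}(x,p) \to \infty$. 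We thus obtain a sequence of abstract supercharges $(D_{n,\pm}, H_{n,\pm}, \theta_{n,\pm})$ over $\hat{\mathcal{E}}_n = \hat{\mathcal{E}}_{\Theta + \frac{(2n+1)R}{4}} \otimes_\varpi C^*(M)/\mathcal{K}$, with $H_{0,+} = \varpi(\mathcal{H}_\Theta)$ and constant shifts $\theta_{n,\pm} = \mp b$, satisfying assumptions \ref{assum.1}--\ref{assum.2} of Proposition \ref{prop:LL.iteration}.

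Next, the first part of Proposition \ref{prop:LL.iteration}, which uses only assumptions \ref{assum.1}--\ref{assum.2}, gives $\sigma(H_{n,\pm}) \subset \Delta \cdot \NN - b_+$ with $\Delta = b_- - b_+$; for $b > 0$ this reads $2b\NN + b = (2\NN+1)|b|$, and for $b < 0$ the sign-adjusted version of Remark \ref{rem:LL.iteration.opposite.sign} yields the same set. In particular $\sigma(\varpi(\mathcal{H}_\Theta)) = \sigma(H_{0,+}) \subset (2\NN+1)|b|$. To turn this into a statement about $\sigma_{\rm ess}(H_\Theta)$: since $\mathcal{K} \subset C^*(M)$ and the standard representation $\pi$ is injective, the induced map $\iota \colon C^*(M)/\mathcal{K} \hookrightarrow \mathcal{B}(L^2(M))/\mathcal{K} = \mathcal{Q}(L^2(M))$ is an injective $\ast$-homomorphism. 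By Lemma \ref{lem:same.spec}, $\iota$ identifies the $C_0$-functional calculus of $\varpi(\mathcal{H}_\Theta)$ with the map $\varphi \mapsto \varphi(H_\Theta) + \mathcal{K}$ into the Calkin algebra, whose image has Gelfand spectrum $\sigma_{\rm ess}(H_\Theta)$ by definition of the essential spectrum. As an injective $\ast$-homomorphism preserves the spectrum of a regular operator (Remark \ref{rem:regular.homomorphism}), we conclude $\sigma(\varpi(\mathcal{H}_\Theta)) = \sigma_{\rm ess}(H_\Theta)$, and hence $\sigma_{\rm ess}(H_\Theta) \subset (2\NN+1)|b|$.

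For the reverse inclusion under the assumption ${\rm Ind}(D) \neq 0$, I would first promote it to $\varpi_* {\rm Ind}(D) \neq 0$. Indeed $\mathcal{K} = C^*_M(\{p\})$, and by Lemma \ref{lem:finite.projection.trivial} (applied one connected component at a time) the map $K_0(\mathcal{K}) \to K_0(C^*(M))$ is zero; exactness of the six-term sequence associated to $0 \to \mathcal{K} \to C^*(M) \to C^*(M)/\mathcal{K} \to 0$ then makes $\varpi_* \colon K_0(C^*(M)) \to K_0(C^*(M)/\mathcal{K})$ injective, so ${\rm Ind}(D) \neq 0$ forces $\varpi_* {\rm Ind}(D) \neq 0$. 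By the last clause of Theorem \ref{thm:ASS.relative} this is precisely assumption \ref{assum.3} of Proposition \ref{prop:LL.iteration}, whose full conclusion now gives $\sigma(H_{n,\pm}) = (2\NN+1)|b|$ exactly. Specializing to $n = 0$ and invoking the spectral identification above yields $\sigma_{\rm ess}(H_\Theta) = (2\NN+1)|b|$.

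The routine parts are the instantiation of the earlier theorems and the $b < 0$ bookkeeping. The step I expect to require the most care is the identification $\sigma(\varpi(\mathcal{H}_\Theta)) = \sigma_{\rm ess}(H_\Theta)$: one must check that passing to the module quotient $C^*(M)/\mathcal{K}$ is compatible with passing to the Calkin algebra along an injective $\ast$-homomorphism, and that the induced notion of spectrum for the quotiented regular operator genuinely coincides with the classical essential spectrum. The connectedness hypothesis needed for Lemma \ref{lem:finite.projection.trivial} is a minor wrinkle, handled componentwise.
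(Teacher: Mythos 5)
Your proposal is correct and follows essentially the same argument as the paper: take $N$ to be a single point, identify $C^*_M(N)$ with $\mathcal K$ so that the delocalized spectrum is the essential spectrum, apply Theorem \ref{thm:ASS.relative} and Proposition \ref{prop:LL.iteration}, and use Lemma \ref{lem:finite.projection.trivial} with the six-term exact sequence to get injectivity of $\varpi_*$. The extra care you take in justifying $\sigma(\varpi(\mathcal H_\Theta))=\sigma_{\rm ess}(H_\Theta)$ via the embedding $C^*(M)/\mathcal K\hookrightarrow\mathcal Q(L^2(M))$ and Remark \ref{rem:regular.homomorphism} is a reasonable unpacking of what the paper states tersely.
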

\begin{proof}
Without loss of generality, assume $b>0$. Take $N$ to be any point in $M$, and apply the first part of Theorem \ref{thm:ASS.relative}. This says that the first part of Prop.\ \ref{prop:LL.iteration} holds, i.e.,
\begin{equation*}
\sigma(\varpi(\mathcal{H}_\Theta))\subset (2\NN+1)b,
\end{equation*}  
with $\varpi:C^*(M)\to C^*(M)/C^*_M(N)$ the quotient map.
As mentioned in Remark \ref{rem:localized.is.compact}, $C^*_M(N)$ is isomorphic to the compact operators on $L^2(M)$. Thus $\sigma(\varpi(\cdot))$ is the spectrum modulo compacts, i.e., the essential spectrum.

Now suppose ${\rm Ind}(D)\neq 0$ in $K_0(C^*(M))$. Since $C^*_M(N)$ comprises the compact operators, elements of $K_0(C^*_M(N))$ are represented by finite-rank projections. Thus the inclusion $K_0(C^*_M(N))\to K_0(C^*(M))$ is the zero map by Lemma \ref{lem:finite.projection.trivial}. Therefore, $\varpi_*:K_0(C^*(M))\to K_0(C^*(M)/C^*_M(N))$ is injective, so $\varpi_*\mathrm{Ind}(D)$ remains nonzero. Hence the second part of Theorem \ref{thm:ASS.relative} applies, and the second conclusion of Prop.\ \ref{prop:LL.iteration} holds: $(2\NN+1)b=\sigma(\varpi(\mathcal{H}_\Theta))=\sigma_{\rm ess}(H_\Theta)$. 
\end{proof}

\begin{rem}
Prop.\ \ref{prop:ess.spec} recovers, as a special case, the stability result with respect to $\Theta_{\rm pert}$ for $M$ the Euclidean plane \cite{Iwatsuka}. For the hyperbolic plane, a corresponding stability result \cite{Inahama} for the (finitely many) hyperbolic Landau levels can be deduced with a modified bootstrap method as detailed in \cite{LT}. The stability of the Landau levels against \emph{metric} perturbations $R_{\rm pert}\not\equiv 0$ appears to be a new result.
\end{rem}

\section{Landau operators on helical surfaces}
An instructive example of a Riemannian surface $M$ with scalar curvature \mbox{$R\not\in C_0^\infty(M)$} is the helicoid $X_c$, which will occupy us for the rest of this paper.

\subsection{Embedded helicoids}\label{sec:helicoid.definition}
The \emph{helicoid} $X_c$ with non-zero twisting parameter $c\in\RR$ is defined as the 2D submanifold of Euclidean $\RR^3$,
\begin{equation*}
X_c=\{(r\cos c\phi,r\sin c\phi,\phi)\in\RR^3\,:\,(r,\phi)\in\RR^2\},
\end{equation*}
oriented by $dr\wedge d\phi$, see Fig.\ \ref{fig:helicoids}. When $c>0$ (resp.\ $c<0$), the helicoid is right-handed (resp.\ left-handed). The metric on $X_c$ may be computed to be $dr^2+(1+c^2r^2)d\phi^2$; for instance, see pp.\ 96--97 and pp.\ 206 of \cite{DoCarmo}.

While $X_c$ is diffeomorphic to $\RR^2$, it is not isometric to a Euclidean plane, but only conformal to it. In \emph{isothermal coordinates} $(\rho,\phi)$ defined by
\begin{equation*}
\qquad \rho = \frac{\sinh^{-1}(cr)}{c}, \qquad \text{or}\qquad r = \frac{\sinh(c \rho)}{c},
\end{equation*}
the metric tensor is 
\begin{equation*}
g_{\rho\rho}=g_{\phi\phi}=\cosh^2(c\rho),\qquad g_{\rho\phi}=0=g_{\phi\rho}.
\end{equation*}
The volume form and scalar curvature are
\begin{equation*}
\omega_c=\cosh^2 (c\rho)\,d\rho\wedge d\phi,\qquad R_c(\rho,\phi)=R_c(\rho)=-\frac{2c^2}{\cosh^4(c\rho)}.
\end{equation*}
Note that as $c\rightarrow 0$, the coordinate transformation reduces to $\rho=r$, and the above formulae become those of the Euclidean $x$-$z$ plane. We also have $R_c\in C_\flat^\infty(X_c)$.

\subsubsection{Helicoid Landau operator for constant external field}\label{sec:embedded}
On Euclidean $\RR^3$, let $B=b\,dx\wedge dy$ be the Faraday 2-form for an externally applied magnetic field along the $z$-direction with strength $b\in \RR$, $b\neq 0$. Its restriction to $X_c$ is easily computed to be
\begin{equation}
B|_{X_c}(r,\phi)=bcr\,dr\wedge d\phi=\underbrace{\frac{bcr}{\sqrt{1+c^2r^2}}}_{\Theta_B(r,\phi)}\omega_c=\underbrace{b\tanh(c\rho)}_{\Theta_B(\rho,\phi)}\omega_c.\label{eqn:intrinsic.extrinsic}
\end{equation}
Here, the scalar function $\Theta_B$ is the \emph{intrinsic} field strength on the helicoid, measured against the volume form $\omega_c$ on $X_c$. Note that $\Theta_B\in C_\flat^\infty(X_c)$.

As $\rho\rightarrow\pm\infty$, the field strength becomes intrinsically constant, $\Theta_B\rightarrow \pm b\cdot{\rm sgn}(c)$, whereas at $\rho=0$, we have $\Theta_B\rightarrow 0$. Despite $B$ being a constant field on $\RR^3$, $\Theta_B$ is far from constant on $X_c$, and is more like a ``domain-wall'' magnetic field strength when viewed intrinsically.

\subsection{Half-helicoid and screw dislocation}\label{sec:half-helicoid}

Let $\mathring{\RR}^2$ be the plane with a small disk of small radius $r_0>0$ removed. The \emph{half-helicoid},
\begin{align*}
X_c^+&:=\{(r\cos(c\phi), r\sin(c\phi), \phi)\in\RR^3\,:\,r\geq r_0, \phi\in\RR\}\\
&=\{(r,\phi)\in X_c\,:\,r \geq r_0\},
\end{align*}
is a universal cover of $\mathring{\RR}^2$, embedded as a submanifold of $\RR^3$. \emph{It models a surface with a screw dislocation along the $z$-axis}, see Fig.\ \ref{fig:helicoids}. The boundary $\partial X_c^+$ is a helix parametrized by $\phi$.

\subsubsection{Landau operator on screw dislocated surface}
In the dislocation-free setting, we would have a family of parallel horizontal planes in $\RR^3$, on which the constant external field $B$ is also intrinsically constant (meaning that its restriction to the planes is a constant multiple of the volume form). 
These Euclidean planes' Landau operator $H_b^{(\RR^2)}$ will each have spectrum $(2\NN+1)|b|$, by Prop.\ \ref{prop:Euclidean.LL}. In physics language, such a ``stacking'' of 2D quantum Hall systems gives rise to a \emph{weak} topological insulator in 3D. 

On the screw dislocated surface $X_c^+$, the restriction of $B$ is no longer intrinsically constant, but is instead given by Eq.\ \eqref{eqn:intrinsic.extrinsic},
\begin{equation*}
\Theta_B(\rho,\phi)=b\tanh(c\rho),\;\; \rho \geq \rho_0.
\end{equation*}
Consequently, we should study the Landau operator $H_{\Theta_B}^{(X_c^+)}$, with the \emph{non-constant} intrinsic field strength $\Theta_B$, rather than $H_b^{(X_c^+)}$.

\begin{dfn}\label{dfn:dislocated.Landau}
The screw-dislocated Landau operator is the operator $H_{\Theta_B}^{(X_c^+)}$ on the half-helicoid $X_c^+$, made self-adjoint with either Dirichlet or Neumann boundary conditions.
\end{dfn}

\subsubsection{Auxiliary ``bulk'' Landau operator on $X_c$.}
Note that $\Theta_B$ is an \emph{odd} extension of $\Theta_B|_{X_c^+}$ to a function on $X_c$. It is actually more useful to consider, as a fictitious field strength on $X_c$, a smooth \emph{even} extension $\Theta_{B,{\rm ev}}\in C_\flat^\infty(X_c)$ of $\Theta_B|_{X_c^+}$, meaning that 
\begin{equation} 
\Theta_{B,{\rm ev}}(\rho,\phi)=b\tanh(c|\rho|),\qquad |\rho|>\rho_0,\label{eqn:field.fictitious}
\end{equation}
because then the limits as $\rho\rightarrow \pm\infty$ are both equal to {$b\cdot\mathrm{sgn}(c)$}. The extension in the $|\rho|<\rho_0$ region can be chosen arbitrarily. We shall think of $H_{\Theta_{B,{\rm ev}}}^{(X_c)}$ as the fictitious ``bulk Landau operator'' on the full helicoid $X_c$. Via a kind of \emph{bulk-boundary correspondence}, we will be able to relate the spectrum of $H_{\Theta_{B,{\rm ev}}}^{(X_c)}$ with that of $H_{\Theta_B}^{(X_c^+)}$.

\begin{rem}
Our model, with $X_c^+$ embedded in $\RR^3$ and constant external magnetic vector field in the $z$-direction, formalizes what physicists have in mind when they think of helically propagating states induced by a screw dislocation. In fact, a time-reversal symmetric version of such helical modes has recently been experimentally realized in acoustic topological insulators embedded in three dimensions \cite{Xue,Ye}.

A simpler related model is the Landau operator $H_b^{(X_c)}$ on the helicoid $X_c$ with an intrinsically constant magnetic field $b\cdot\omega_c=b\cdot \cosh^2(c\rho)d\rho\wedge d\phi$. The spectrum for such a model is easier to analyze using the identity Eq.\ \eqref{eqn:bootstrap1}. Indeed, for $c>0$, the operator $H_b^{(X_c)}$ coincides at large $|\rho|$ with our fictitious ``bulk Landau operator'' $H_{\Theta_{B,{\rm ev}}}^{(X_c)}$  However, if we were to realize $H_b^{(X_c)}$ on the embedded surface $X_c\subset\RR^3$, the perpendicular magnetic vector field would rotate wildly at points near to the axis (i.e., at small $|\rho|$), which is rather unrealistic. For large $|\rho|$, the approximately vertical magnetic vector field would also switch directions when $\rho$ is replaced by $-\rho$.
\end{rem}

\subsection{Delocalized index for helicoid Landau operators}\label{sec:delocalized.index}
Because $\Theta_{B,{\rm ev}}$ and $R_c$ are \emph{not} asymptotically constant, Prop.\ \ref{prop:ess.spec} does not apply, and we do not have isolated Landau levels even when considered inside the essential spectrum of $H_{\Theta_B}^{(X_c)}$. Nevertheless, we at least have $R_c$ vanishing away from the helix $\partial X_c^+$, and the field strength $\Theta_{B,{\rm ev}}$ can be considered a ``horizontal perturbation'' of the constant function $\mathrm{sgn}(c)\cdot b$ in the following sense.

\begin{dfn}\label{dfn:h.pert}
A \emph{horizontally perturbed} field strength $\Theta\in C_\flat^\infty(X_c)$ is one which differs from a constant by a $\Theta_{\rm pert}$ with $\lim_{\rho\rightarrow\pm\infty}\Theta_{\rm pert}(\rho,\phi)=0$.
\end{dfn}

We will apply Theorem \ref{thm:ASS.relative} and Prop.\ \ref{prop:LL.iteration} to $M=X_c$ and $N=\partial X_c^+$. Thus, we analyze the localization exact sequence
\begin{equation}
0\longrightarrow C^*_{X_c}(\partial X_c^+)\longrightarrow C^*(X_c)\overset{\varpi}{\longrightarrow}C^*(X_c)/C^*_{X_c}(\partial X_c^+)\longrightarrow 0.\label{eqn:quotient.SES}
\end{equation}
The $K$-theory long exact sequence for \eqref{eqn:quotient.SES} is
\begin{align*}
\cdots \rightarrow \underbrace{K_0(C^*_{X_c}(\partial X_c^+))}_{0}\rightarrow \underbrace{K_0(C^*(X_c))}_{\ZZ} &\overset{\varpi_*}{\rightarrow} K_0(C^*(X_c)/C^*_{X_c}(\partial X_c^+))  \\
&\quad \rightarrow \underbrace{K_1(C^*_{X_c}(\partial X_c^+))}_{\ZZ}\rightarrow \underbrace{K_1	(C^*(X_c))}_{0}\rightarrow \cdots
\end{align*}
so that $K_0(C^*(X_c)/C^*_{X_c}(\partial X_c^+))\cong \ZZ\oplus \ZZ$. Here we used a general result that $K_\bullet(C^*_{X_c}(\partial X_c^+))\cong K_\bullet(C^*(\partial X_c^+))$, and the observation that the helix $\partial X_c^+$ is isometric to a Euclidean line. We also used the fact that $K_0(C^*(X_c))\cong K_0(X_c)\cong\ZZ$, which follows from the coarse Baum--Connes conjecture being verified for $X_c$, since it is non-positively curved and simply-connected (Corollary 7.4 of \cite{HR-coarse}). Moreover, since the Dirac operator $D$ on $X_c$ represents the generator of $K_0(X_c)$, its coarse index ${\rm Ind}(D)$ is also a generator of $K_0(C^*(X_c)) \cong \ZZ$. By the injectivity of $\varpi_*$ obtained by the above exact sequence, we have verified that:
\begin{lem}\label{lem:coarse.index.helicoid}
The quotient coarse index $\varpi_*({\rm Ind}(D))$ of the helicoid Dirac operator is non-zero in $K_0(C^*(X_c)/C^*_{X_c}(\partial X_c^+))$. 
\end{lem}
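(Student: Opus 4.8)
The statement follows from the $K$-theory long exact sequence of the localization extension \eqref{eqn:quotient.SES}, once it is known that $\mathrm{Ind}(D)$ is nonzero in $K_0(C^*(X_c))$ and that $K_0(C^*_{X_c}(\partial X_c^+))$ vanishes. First I would record why $\mathrm{Ind}(D)$ generates $K_0(C^*(X_c))\cong\ZZ$. The surface $X_c$ is diffeomorphic to $\RR^2$ and has non-positive Gaussian curvature, since the scalar curvature is $R_c(\rho)=-2c^2/\cosh^4(c\rho)\le 0$; hence it is a Hadamard surface, and the coarse Baum--Connes conjecture holds for it by Corollary 7.4 of \cite{HR-coarse}. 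Thus the coarse assembly map $\mu_{X_c}\colon K_0(X_c)\to K_0(C^*(X_c))$ is an isomorphism. As $X_c$ is contractible, even-dimensional and spin, its locally finite $K$-homology group $K_0(X_c)\cong\ZZ$ is generated by the fundamental class $[D]$ of the Dirac operator, which $\mu_{X_c}$ sends to $\mathrm{Ind}(D)$ by \eqref{eqn:coarse.assembly}. Consequently $\mathrm{Ind}(D)$ generates $K_0(C^*(X_c))\cong\ZZ$, and in particular $\mathrm{Ind}(D)\neq 0$.

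For the ideal, I would use the general identification $K_\bullet(C^*_{X_c}(\partial X_c^+))\cong K_\bullet(C^*(\partial X_c^+))$, together with the fact that the helix $\partial X_c^+=\{\,r=r_0\,\}$, with the metric induced from $X_c$, is isometric (after rescaling the parameter $\phi$ by $\sqrt{1+c^2 r_0^2}$) to a Euclidean line. Hence $K_0(C^*_{X_c}(\partial X_c^+))\cong K_0(C^*(\RR))=0$; the vanishing of $K_0(C^*(\RR))$ is standard and can, for instance, be read off from a coarse Mayer--Vietoris decomposition $\RR=[0,\infty)\cup(-\infty,0]$ together with the triviality of the $K$-theory of the Roe algebra of a half-line (\cite{HRY} \S 7 Prop.\ 1).

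Finally I would feed this into the relevant segment
\[
K_0\big(C^*_{X_c}(\partial X_c^+)\big)\longrightarrow K_0\big(C^*(X_c)\big)\overset{\varpi_*}{\longrightarrow}K_0\big(C^*(X_c)/C^*_{X_c}(\partial X_c^+)\big)
\]
of the six-term exact sequence of \eqref{eqn:quotient.SES}. Since the left-hand group vanishes, exactness at the middle term forces $\varpi_*$ to be injective, and therefore $\varpi_*(\mathrm{Ind}(D))\neq 0$. I do not expect a genuine obstacle in this lemma: the one substantial ingredient is the non-vanishing of $\mathrm{Ind}(D)$, which rests entirely on the cited coarse Baum--Connes conjecture for Hadamard manifolds and the identification of $[D]$ with a generator of $K_0(X_c)$; the remaining points --- that $\partial X_c^+$ is coarsely a line, that $K_\bullet(C^*_M(N))\cong K_\bullet(C^*(N))$, and that $K_0(C^*(\RR))=0$ --- are routine.
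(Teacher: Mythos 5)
Your proposal is correct and follows essentially the same route as the paper: identify $\mathrm{Ind}(D)$ as a generator of $K_0(C^*(X_c))\cong\ZZ$ via coarse Baum--Connes for the non-positively curved, simply-connected $X_c$, identify $K_\bullet(C^*_{X_c}(\partial X_c^+))\cong K_\bullet(C^*(\RR))$ using the isometry of the helix with a Euclidean line, and conclude injectivity of $\varpi_*$ from the six-term sequence of \eqref{eqn:quotient.SES}. The only cosmetic difference is that you extract only the vanishing of $K_0(C^*_{X_c}(\partial X_c^+))$, which is all that is needed for injectivity of $\varpi_*$, whereas the paper also records $K_1(C^*(X_c))=0$ and $K_1(C^*_{X_c}(\partial X_c^+))\cong\ZZ$ to compute the full group $K_0(C^*(X_c)/C^*_{X_c}(\partial X_c^+))\cong\ZZ\oplus\ZZ$ (this extra information is used again later, but is not needed for the lemma itself).
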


We now determine the spectrum of the helicoid Landau operator considered in the quotient Roe algebra $C^*(X_c)/C^*_{X_c}(\partial X_c^+)$, generalizing Prop.\ \ref{prop:ess.spec}.

\begin{thm}\label{thm:quotient.Landau}
For a horizontally perturbed field $\Theta=b+\Theta_{\rm pert}, b\neq 0$, the spectrum of $\varpi(\mathcal{H}_\Theta^{(X_c)})$ is $(2\NN + 1)|b|$.
Moreover, for each $n\in \NN$, the class 
\begin{equation}
  \left[P_{(2n+1)|b|}\bigl(\varpi(\mathcal{H}_\Theta^{(X_c)})\bigr)\right] ~~\in~~ K_0(C^*(X_c)/C^*_{X_c}(\partial X_c^+))\label{eqn:delocalized.index}
\end{equation}
coincides, up to a possible sign, with the nonzero class  $\varpi_*{\rm Ind}(D)$, where ${\rm Ind}(D)\in K_0(C^*(X_c))$ is the coarse index of the Dirac operator on $X_c$.
\end{thm}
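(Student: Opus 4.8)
The plan is to apply Theorem~\ref{thm:ASS.relative} and Proposition~\ref{prop:LL.iteration} with $M = X_c$ and $N = \partial X_c^+$, using the structural results already assembled. First I would verify the hypotheses of Theorem~\ref{thm:ASS.relative}: any $r$-neighbourhood of the helix $\partial X_c^+$ is a proper subset of $X_c$ (clear from the $(\rho,\phi)$ coordinates, since such a neighbourhood is contained in a strip $|\rho - \rho_0| \leq \text{const}$ up to bounded distortion), and the scalar curvature $R_c(\rho) = -2c^2/\cosh^4(c\rho)$ together with $\Theta_{\rm pert}$ both tend to $0$ as ${\rm dist}(x, N) \to \infty$, i.e.\ as $|\rho| \to \infty$; here I would use Definition~\ref{dfn:h.pert} for $\Theta_{\rm pert}$ and the explicit exponential decay of $R_c$. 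Since we wrote $\Theta = b + \Theta_{\rm pert}$ with $b \neq 0$ constant, Theorem~\ref{thm:ASS.relative} produces the sequence of abstract supercharges $(D_{n,\pm}, H_{n,\pm}, \theta_{n,\pm})$ over $\hat{\mathcal{E}}_n$ satisfying assumptions~\ref{assum.1}--\ref{assum.2} of Proposition~\ref{prop:LL.iteration}, with shift parameters $\theta_{n,\pm} = \mp b \cdot 1$, so $b_- - b_+ = 2|b| > 0$ and $b_- = |b| > 0$ as required.

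Next I would invoke Lemma~\ref{lem:coarse.index.helicoid}: the quotient coarse index $\varpi_*({\rm Ind}(D))$ is non-zero in $K_0(C^*(X_c)/C^*_{X_c}(\partial X_c^+))$. By the last sentence of Theorem~\ref{thm:ASS.relative}, this non-vanishing upgrades assumption~\ref{assum.3} of Proposition~\ref{prop:LL.iteration} to hold for every $n$, since ${\rm Ind}(D_n) = {\rm Ind}(D)$ independently of $n$ and the same holds after applying $\varpi_*$. Proposition~\ref{prop:LL.iteration} then gives $\sigma(H_{n,\pm}) = 2|b| \cdot \NN - b_+ = (2\NN+1)|b|$ for each $n$ (using $-b_+ = |b|$ in the $b>0$ case, and the sign-swapped statement of Remark~\ref{rem:LL.iteration.opposite.sign} in the $b<0$ case). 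Taking $n = 0$ and using $H_{0,+} = \varpi(\mathcal{H}_{\Theta}^{(X_c)})$ (where $\Theta = \Theta + \frac{0 \cdot R}{2}$), this is precisely the claim that $\sigma(\varpi(\mathcal{H}_\Theta^{(X_c)})) = (2\NN+1)|b|$.

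For the $K$-theory statement, Proposition~\ref{prop:LL.iteration} (final clause) identifies the class of the spectral projection for the $m$-th Landau level $\Delta \cdot m - b_+ = (2m+1)|b|$ of $H_{0,+}$ with ${\rm Ind}(D_{m})$ in $K_0\bigl(C^*(X_c)/C^*_{X_c}(\partial X_c^+)\bigr)$, and since ${\rm Ind}(D_m) = \varpi_*({\rm Ind}(D))$ for all $m$ (again by the argument in the proof of Theorem~\ref{thm:ASS.relative}), every Landau level projection represents that single class $\varpi_*{\rm Ind}(D)$, up to the overall sign governed by the sign of $b$ via Remark~\ref{rem:LL.iteration.opposite.sign}. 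Non-triviality is then exactly Lemma~\ref{lem:coarse.index.helicoid}. The main obstacle is not in the $K$-theoretic bootstrap, which is now mechanical, but in cleanly checking that the geometric decay hypotheses of Theorem~\ref{thm:ASS.relative} are met for $N = \partial X_c^+$ --- in particular that ${\rm dist}(x, \partial X_c^+) \to \infty$ is equivalent to $|\rho| \to \infty$ (so that $R_c$ and $\Theta_{\rm pert}$ decay along $N$'s complement), which requires a short comparison of the intrinsic distance on $X_c$ with the $\rho$-coordinate; I would handle this by noting that the helix sits at fixed $\rho = \rho_0$ and that the $g_{\rho\rho} = \cosh^2(c\rho) \geq 1$ metric makes $|\rho - \rho_0|$ a lower bound for the distance to $N$, which suffices for the decay conclusions.
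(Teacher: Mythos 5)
Your proposal reproduces the paper's own proof: apply Theorem~\ref{thm:ASS.relative} with $M=X_c$, $N=\partial X_c^+$, feed the nonvanishing of $\varpi_*\mathrm{Ind}(D)$ (Lemma~\ref{lem:coarse.index.helicoid}) into assumption~\ref{assum.3} of Proposition~\ref{prop:LL.iteration}, and read off both the spectrum and the $K$-theory class. The paper states this in two sentences; you have spelled out the hypothesis-checking, which is fine.

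One small slip in the last paragraph: to deduce that $R_c$ and $\Theta_{\rm pert}$ tend to $0$ as ${\rm dist}(x,\partial X_c^+)\to\infty$, you need that bounded $|\rho|$ forces bounded distance to $N$, i.e.\ an \emph{upper} bound of the form ${\rm dist}(x,\partial X_c^+)\leq F(|\rho|)$ (which holds because the radial path from $(\rho,\phi)$ to $(\rho_0,\phi)$ has length $\tfrac{1}{|c|}\bigl|\sinh(c\rho)-\sinh(c\rho_0)\bigr|$). The \emph{lower} bound $|\rho-\rho_0|\leq {\rm dist}(x,\partial X_c^+)$ that you cite goes the other way; it is the one that shows $r$-neighborhoods of $N$ are proper subsets of $X_c$, not the one that yields the decay hypotheses. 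Both inequalities are elementary, so this does not affect the validity of the plan, but the inference as written points to the wrong half of the comparison.
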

\begin{proof}
The helicoid curvature $R_c$ and a horizontally perturbed field $\Theta$ satisfy the assumptions of Theorem \ref{thm:ASS.relative}, for $N = \partial X_c^+$. This, together with Lemma \ref{lem:coarse.index.helicoid}, means that Prop.\ \ref{prop:LL.iteration} applies, and the result follows immediately.
\end{proof}

\begin{rem} 
Theorem \ref{thm:quotient.Landau} applies, in particular, to $H_{\Theta_{B,{\rm ev}}}^{(X_c)}$.
It makes precise the idea that the Landau levels $(2n+1)|b|$ still have a well-defined ``delocalized index'' $[P_{(2n+1)|b|}(\varpi(\mathcal{H}_\Theta^{(X_c)}))]$,
despite these spectral values generally not being isolated in the full (or even essential) spectrum. In passing to the quotient algebra via $\varpi$, we have ``discarded the spectral data localized near $\partial X_c^+$''.
\end{rem}

The projection $P_{(2n+1)|b|}(\varpi(\mathcal{H}_\Theta^{(X_c)}))$ of Theorem \ref{thm:quotient.Landau} can be obtained by continuous functional calculus as follows. For each $n\in\NN$, let $\varphi_n\in C_0(\RR)$ be a bump function for the $n$-th Landau level, in the sense that
\begin{align}
\varphi_n((2n+1)|b|)&=1,\nonumber\\
{\rm supp}(\varphi_n)&\subset ((2n-1)|b|,(2n+3)|b|). \label{eqn:LL.bump.function}
\end{align}
Observe that $\varphi_n$ may fail to be idempotent only within the gaps between the \mbox{$n$-th} Landau level and the adjacent ones. Since $\varpi(\mathcal{H}_\Theta^{(X_c)})$ only has spectrum at the Landau levels, we may write
\begin{equation*}
P_{(2n+1)|b|}\bigl(\varpi(\mathcal{H}_\Theta^{(X_c)})\bigr)=\varphi_n\bigl(\varpi(\mathcal{H}_\Theta^{(X_c)})\bigr)
\end{equation*}
for any choice of bump function as above.
Note that without passing to the quotient, the operator $\varphi_n(H_\Theta^{X_c})\in C^*(X_c)$ is not yet a projection, due to the possibly non-trivial support of $\varphi_n^2-\varphi_n$ in $\sigma(H_\Theta^{(X_c)})$. Rather, we have
\begin{equation}
\varpi\bigl(\varphi_n(H_\Theta^{(X_c)})\bigr)=\varphi_n\bigl(\varpi(\mathcal{H}_\Theta^{(X_c)})\bigr)=P_{(2n+1)|b|}\bigl(\varpi(\mathcal{H}_\Theta^{(X_c)})\bigr)\in C^*(X_c)/C^*_{X_c}(\partial X_c^+).\label{eqn:quotient.projection.calculus}
\end{equation}

\section{Gaplessness of screw-dislocated Landau operator}
The bulk helicoid Landau operator $H_{\Theta_{B,{\rm ev}}}^{(X_c)}$ will be related to the screw-dislocated Landau operator $H_{\Theta_B}^{(X_c^+)}$ (Definition \ref{dfn:dislocated.Landau}) through a coarse Mayer--Vietoris sequence.

For ease of notation, we will simply write $\Theta$ for $\Theta_{B,{\rm ev}}$ on the helicoid $X_c$, and also $\Theta=\Theta_B=\Theta_B|_{X_c^+}$ on the half-helicoid $X_c^+$.

\subsection{Coarse Mayer--Vietoris}
Consider the partition $X_c=X_c^+\cup X_c^-$ where $X_c^+$ is our half-helicoid surface and 
\begin{equation*}
X_c^-=\{(\rho,\phi)\in X_c\,:\,\rho\leq \rho_0\}.
\end{equation*}
The intersection $X_c^+\cap X_c^-=\partial X_c^+=-\partial X_c^-$ is the helix $\{\rho=\rho_0\}$. The coarse Mayer--Vietoris (MV) sequence \cite{HRY} for this (coarsely excisive) partition has boundary map 
\begin{equation}
\partial_{\rm MV}:K_0(C^*(X_c))\overset{\cong}{\rightarrow} K_1(C^*_{X_c}(\partial X_c^+)),\label{eqn:helicoid.MV}
\end{equation}
which can be computed to be an isomorphism $\ZZ\rightarrow \ZZ$, as follows. As in the proof of Lemma 3.3 of \cite{LT}, we just need the information that $K_0(C^*(X_c))\cong\ZZ$, $K_1(C^*(X_c))\cong 0$ (Baum--Connes), $K_0(C^*(\partial X_c^+))\cong 0$, $K_1(C^*(\partial X_c^+))\cong \ZZ$ (isometry of $\partial X_c^+$ with $\RR$), and the observation that the reflection $\rho\mapsto -\rho$ induces an isomorphism $K_\bullet(C^*(X_c^+))\cong K_\bullet(C^*(X_c^-)), \bullet=0,1$.

The restriction map $r:C^*(X_c)\rightarrow C^*(X_c^+)$ is a homomorphism up to terms in $C^*(X_c^+)$ localized near the boundary $\partial X_c^+$. Thus we have a restriction morphism $\tilde{r}$ to the quotient Roe algebra,
\begin{equation*}
\tilde{r}:C^*(X_c)\rightarrow C^*(X_c^+)/C^*_{X_c^+}(\partial X_c^+),
\end{equation*}
which actually factors through $\varpi$ (defined in Eq.\ \eqref{eqn:quotient.SES}),
\begin{equation}
\tilde{r}:C^*(X_c)\overset{\varpi}{\rightarrow}C^*(X_c)/C^*_{X_c}(\partial X_c^+)\overset{\hat{r}}{\rightarrow} C^*(X_c^+)/C^*_{X_c^+}(\partial X_c^+).\label{eqn:factorisation.quotient}
\end{equation}
There is a short exact sequence
\begin{equation*}
0\rightarrow C^*_{X_c^+}(\partial X_c^+) \rightarrow C^*(X_c^+)\overset{q}{\rightarrow} C^*(X_c^+)/C^*_{X_c^+}(\partial X_c^+)\rightarrow 0,\label{eqn:SES}
\end{equation*}
whose long exact sequence has a connecting map
\begin{equation}
\delta:K_0(C^*(X_c^+)/C^*_{X_c^+}(\partial X_c^+))\rightarrow K_1(C^*_{X_c^+}(\partial X_c^+))\cong K_1(C^*_{X_c}(\partial X_c^+)). \label{eqn:connecting.defn}
\end{equation}
This connecting map and the MV boundary map are related (Prop.\ 1.3 of \cite{LT}), by
\begin{equation}
\partial_{\rm MV}=\delta\circ\tilde{r}_*.\label{eqn:MV.equals.connecting}
\end{equation}

\medskip

Let $\varphi_n$ be a bump function for the $n$-th Landau level, as in Eq.\ \eqref{eqn:LL.bump.function}. The operators $\varphi_n(H_\Theta^{(X_c^+)})$ and $r(\varphi_n(H_{{\Theta}}^{(X_c)}))$ are both elements of $C^*(X_c^+)$, and their difference is contained in the localized Roe algebra $C^*_{X_c^+}(\partial X_c^+)$ (Lemma 1.7 of \cite{LT}). 
So by passing to quotients, we have an equality
\begin{align*}
q\left(\varphi_n(H_\Theta^{(X_c^+)})\right)&=\tilde{r}\Bigl(\underbrace{\varphi_n(H_{{\Theta}}^{(X_c)})}_{\text{non-projection}}\Bigr)\\
&\overset{\eqref{eqn:factorisation.quotient}}{=} \hat{r}\Bigl(\underbrace{\varpi(\varphi_n(H_{{\Theta}}^{(X_c)}))}_{\rm projection\;by\;Eq.\ \eqref{eqn:quotient.projection.calculus}}\Bigr)\;\;\in C^*(X_c^+)/C^*_{X_c^+}(\partial X_c^+) 
\end{align*}
from which we deduce that the left hand side is in fact a projection.
Now 
\begin{align}
\delta[q(\varphi_n(H_\Theta^{(X_c^+)})]&=\delta\left[\tilde{r}\left(\varphi_n(H_{{\Theta}}^{(X_c)})\right)\right]\nonumber\\
&=\delta\left(\hat{r}_*\left[\varpi\left(\varphi_n(H_\Theta^{(X_c)})\right)\right]\right)\nonumber\\
&=\delta\circ\hat{r}_*\left[P_{(2n+1)|b|}\left(\varpi(\mathcal{H}_\Theta^{(X_c)})\right)\right] & &({\rm Eq.}\, \eqref{eqn:quotient.projection.calculus}) \nonumber\\
&=\pm \delta\circ\hat{r}_*\circ\varpi_*({\rm Ind}(D)) & &({\rm Theorem}\; \ref{thm:quotient.Landau})\nonumber\\
&=\pm \delta\circ\tilde{r}_*({\rm Ind}(D))\nonumber\\
&=\pm \partial_{\rm MV}({\rm Ind}(D)) & &({\rm Eq.}\,\eqref{eqn:MV.equals.connecting})\nonumber\\
&\neq 0 \in K_1(C^*_{X_c^+}(\partial X_c^+)) & & ({\rm Eq.}\,\eqref{eqn:helicoid.MV}),\label{eqn:nontrivial.connecting.map}
\end{align}
where the non-vanishing in the last line is due to ${\rm Ind}(D)$ being a generator of $K_0(C^*(X_c))$.

\subsection{Gap-filling argument}\label{sec:gap-filling}
Now $\varphi_n(H_\Theta^{(X_c^+)})$ cannot be a projection, otherwise exactness of the long exact sequence (i.e., $\delta\circ q_*=0$) will lead to the contradiction
\begin{equation*}
0=\delta\circ q_*[\varphi_n(H_\Theta^{(X_c^+)})]
\overset{{\rm Eq.}\, \eqref{eqn:nontrivial.connecting.map}}{\neq} 0.
\end{equation*}
Thus $H_\Theta^{(X_c^+)}$ must have some spectrum in the support of $\varphi_n^2-\varphi_n$. 

Consider the $n=0$ case first. The support of $\varphi_0^2-\varphi_0$ can be chosen to lie below the semibounded spectrum of $H_\Theta^{(X_c^+)}$, and inside any subinterval of $(|b|,3|b|)$, so spectra must appear in the latter subinterval. By varying the choice of subinterval, we see that the \emph{entire} interval $(|b|,3|b|)$ must be filled with spectrum of $H_\Theta^{(X_c^+)}$, to successfully prevent $\varphi_0(H_\Theta^{(X_c^+)})$ from being a projection. This shows gap-filling for the interval between the $0$-th and $(0+1)$-th Landau levels.

For $n\geq 1$, consider $\tilde{\varphi}_n=\sum_{k=0}^n \varphi_k$, with the $\varphi_k$ chosen to have disjoint supports. Then $\varpi(\tilde{\varphi}_n(H_\Theta^{(X_c)}))$ is a direct sum of projections. By adding to $\tilde{\varphi}_n$ a suitable supplementary function $\psi_n$ supported within $(|b|,(2n+1)|b|)$, we can arrange for $\tilde{\varphi}_n+\psi_n$ to have value 1 on the interval $[|b|,(2n+1)|b|]$. This extra $\psi_n$ is supported away from the Landau levels, so $\varpi(\psi_n(H_\Theta^{(X_c)}))=\psi_n(\varpi(\mathcal{H}_\Theta^{(X_c)}))=0$ by Theorem \ref{thm:quotient.Landau}. Thus $\varpi((\tilde{\varphi}_n+\psi_n)(H_\Theta^{(X_c)}))=\varpi(\tilde{\varphi}_n(H_\Theta^{(X_c)}))$ is left intact. 
Its $K$-theory class is again non-trivial (by additivity under direct sums), and we derive the same contradiction forbidding $\tilde{\varphi}_n(H_\Theta^{(X_c^+)})$ from being a projection. Now we deduce that the gap between the $n$-th and $(n+1)$-th Landau levels must be completely filled with spectrum of $H_\Theta^{(X_c^+)}$. By induction, we deduce our main result (restoring the subscript in $\Theta_B$):
\begin{thm}\label{thm:helical.gap.filling}
The screw-dislocated Landau operator $H_{\Theta_B}^{(X_c^+)}$ (Definition, \ref{dfn:dislocated.Landau}) has no gaps in its spectrum above the LLL $|b|$.
\end{thm}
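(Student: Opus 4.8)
The plan is to leverage the delocalized index computation of Theorem~\ref{thm:quotient.Landau} together with the coarse Mayer--Vietoris sequence for the partition $X_c = X_c^+ \cup X_c^-$, following the strategy used in \cite{LT} for the catenoid-type examples. The key mechanism is a \emph{bulk-boundary correspondence}: the nontrivial class $\varpi_*({\rm Ind}(D)) \in K_0(C^*(X_c)/C^*_{X_c}(\partial X_c^+))$ detected by the Landau level projections on the ``bulk'' helicoid must have a nonzero image under the connecting map $\delta$ into $K_1(C^*_{X_c^+}(\partial X_c^+)) \cong \mathbb{Z}$, because this image is identified with $\partial_{\rm MV}({\rm Ind}(D))$ via Eq.~\eqref{eqn:MV.equals.connecting}, and $\partial_{\rm MV}$ is an isomorphism. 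Then, by exactness of the $K$-theory long exact sequence for $0 \to C^*_{X_c^+}(\partial X_c^+) \to C^*(X_c^+) \overset{q}{\to} C^*(X_c^+)/C^*_{X_c^+}(\partial X_c^+) \to 0$, the image of $q_*$ lies in the kernel of $\delta$; so any genuine projection in the quotient that restricts (via $\hat r$) from the bulk delocalized Landau projection \emph{cannot} be lifted to a projection in $C^*(X_c^+)$ itself.

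The steps, in order: First, fix $n \in \mathbb{N}$ and a bump function $\tilde\varphi_n$ equal to $1$ on $[|b|, (2n+1)|b|]$, built as $\sum_{k=0}^n \varphi_k + \psi_n$ with the $\varphi_k$ having disjoint supports localized at the Landau levels and $\psi_n$ supported in the gaps. Second, observe that $\varphi_k(H_\Theta^{(X_c^+)})$ and $r(\varphi_k(H_\Theta^{(X_c)}))$ differ by an element of $C^*_{X_c^+}(\partial X_c^+)$ (Lemma 1.7 of \cite{LT}, the key ``restriction is a homomorphism modulo the boundary ideal'' statement), so that $q(\tilde\varphi_n(H_\Theta^{(X_c^+)})) = \hat r(\varpi(\tilde\varphi_n(H_\Theta^{(X_c)})))$; since the right side is a genuine projection by Eq.~\eqref{eqn:quotient.projection.calculus} and Theorem~\ref{thm:quotient.Landau} (the supplementary $\psi_n$ contributes nothing because $\varpi(\psi_n(H_\Theta^{(X_c)})) = 0$), the left side is a projection in the quotient algebra. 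Third, compute $\delta$ of this class along the chain of equalities \eqref{eqn:nontrivial.connecting.map}: $\delta[q(\tilde\varphi_n(H_\Theta^{(X_c^+)}))] = \pm \delta \circ \hat r_* \circ \varpi_*({\rm Ind}(D)) = \pm\, \partial_{\rm MV}({\rm Ind}(D)) \neq 0$. Fourth, run the exactness contradiction: if $\varphi_0(H_\Theta^{(X_c^+)})$ (or $\tilde\varphi_n(H_\Theta^{(X_c^+)})$) were itself a projection in $C^*(X_c^+)$, then $\delta \circ q_* = 0$ would force the class above to vanish, a contradiction. Hence $\tilde\varphi_n(H_\Theta^{(X_c^+)})$ is not idempotent, so $\sigma(H_\Theta^{(X_c^+)})$ meets the support of $\tilde\varphi_n^2 - \tilde\varphi_n$. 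Fifth, a ``sliding the bump'' argument: the support of $\tilde\varphi_n^2 - \tilde\varphi_n$ can be chosen to be any prescribed small subinterval of a given gap $((2k-1)|b|, (2k+1)|b|)$ for $1 \le k \le n$ (for $n=0$, inside $(|b|,3|b|)$, after also placing the ``below the bottom of the spectrum'' part harmlessly below the semibounded spectrum); since spectrum must appear in each such choice, varying the subinterval fills the entire gap. Induction on $n$ then covers every gap above $|b|$.

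The main obstacle I expect is \emph{step two}: carefully justifying that $q(\tilde\varphi_n(H_\Theta^{(X_c^+)}))$ really is a projection and really does coincide with $\hat r$ applied to the bulk quotient projection. This requires (a) the comparison $\varphi_n(H_\Theta^{(X_c^+)}) - r(\varphi_n(H_\Theta^{(X_c)})) \in C^*_{X_c^+}(\partial X_c^+)$, which is a finite-propagation/localization estimate near the cut $\partial X_c^+$ depending on the agreement of the two operators (and the chosen boundary condition on $X_c^+$) away from a neighborhood of the helix --- morally an off-diagonal decay / unit-propagation-speed argument for $\varphi_n(H)$ applied to a function with small enough spectral support, as in \cite{LT}; and (b) the bookkeeping that the $\psi_n$-correction drops out in the quotient by $\varpi$ thanks to Theorem~\ref{thm:quotient.Landau}, so that the $K$-theory class is the full sum $\sum_{k=0}^n [P_{(2k+1)|b|}(\varpi(\mathcal H_\Theta^{(X_c)}))]$, which is nonzero because each summand is $\pm\varpi_*({\rm Ind}(D))$ with a consistent sign (so there is no cancellation) and $\varpi_*({\rm Ind}(D))$ generates an infinite-order subgroup. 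A secondary subtlety is handling both Dirichlet and Neumann self-adjoint extensions uniformly: one should note that the localization argument in (a) is insensitive to the boundary condition since it only concerns behavior away from $\partial X_c^+$, while the coarse geometry of $X_c^+$ (hence all Roe algebras and MV maps) is unaffected. Everything else --- the $K$-theory exact sequences, the values $K_0(C^*(X_c)) \cong \mathbb{Z}$, $K_0(C^*(\partial X_c^+)) = 0$, $K_1(C^*(\partial X_c^+)) \cong \mathbb{Z}$, the reflection symmetry giving $\partial_{\rm MV}$ an isomorphism --- is assembled from results already cited in the excerpt.
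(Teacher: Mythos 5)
Your proposal is essentially the same argument as the paper's: the same coarse Mayer--Vietoris partition $X_c = X_c^+ \cup X_c^-$, the same factorization $\tilde r = \hat r \circ \varpi$, the same identity $\partial_{\mathrm{MV}} = \delta \circ \tilde r_*$, Lemma~1.7 of \cite{LT} to compare $\varphi_n(H_\Theta^{(X_c^+)})$ with $r(\varphi_n(H_\Theta^{(X_c)}))$, the exactness contradiction $\delta \circ q_* = 0$, and the sliding-bump argument. One small bookkeeping slip in your fifth step: since $\tilde\varphi_n$ equals $1$ on all of $[\,|b|, (2n+1)|b|\,]$, the function $\tilde\varphi_n^2 - \tilde\varphi_n$ vanishes on that interval and is supported only in $(0,|b|)$ (the harmless left flank) together with $((2n+1)|b|, (2n+3)|b|)$; so for a given $n$ it is precisely the gap between the $n$-th and $(n+1)$-th Landau levels that can be probed, not the intermediate gaps $((2k-1)|b|, (2k+1)|b|)$ for $1 \le k \le n$. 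Your concluding induction on $n$ is exactly what covers all the gaps, so the conclusion stands, and this is the same inductive structure the paper uses.
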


\medskip

\begin{rem}\label{rem:current}
The $K_1$-class $\delta[q(\varphi_n(H_\Theta^{(X_c^+)}))]$, which obstructs the existence of spectral gaps for $H_\Theta^{(X_c^+)}$, has a more refined interpretation using the methods of \cite{LTcobordism} Section 6. Technically, this interpretation requires a polynomial growth condition, which is satisfied by helicoids. In brief, let $X_c^{+,\uparrow}=\{(\rho,\phi)\in X_c^+\,:\,\phi\geq 0\}$ be the upper half of $X_c^+$, and $X_c^{+,\downarrow}$ be the lower half. Then there is a well-defined integer-valued map, Definition 4.6 of \cite{LTcobordism},
\begin{equation*}
\theta_{X_c^{+,\uparrow}}:K_1(C^*_{X_c^+}(\partial X_c^+))\rightarrow \ZZ,\qquad [u]\mapsto {\rm Index}\, T_u,
\end{equation*}
where $T_u$ is the compression of the unitary $u\in \mathrm{M}_n(C^*_{X_c^+}(\partial X_c^+)^+)$ to $X_c^{+,\uparrow}$. When $\theta_{X_c^{+,\uparrow}}$ is applied to the obstruction class
$\delta[q(\tilde{\varphi}_n(H_\Theta^{(X_c^+)}))]$, the resulting integer has an interpretation as a quantized current channel flowing from $X_c^{+,\uparrow}$ to $X_c^{+,\downarrow}$, provided by the generalized eigenstates of $H_\Theta^{X_c^+}$ with energies lying within the gap between the $n$-th and $(n+1)$-th Landau levels. This channel receives no contribution from the ``very delocalized'' Landau levels, and may therefore be thought of as arising from gap-filling ``helical edge states'' localized near the boundary helix $\partial X_c^+$.
\end{rem}

\medskip

\begin{rem}\label{rem:robust}
Much like the hyperbolic plane case analyzed in \cite{LT}, the coarse geometry methods mean that the gaplessness result in Theorem \ref{thm:helical.gap.filling} is very robust:
\begin{itemize}
\item The projected radius $\rho_0$ of the boundary helix is arbitrary. More generally, the geometry of the boundary $\partial X_c^+$ can be significantly adjusted, as long as the coarse MV calculations remain intact. In particular, no $\phi$-translation symmetry is required of $\partial X_c^+$.
\item The Dirichlet/Neumann boundary condition on $\partial X_c^+$ can be generalized considerably, see Remark 1.8 of \cite{LT}.
\item The externally applied field $B$ can be horizontally perturbed from a constant.
\item A bounded confining potential which decays away from $\partial X_c^+$ can be added, and treated as a perturbation in the same way as $\Theta_{\rm pert}$ was.
\item The embedding of the helical surface $X_c^+$ in $\RR^3$ can also be modified, as long as the resulting curvature and intrinsic field strength induced on $X_c^+$ can be treated as a horizontal perturbation. In particular, the ``twisting rate'' parameter $c\neq 0$ can be changed without destroying the gaplessness. 
\end{itemize}
\end{rem}

\medskip

\subsection{Comparison with a discrete model approach}\label{sec:relation.discrete}
The construction of a delocalized coarse index in \S \ref{sec:delocalized.index} and the gap-filling argument of this section are parallel to the proof of the \emph{bulk--dislocation correspondence} for \emph{discrete} models of $3$-dimensional topological insulators given by the first author \cite{Kubota}. 
That work dealt with a $3$-dimensional discrete Hamiltonian operator $H$ representing a so-called weak topological phase, acting on a lattice with screw dislocation (in other words, a certain discretization of the helical surface). 
More precisely, consider a periodic $3$-dimensional type A topological insulator having non-trivial weak topology in the $xy$-direction.
If a screw dislocation is inserted along the $z$-axis into the configuration of atoms of the same material, then the spectral gap is filled by localized states near the $z$-axis. 

This is proved in a similar way as Theorem \ref{thm:helical.gap.filling}. 
Indeed, in the ``layered'' case where $H$ is the direct sum of countably infinite copies of a $2$-dimensional topological insulator (thought of being stacked on top of one another),  the corresponding screw-dislocated Hamiltonian, denoted by $\widetilde{H}$, is viewed as an operator on the (discretized) half helicoid $X_c^+$.
This $\widetilde{H}$ determines a self-adjoint element of $C^*(X_c^+)$, and its image in $C^*(X_c^+)/C^*_{X_c^+}(\partial X_c^+)$ has a spectral gap, as does $H$. 
A similar coarse Mayer--Vietoris argument as Eq.\ \eqref{eqn:nontrivial.connecting.map} shows that the connecting map \eqref{eqn:connecting.defn} is non-trivial on a weak topological insulator invariant. 

Although the ideas of our gap-filling argument and \cite{Kubota} are similar, there are some essential differences.
First, \cite{Kubota} takes $3$-dimensional (discrete model) Hamiltonians as input. Indeed, a $3$-dimensional gapped Hamiltonian which is not layered (due to interaction terms between different layers) may be homotopic to a layered one. The gap-filling argument is applicable to such Hamiltonians as well. 
On the other hand, in the continuum setting, the direct sum of infinitely many layers of $2$-dimensional Landau operators is not a continuum Hamiltonian on a $3$-dimensional manifold.

Second, the lifting argument of operators, which is a central ingredient in \cite{Kubota}, does not appear in this paper. 
Rather than starting  from the operator $H_{\Theta}^{(X_c)}$ on the helicoid, \cite{Kubota} first considers a Hamiltonian on (discretized) $\RR^2$ and then lifts it onto the (discretized) helicoid $X_c^+$ modulo the $z$-axis. 
This lifting argument is realized at the Roe algebra level, by the $\ast$-homomorphism
\[s \colon C^*(|\mathbb{R}^2|) \to C^*(X_c^+)^\ZZ /C^*_{ X_c^+}(\partial X_c^+)^\mathbb{Z} ,\]
which we called the codimension 2 transfer map.

\section{Fourier transform approach}\label{sec:Fourier}

\emph{Under a $\phi$-invariance assumption}, which is rather restrictive from the physical perspective, Theorem \ref{thm:helical.gap.filling} can also be understood from a more functional analytic viewpoint, in terms of the spectral flow of catenoid Landau operators.

In the first instance, we now need to restrict to horizontally perturbed field strengths $\Theta=b+\Theta_{\rm pert}$ which are $\phi$-invariant, and work in the \emph{Landau gauge}. The latter means that the connection 1-form $A$ with curvature $\Theta\cdot\omega_c$ has the form
\begin{equation*}
A(\rho,\phi)=a_\phi(\rho)\,d\phi,\qquad a_\phi\in C^\infty(\RR).
\end{equation*}
Explicitly, we take 
\begin{equation}
a_\phi(\rho)=\int_0^\rho \Theta(\rho^\prime)\cosh^2(c\rho^\prime)\,d\rho^\prime.\label{eqn:Landau.gauge.potential}
\end{equation}
In this gauge, the helicoid Landau operator $H_{\Theta}^{(X_c)}=(d-iA)^*(d-iA)$ is translation invariant in $\phi$.

\subsection{Reduction to loop of catenoid Landau operators}\label{sec:catenoid}
Let $\tau_n, n\in\ZZ$ denote the action of $\ZZ$ on $X_c$ by the deck transformations $\tau_n:\phi\mapsto \phi+2n\pi/|c|$. The quotient of $X_c$ by this action is the \emph{catenoid} $Y_c$. Similarly the quotient of $X_c^+$ is the half-catenoid $Y_c^+$, and the quotient of the boundary helix $\partial X_c^+$ is a circle $\partial Y_c^+$.
\medskip

The Fourier transform of $H_\Theta^{(X_c)}$, with respect to the $\ZZ$-action $\tau$, is a family of \emph{catenoid} Landau operators labelled by the character $e^{ik}: n\mapsto e^{ink}$,
\begin{equation}
H_\Theta^{(X_c)}\cong \int_{e^{ik}\in{\rm U}(1)=\widehat{\ZZ}}^\oplus H_\Theta^{(Y_c)}(e^{ik}). \label{eqn:helicoid.Fourier}
\end{equation}
As in Bloch--Floquet theory, $H_\Theta^{(Y_c)}(e^{ik})$ denotes the operator $H_\Theta^{(X_c)}$ acting on functions of $X_c$ subject to quasiperiodic boundary conditions $f(\phi+\frac{2\pi}{|c|})=e^{ik}f(\phi)$, with inner product given by that on a fundamental domain  
\begin{equation} \label{EqDefUc}
U_c:=\{ (\rho,\phi) \in X_c \mid 0 \leq \phi < 2\pi /|c|\}.
\end{equation}
Such an $f$ can be viewed as a function on $Y_c$, but twisted by a flat line bundle with holonomy $e^{ik}$. To identify the Hilbert spaces $L^2(Y_c;e^{ik})$ at each $k\in\RR$, we use the (singular) gauge transforms
\begin{align*}
u_k : L^2(Y_c;e^{ik}) & \rightarrow L^2(Y_c) \\
u_k(f)(\rho,\phi)&=e^{-ik|c| \phi /2\pi}f(\rho,\phi).
\end{align*}
We have
\begin{equation}
    u_k H_\Theta^{(Y_c)}(e^{ik}) u_k^* = \Big(d-i\Big( A -  \frac{k|c|}{2\pi}d\phi \Big)\Big)^*\Big(d-i\Big( A -  \frac{k|c|}{2\pi}d\phi \Big)\Big). \label{eqn:catenoid_Landau}
\end{equation}
This identification will be used implicitly later on.

\begin{prop}\label{prop:continuity.field}
Let $\Theta(\rho,\phi)=\Theta(\rho)$ be a $\phi$-invariant magnetic field strength on the helicoid $X_c$. Under the identifications Eq.\ \eqref{eqn:catenoid_Landau}, the catenoid Landau Hamiltonians $H_\Theta^{(Y_c)}(e^{ik})$ depend norm-resolvent continuously on $k\in \RR$. 
Moreover, the resolvent difference 
\[
\left(H_\Theta^{(Y_c)}(e^{ik^\prime})+1\right)^{-1}-\left(H_\Theta^{(Y_c)}(e^{ik})+1\right)^{-1}
\]
 is compact, for all $k,k^\prime\in \RR$.
\end{prop}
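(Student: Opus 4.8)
The plan is to reduce the statement to elementary second-order perturbation theory for a \emph{bounded} perturbation of the first-order operator $d-iA$. First I would work in the gauge of Eq.~\eqref{eqn:catenoid_Landau}, writing $H(k):=u_kH_\Theta^{(Y_c)}(e^{ik})u_k^*=(d-iA_k)^*(d-iA_k)$ with $A_k=A-\tfrac{k|c|}{2\pi}\,d\phi$, so that the entire $k$-dependence is the affine family of connections $A_k$ with constant ``velocity'' the fixed $1$-form $\beta:=\tfrac{|c|}{2\pi}\,d\phi$. The decisive geometric observation is that $|\beta|_g=\tfrac{|c|}{2\pi\cosh(c\rho)}$ on the catenoid, so $\beta$ is a smooth $1$-form which is bounded, has bounded covariant derivatives, and \emph{vanishes at infinity}, i.e.\ $\beta\in C_0(Y_c)$. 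Let $D_k$ be the closure of $d-iA_k$ from $C_c^\infty(Y_c)$; since $Y_c$ is complete this is essentially self-adjoint, $H(k)=D_k^*D_k$, and $D_{k'}=D_k+W_{k,k'}$ with $W_{k,k'}:=i(k'-k)M_\beta$, where $M_\beta$ is bounded multiplication by $\beta$, $\|M_\beta\|=\|\beta\|_\infty\le\tfrac{|c|}{2\pi}$.

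Next I would record that $D_k^*(M_\beta f)=-M_\beta^*D_kf$ for $f\in\mathrm{Dom}(H(k))$, where $M_\beta^*$ denotes the pointwise contraction with $\beta$; this comes from the identity $(d-iA_k)^*(f\beta)=(d^*\beta)f-\langle\beta,(d-iA_k)f\rangle$ together with the coclosedness $d^*(d\phi)=0$ on $Y_c$. Hence $M_\beta f\in\mathrm{Dom}(D_k^*)=\mathrm{Dom}(D_{k'}^*)$, so that $\mathrm{Dom}(H(k))=\mathrm{Dom}(H(k'))$ and, on this common domain,
\begin{equation*}
H(k')-H(k)=D_k^*W_{k,k'}+W_{k,k'}^*D_k+W_{k,k'}^*W_{k,k'},
\end{equation*}
whence, by the second resolvent identity,
\begin{equation*}
(H(k')+1)^{-1}-(H(k)+1)^{-1}=-(H(k')+1)^{-1}\bigl(D_k^*W_{k,k'}+W_{k,k'}^*D_k+W_{k,k'}^*W_{k,k'}\bigr)(H(k)+1)^{-1}.
\end{equation*}
The elementary estimate $\|D_ku\|^2=\langle H(k)u,u\rangle\le\|(H(k)+1)u\|\,\|u\|$ shows that $D_k(H(k)+1)^{-1}$ and $D_{k'}(H(k')+1)^{-1}$ are contractions, and hence (using $D_k=D_{k'}-W_{k,k'}$) that $(H(k)+1)^{-1}D_k^*$ and $(H(k')+1)^{-1}D_k^*$ are bounded uniformly for $|k'-k|\le 1$. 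Since $\|W_{k,k'}\|=|k'-k|\,\|\beta\|_\infty$, the right-hand side above then has norm $\le C\,|k'-k|$ for $|k'-k|\le 1$, which gives norm-resolvent continuity at the point $1$, hence (by the first resolvent identity and $H(k)\ge 0$) at every $z\in\CC\setminus[0,\infty)$.

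For the compactness of the resolvent difference I would use that $(H(k)+1)^{-1}$ and $D_k(H(k)+1)^{-1}$ are \emph{locally compact}: by interior elliptic regularity they map $L^2(Y_c)$ continuously into $H^2_{\mathrm{loc}}$, resp.\ $H^1_{\mathrm{loc}}$, so multiplying by a compactly supported cutoff yields a compact operator by the Rellich lemma. As $\beta\in C_0(Y_c)$, approximating $\beta$, resp.\ $|\beta|_g^2$, uniformly by compactly supported sections and using the tail bound $\|M_{(1-\chi_R)\beta}\|=\|(1-\chi_R)\beta\|_\infty\to 0$ then shows that $M_\beta(H(k)+1)^{-1}$, $M_\beta^*D_k(H(k)+1)^{-1}$ and $M_{|\beta|_g^2}(H(k)+1)^{-1}$ are all compact. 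Each of the three terms in the displayed resolvent difference is one of these compact operators left-multiplied by a bounded operator (namely $(H(k')+1)^{-1}$ or $(H(k')+1)^{-1}D_k^*$) and a scalar proportional to $(k'-k)$ or $(k'-k)^2$; hence the difference is compact for all $k,k'\in\RR$.

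The only genuinely delicate points I anticipate are: (i) arranging that the $k$-dependence enters through a \emph{bounded} operator at all --- this is precisely what the singular gauge transform $u_k$ achieves, trading the quasiperiodic boundary condition for the shift $\beta\propto d\phi$ of the connection, which is bounded (indeed decaying) only because $|d\phi|_g=1/\cosh(c\rho)$ on the catenoid; and (ii) the domain bookkeeping around the first-order term $D_k^*W_{k,k'}$, which is handled by the coclosedness of $d\phi$ and by systematically keeping every operator in the form (bounded)\,$\cdot$\,(locally compact resolvent factor)\,$\cdot$\,($C_0$-multiplier), so that no unbounded piece escapes the small/compact estimates. Everything else is routine.
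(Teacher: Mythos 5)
Your proof is correct and takes essentially the same approach as the paper's: both work in the gauge where the $k$-dependence enters additively through the $1$-form $\beta \propto d\phi$ with $|\beta|_g = \tfrac{|c|}{2\pi}\,\mathrm{sech}(c\rho)\to 0$, expand the resolvent difference via the second resolvent identity into terms of the shape (bounded resolvent factor)$\,\cdot\,$(decaying first-order perturbation)$\,\cdot\,$(bounded resolvent factor), and use $\|D_k u\|^2 = \langle u, H(k)u\rangle$ for the uniform bound that makes the middle factor small of order $|k'-k|$. Your formulation with the covariant derivative $D_k = d - iA_k$ and the $1$-form $\beta$ is a coordinate-free repackaging of the paper's computation using the conformal factor $\nu = \mathrm{sech}(c\rho)$ and the scalar operator $\nu\partial_{\phi,A_k}$, and your explicit domain check via $d^*(d\phi) = 0$ makes precise a step the paper absorbs by computing throughout on the common core $C^\infty_c(Y_c)$.
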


\begin{proof}
From the discussion preceding the proposition, the connection 1-form for the catenoid Landau operator $H_\Theta^{(Y_c)}(e^{ik})$ in Landau gauge is
\begin{equation*}
A_k\equiv A_k(\rho)=\left(a_\phi(\rho)+\frac{k|c|}{2\pi}\right)d\phi,
\end{equation*} 
where $a_\phi=a_\phi(\rho)=\int_0^\rho \Theta(\rho^\prime)\cosh^2(c\rho^\prime) \,d\rho^\prime$ as in Eq.\ \eqref{eqn:Landau.gauge.potential}. We write $\nu^{-1}:=\cosh(c\rho)$ for the conformal factor, and observe that $\nu\equiv\nu(\rho)$ vanishes as $\rho\to\infty$.

In conformal coordinates $(\rho,\phi)$, we have $H_\Theta^{(Y_c)}(e^{ik})=-\nu^2(\partial_{\phi,A_k}^2+\partial_{\rho,A_k}^2)$. On $C_c^\infty(Y_c)$, the conformally scaled partial derivative, $-i\nu\partial_\phi$, is symmetric (with respect to the inner product on $L^2(Y_c)$), and so is its covariant version $-i\nu\partial_{\phi,A_k}$ in Landau gauge. We also have $\partial_\rho = \partial_{\rho, A_k}$ in Landau gauge, but $-i\nu\partial_\rho=-i\nu\partial_{\rho,A_k}$ is not symmetric because of the $\rho$-dependent $A_k$. Nevertheless, the quadratic term $-\nu^2\partial_{\rho}^2$ is positive definite, since
\begin{align*}
\langle f | -\nu^2 \partial_\rho^2 f\rangle_{L^2(Y_c)} &=-\int \overline{f(\rho,\phi)}\, \mathrm{sech}^2(c\rho)\,\partial_\rho^2f(\rho,\phi)\,\cosh^2(c\rho)\,d\rho\,d\phi\\
&=-\int \overline{f(\rho,\phi)}\,\partial_\rho^2f(\rho,\phi)\,d\rho\,d\phi\\
&=\int \overline{\partial_\rho f(\rho,\phi)}\partial_\rho f(\rho,\phi)\,d\rho\,d\phi\\
&=\int | \partial_\rho f(\rho,\phi)|^2\,d\rho\,d\phi\; \geq \; 0,\qquad\qquad f\in C_c^\infty(Y_c). 
\end{align*}
So, for $f\in C_c^\infty(Y_c)$, we have
\begin{align*}
\|-i\nu\partial_{\phi,A_k}f\|^2_{L^2(Y_c)} &= \langle f | \nu^2\partial_{\phi,A_k}^2 f\rangle_{L^2(Y_c)}\\
&\leq \langle f| H_\Theta^{(Y_c)}(e^{ik}) f\rangle_{L^2(Y_c)} \\
&\leq \langle f | (H_\Theta^{(Y_c)}(e^{ik})+1)f\rangle_{L^2(Y_c)} \\
&=\|(H_\Theta^{(Y_c)}(e^{ik})+1)^{1/2}f\|^2_{L^2(Y_c)}.
\end{align*}
Therefore, the bounded operator $\nu\partial_{\phi,A_k}\left(H_\Theta^{(Y_c)}(e^{ik})+1\right)^{-1/2}$, hence also the bounded operators
\begin{equation}
\nu\partial_{\phi,A_k}\left(H_\Theta^{(Y_c)}(e^{ik})+1\right)^{-1},\qquad \left(H_\Theta^{(Y_c)}(e^{ik})+1\right)^{-1}\nu\partial_{\phi,A_k}\label{eqn:bounded.argument}
\end{equation}
can be constructed uniquely, with norm at most 1. This works for every $k\in\RR$.

Now consider a pair of catenoid Landau operators labelled by $k,k^\prime\in\RR$. On their common core $C^\infty_c(Y_c)$, their difference is
\begin{align*}
&H_\Theta^{(Y_c)}(e^{ik^\prime})-H_\Theta^{(Y_c)}(e^{ik})\\
 &\qquad=-\nu^2\left(\partial_{\phi,A_{k^\prime}}^2-\partial_{\phi,A_k}^2\right)\\
&\qquad=-\nu^2\Bigg(\partial_{\phi,A_{k^\prime}}\partial_{\phi,A_k}-\partial_{\phi,A_{k^\prime}}\frac{i(k^\prime-k)|c|}{2\pi}-\partial_{\phi,A_{k^\prime}}\partial_{\phi,A_k}+\frac{i(k-k^\prime)|c|}{2\pi}\partial_{\phi,A_k}\Bigg)\\
&\qquad=i\left(\nu\partial_{\phi,A_{k^\prime}}+\nu\partial_{\phi,A_k}\right)\frac{\nu(k^\prime-k)|c|}{2\pi}.
\end{align*}
Their resolvent difference is then
\begin{align}
&\left(H_\Theta^{(Y_c)}(e^{ik})+1\right)^{-1}-\left(H_\Theta^{(Y_c)}(e^{ik^\prime})+1\right)^{-1}\nonumber
\\
&\qquad=\left(H_\Theta^{(Y_c)}(e^{ik^\prime})+1\right)^{-1}\left(H_\Theta^{(Y_c)}(e^{ik^\prime})-H_\Theta^{(Y_c)}(e^{ik})\right)\left(H_\Theta^{(Y_c)}(e^{ik})+1\right)^{-1}\nonumber\\
&\qquad=i\left(H_\Theta^{(Y_c)}(e^{ik^\prime})+1\right)^{-1}\nu\partial_{\phi,A_{k^\prime}}\cdot\frac{\nu(k^\prime-k)|c|}{2\pi}\left(H_\Theta^{(Y_c)}(e^{ik})+1\right)^{-1}\nonumber\\
&\qquad\;\;\;\;+i\left(H_\Theta^{(Y_c)}(e^{ik^\prime})+1\right)^{-1}\frac{\nu(k^\prime-k)|c|}{2\pi}\cdot\nu\partial_{\phi,A_k}\left(H_\Theta^{(Y_c)}(e^{ik})+1\right)^{-1}.\label{eqn:resolvent.difference}
\end{align}
Because the operators in Eq.\ \eqref{eqn:bounded.argument}
are bounded by 1 (independently of $k, k^\prime$), the formula Eq.\ \eqref{eqn:resolvent.difference} holds not only on $C^\infty_c(Y_c)$, but on all of $L^2(Y_c)$. We deduce that as $|k^\prime-k|\to 0$, the resolvent difference Eq.\ \eqref{eqn:resolvent.difference} tends to 0 in norm.

Finally, since $\nu$ vanishes at infinity while $(H_\Theta^{(Y_c)}(e^{ik})+1)^{-1}$ and $(H_\Theta^{(Y_c)}(e^{ik^\prime})+1)^{-1}$ are locally compact, we deduce that the resolvent difference, Eq.\ \eqref{eqn:resolvent.difference}, is compact. 
\end{proof}

\begin{cor}\label{cor:continuous.loop}
Let $\varphi\in C_0(\RR)$, and $\Theta$ be a $\phi$-invariant magnetic field strength. Under the Fourier transform \eqref{eqn:helicoid.Fourier}, the operator $\varphi(H_\Theta^{(X_c)})$ becomes a continuous loop 
\begin{equation*}
\widehat{\ZZ}=\mathrm{U}(1)\ni e^{ik}\mapsto \varphi\left(H_\Theta^{(Y_c)}(e^{ik})\right)\in C^*(Y_c).
\end{equation*}
\end{cor}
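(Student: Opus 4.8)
The plan is to read Corollary \ref{cor:continuous.loop} off from Proposition \ref{prop:continuity.field} together with one standard fact, so that essentially all that remains is bookkeeping. Three points need checking: that $\varphi(H_\Theta^{(Y_c)}(e^{ik})) \in C^*(Y_c)$ for each $k$; that $k \mapsto \varphi(H_\Theta^{(Y_c)}(e^{ik}))$ is norm-continuous; and that this map descends to $\widehat{\ZZ} = \mathrm{U}(1) = \RR/2\pi\ZZ$.

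\emph{Membership.} By Eq.\ \eqref{eqn:catenoid_Landau}, the unitary $u_k$ --- which, in the coordinate $\phi \in [0, 2\pi/|c|)$ on the fundamental domain $U_c$, is just multiplication by the bounded function $e^{-ik|c|\phi/2\pi}$ --- conjugates $H_\Theta^{(Y_c)}(e^{ik})$ into the untwisted magnetic Laplacian $(d - iA_k)^*(d - iA_k)$ on $L^2(Y_c)$, with $A_k := A - \tfrac{k|c|}{2\pi}\,d\phi$, whose curvature is still $\Theta\cdot\omega_c$. As $Y_c$ is a complete Riemannian surface, this is a Landau operator in the sense of Section \ref{sec:geometric.setup}, so its $C_0(\RR)$-functional calculus lies in $C^*(Y_c)$ by Prop.\ 3.6 of \cite{Roebook} (recalled in Section \ref{sec:Dirac.on.Roe}). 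Conjugation by a bounded multiplication unitary preserves finite propagation and local compactness, hence preserves $C^*(Y_c)$; thus $\varphi(H_\Theta^{(Y_c)}(e^{ik})) = u_k^*\,\varphi\bigl((d-iA_k)^*(d-iA_k)\bigr)\,u_k \in C^*(Y_c)$. Since functional calculus commutes with the direct integral decomposition \eqref{eqn:helicoid.Fourier}, this also exhibits $\varphi(H_\Theta^{(X_c)})$ as $\int^\oplus_{\mathrm{U}(1)}\varphi(H_\Theta^{(Y_c)}(e^{ik}))$.

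\emph{Continuity.} Proposition \ref{prop:continuity.field} gives that $k \mapsto H_\Theta^{(Y_c)}(e^{ik})$ is norm-resolvent continuous. I then use the standard fact that norm-resolvent convergence of self-adjoint operators upgrades to norm convergence of $\psi$ of those operators for every $\psi \in C_0(\RR)$: the set of $\psi$ for which this holds is a norm-closed $\ast$-subalgebra of $C_0(\RR)$ (a $3\varepsilon$ argument using $\|\psi(T)\| \le \|\psi\|_\infty$), it contains the resolvent functions $x \mapsto (x \mp i)^{-1}$, and these generate $C_0(\RR)$ by Stone--Weierstrass. Since $u_k$ also depends norm-continuously on $k$ (being multiplication by $e^{-ik|c|\phi/2\pi}$ with $\phi$ confined to $[0, 2\pi/|c|)$), transporting through it gives the norm-continuity of $k \mapsto \varphi(H_\Theta^{(Y_c)}(e^{ik})) \in C^*(Y_c)$. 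The compactness of resolvent differences established in Proposition \ref{prop:continuity.field} is not needed for this corollary.

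\emph{Loop structure, and the point to watch.} On the $k$-independent Hilbert space of quasiperiodic functions on $U_c$, the operator $H_\Theta^{(Y_c)}(e^{ik})$ depends only on the character $e^{ik}$; hence $H_\Theta^{(Y_c)}(e^{i(k+2\pi)}) = H_\Theta^{(Y_c)}(e^{ik})$ and the family is genuinely $\mathrm{U}(1)$-indexed. The one thing to be careful about is reconciling this with the fact that the trivialization $u_k$ used above is only $\RR$-periodic: $A_{k+2\pi}$ and $A_k$ differ by the closed $1$-form $|c|\,d\phi$, which equals $\tfrac{1}{i}w^{-1}dw$ for the \emph{globally defined} unitary $w := e^{i|c|\phi} \in C_b(Y_c)$ (well-defined on $Y_c$ because $\phi$ has period $2\pi/|c|$ there). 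So $(d-iA_{k+2\pi})^*(d-iA_{k+2\pi})$ and $(d-iA_k)^*(d-iA_k)$ are conjugate by the multiplier $w^{\pm 1}$, and chasing this through $u_k$ exactly cancels its $2\pi$-jump, confirming that $k \mapsto \varphi(H_\Theta^{(Y_c)}(e^{ik}))$ is honestly $2\pi$-periodic. This bookkeeping --- not any analytic subtlety --- is the main thing to get right; everything else is immediate from Proposition \ref{prop:continuity.field} and the standard fact just quoted.
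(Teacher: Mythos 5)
Your argument follows the paper's proof essentially verbatim: use Proposition \ref{prop:continuity.field} to get norm-resolvent continuity of the gauge-fixed family $u_k H_\Theta^{(Y_c)}(e^{ik})u_k^*$, upgrade to norm continuity of $\varphi$ applied to it (your Stone--Weierstrass argument is the standard proof of the fact the paper quotes implicitly), and then conjugate by the norm-continuous multiplier path $u_k^*$. Your additional check of $2\pi$-periodicity via the globally defined unitary $w=e^{i|c|\phi}$ is a correct and slightly more explicit unwinding of the paper's one-line remark that ``the endpoint operators are identified.''
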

\begin{proof}
Due to Prop.\ \ref{prop:continuity.field}, for $k\in[0,2\pi]$, the path $k\mapsto u_k H_\Theta^{(Y_c)}(e^{ik}) u_k^*$ is norm-resolvent continuous, and so $k\mapsto u_k \varphi(H_\Theta^{(Y_c)}(e^{ik})) u_k^* \in C^*(Y_c)$ is norm-continuous. Conjugate by the continuous path of unitaries $k\mapsto u_k^*$ (in the multiplier algebra of $C^*(Y_c)$), so the endpoint operators are identified, then we obtain a continuous loop.
\end{proof}

Let
\begin{equation*}
\varpi':\mathcal{B}(L^2(Y_c)) \to  \mathcal{Q}(L^2(Y_c)):= \mathcal{B}(L^2(Y_c))/ \mathcal{K}(L^2(Y_c))
\end{equation*}
be the quotient map to the Calkin algebra.

\begin{lem}\label{lem:quotient.projection}
For any horizontally perturbed $\phi$-invariant bounded field strength $\Theta$, and any bump function $\varphi_n$ for the $n$-th Landau level (Eq.\ \eqref{eqn:LL.bump.function}), the family
\begin{equation*}
\left\{ \varpi' \big( \varphi_n\big( H_\Theta^{(Y_c)}(e^{ik}) \big)\big) \right\}_{e^{ik} \in\mathrm{U}(1)= \widehat{\ZZ}}
\end{equation*} defines a projection in $C(\widehat{\ZZ}) \otimes \mathcal{Q}(L^2(Y_c))$.
\end{lem}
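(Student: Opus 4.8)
The plan is to check separately the two things that make such a family a projection in $C(\widehat{\ZZ}) \otimes \mathcal{Q}(L^2(Y_c))$, using the standard identification $C(\widehat{\ZZ}) \otimes \mathcal{Q}(L^2(Y_c)) \cong C\bigl(\widehat{\ZZ}, \mathcal{Q}(L^2(Y_c))\bigr)$: first, that for each fixed $e^{ik} \in \widehat{\ZZ}$ the element $\varpi'\bigl(\varphi_n(H_\Theta^{(Y_c)}(e^{ik}))\bigr)$ is a (self-adjoint) idempotent in the Calkin algebra; and second, that $e^{ik} \mapsto \varpi'\bigl(\varphi_n(H_\Theta^{(Y_c)}(e^{ik}))\bigr)$ is norm-continuous.

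For the pointwise statement, self-adjointness is automatic because $\varphi_n$ is real-valued, so the issue is idempotency modulo $\mathcal{K}(L^2(Y_c))$, and for this I would first locate the essential spectrum of $H_\Theta^{(Y_c)}(e^{ik})$. Conjugating by the gauge transform $u_k$ of \eqref{eqn:catenoid_Landau} identifies this operator with a genuine Landau operator on the complete spin surface $Y_c$ whose field strength is still $\Theta = b + \Theta_{\rm pert}$; the phase $e^{-ik|c|\phi/2\pi}$ is not $2\pi/|c|$-periodic, so this gauge change does alter the holonomy around $\partial Y_c^+$, but it leaves the curvature $\Theta \cdot \omega_c$ untouched. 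Since the catenoid scalar curvature $R_c(\rho) = -2c^2/\cosh^4(c\rho)$ vanishes as $\rho \to \pm\infty$ and, $\Theta$ being a $\phi$-invariant horizontal perturbation, $\Theta_{\rm pert}(\rho)$ vanishes there too, we have $R_c, \Theta_{\rm pert} \in C_0^\infty(Y_c)$. The inclusion $\sigma_{\rm ess} \subseteq (2\NN+1)|b|$ of Prop.\ \ref{prop:ess.spec} only uses the (index-free) first part of Theorem \ref{thm:ASS.relative}, and is in particular insensitive to which connection $1$-form realizes the curvature, so it applies to this gauge-transformed operator and, since $u_k$ is unitary, yields $\sigma_{\rm ess}(H_\Theta^{(Y_c)}(e^{ik})) \subseteq (2\NN+1)|b|$ for every $k$.

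With the essential spectrum controlled, I would finish the pointwise statement using functional calculus. The function $\varphi_n^2 - \varphi_n$ lies in $C_0(\RR)$ and vanishes on the whole set $(2\NN+1)|b|$: it is supported in $\bigl((2n-1)|b|,(2n+3)|b|\bigr)$ by \eqref{eqn:LL.bump.function}, and at the unique Landau level $(2n+1)|b|$ inside that interval it takes the value $1^2 - 1 = 0$. Since a $C_0$-function of a self-adjoint operator is compact exactly when it vanishes on the essential spectrum, $\bigl(\varphi_n^2 - \varphi_n\bigr)\bigl(H_\Theta^{(Y_c)}(e^{ik})\bigr) \in \mathcal{K}(L^2(Y_c))$, i.e.\ $\varpi'\bigl(\varphi_n(H_\Theta^{(Y_c)}(e^{ik}))\bigr)$ is idempotent, hence a projection in $\mathcal{Q}(L^2(Y_c))$. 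For the continuity statement I would invoke Corollary \ref{cor:continuous.loop} with $\varphi = \varphi_n$ (valid since $\Theta$ is $\phi$-invariant): it already gives that $e^{ik} \mapsto \varphi_n(H_\Theta^{(Y_c)}(e^{ik})) \in C^*(Y_c) \subseteq \mathcal{B}(L^2(Y_c))$ is a norm-continuous loop, and post-composing with the norm-contractive quotient $\varpi'$ preserves continuity. Combining the two statements yields a projection in $C\bigl(\widehat{\ZZ}, \mathcal{Q}(L^2(Y_c))\bigr) \cong C(\widehat{\ZZ}) \otimes \mathcal{Q}(L^2(Y_c))$.

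The main obstacle is the essential-spectrum computation in the second paragraph: one must recognise that the catenoid, although it is geometrically a pair of expanding hyperbolic-type ends rather than a flat surface, has asymptotically vanishing curvature and asymptotically constant field, so that Prop.\ \ref{prop:ess.spec} is available; and one must check that the $k$-dependence, which enters through a genuinely singular gauge transformation on $Y_c$, affects only the holonomy and not the curvature, so that the bound $\sigma_{\rm ess} \subseteq (2\NN+1)|b|$ holds uniformly in $k$. The remaining ingredients — the elementary functional-calculus fact about $\varphi_n^2 - \varphi_n$ and the already-established loop continuity — are routine.
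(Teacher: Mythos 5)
Your proposal is correct and follows essentially the same route as the paper: apply Proposition~\ref{prop:ess.spec} (using that $R_c,\Theta_{\rm pert}\in C_0^\infty(Y_c)$) to place the essential spectrum inside the Landau levels, conclude that $(\varphi_n^2-\varphi_n)(H_\Theta^{(Y_c)}(e^{ik}))$ is compact so the image under $\varpi'$ is a projection pointwise, and then quote Corollary~\ref{cor:continuous.loop} for continuity in $k$. The one place you are more explicit than the paper is in tracking why the $k$-dependent twist (which changes only the holonomy, not the curvature $\Theta\cdot\omega_c$) does not affect the essential-spectrum inclusion; the paper leaves this to the phrase ``this identification will be used implicitly'' after Eq.~\eqref{eqn:catenoid_Landau}.
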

\begin{proof}
First, note that $\Theta_{\rm pert}$ and $R_c$ are $C_0^\infty(Y_c)$ functions, so Prop.\ \ref{prop:ess.spec} applies. Namely, $\sigma_{\rm ess}(H_\Theta^{(Y_c)}(e^{ik}))\subset (2\NN+1)|b|$, so that it does not meet the support of $\varphi_n^2 - \varphi_n$. Therefore, each $(\varphi_n^2-\varphi_n)(H_\Theta^{(Y_c)}(e^{ik}))$ is at most finite-rank. Thus, each $\varpi'\big( \varphi_n \big( H_\Theta^{(Y_c)}(e^{ik}) \big)$ is a projection in $\mathcal{Q}(L^2(Y_c))$, with continuous dependence on $e^{ik}$ due to Corollary \ref{cor:continuous.loop}. Thus we obtain a projection
\begin{equation}
    \left\{ \varpi'\big( \varphi_n \big( H_\Theta^{(Y_c)}(e^{ik}) \big) \big) \right\}_{e^{ik} \in \widehat{\ZZ}} \in C(\widehat{\ZZ}) \otimes \mathcal{Q}(L^2(Y_c)). \label{eqn:family.Calkin}
\end{equation} 
This finishes the proof.
\end{proof}

\subsection{Invariant Roe algebras}
In the $\phi$-invariant setting, we only need to work in the $\ZZ$-invariant Roe algebra $C^*(X_c)^\ZZ$.
\begin{lem}\label{lem:Fourier.Roe}
We have isomorphisms
\begin{equation}
C^*(X_c)^\ZZ\cong C^*_r(\ZZ)\otimes C^*(Y_c)\cong C(\widehat{\ZZ})\otimes C^*(Y_c),\label{eqn:Roe.Fourier}
\end{equation}
where $C^*_r(\mathbb{Z})$ denotes the reduced group $C^*$-algebra for $\mathbb{Z}$.
\end{lem}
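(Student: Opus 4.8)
The plan is to identify the $\ZZ$-invariant Roe algebra $C^*(X_c)^\ZZ$ with a crossed-product-type algebra and then apply the Fourier transform (Takai/Pontryagin duality) for the abelian group $\ZZ$. First I would fix the geometric picture: since $X_c$ is a $\ZZ$-Galois cover of $Y_c$ with the deck action $\tau$ being free, proper, cocompact-on-fibers and by isometries, the algebra $C^*(X_c)$ is naturally a $C^*_r(\ZZ)$-module (the invariant translations along the covering direction generate a copy of $C^*_r(\ZZ)$ in the multiplier algebra), and the invariant subalgebra $C^*(X_c)^\ZZ$ consists of the finite-propagation locally compact operators on $L^2(X_c)$ commuting with the $\ZZ$-action. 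The standard fact — see e.g. the treatment of Roe algebras of covering spaces in \cite{Roebook} or \cite{HR-book} — is that such operators decompose, via the covering, as $\ZZ$-indexed families of finite-propagation locally compact operators ``between fundamental domains'', which is precisely the structure of $C^*_r(\ZZ)\otimes C^*(Y_c)$. So the first isomorphism in Eq.\ \eqref{eqn:Roe.Fourier} is the assertion that invariant operators on the cover are the same as $C^*_r(\ZZ)$-valued (equivalently $\ZZ$-indexed, norm-summable-after-completion) operators on the base.

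Concretely, I would proceed as follows. Choose the fundamental domain $U_c$ of Eq.\ \eqref{EqDefUc}, giving a unitary $L^2(X_c)\cong \ell^2(\ZZ)\otimes L^2(Y_c)$ that intertwines the $\ZZ$-action with $\lambda\otimes 1$, where $\lambda$ is the regular representation of $\ZZ$ on $\ell^2(\ZZ)$. Under this identification, an invariant operator $T$ becomes a $\ZZ$-equivariant operator on $\ell^2(\ZZ)\otimes L^2(Y_c)$; such operators are exactly the ``band-dominated'' families $(T_n)_{n\in\ZZ}$ acting as $\xi\otimes f\mapsto \sum_n (\lambda_n\xi)\otimes T_n f$ with $T_n\in\mathcal B(L^2(Y_c))$. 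The finite-propagation condition on $T$ along the covering direction forces only finitely many $T_n$ nonzero (propagation bound $\Leftrightarrow$ $|n|$ bounded), the propagation condition transverse to it forces each $T_n$ finite-propagation on $Y_c$, and local compactness of $T$ is equivalent to local compactness of each $T_n$ on $Y_c$; hence the $*$-algebra $\CC[X_c]^\ZZ$ of invariant finite-propagation locally compact operators is exactly $\CC[\ZZ]\otimes_{\mathrm{alg}}\CC[Y_c]$ sitting inside $\mathcal B(\ell^2(\ZZ))\otimes\mathcal B(L^2(Y_c))$. Taking $C^*$-closures — using that the closure of $\CC[\ZZ]$ in $\mathcal B(\ell^2(\ZZ))$ is $C^*_r(\ZZ)$ and that of $\CC[Y_c]$ is $C^*(Y_c)$, together with the fact that the spatial tensor product of the regular representation with any representation computes the minimal tensor product — yields $C^*(X_c)^\ZZ\cong C^*_r(\ZZ)\otimes C^*(Y_c)$. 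The second isomorphism in Eq.\ \eqref{eqn:Roe.Fourier} is then just the Fourier transform $C^*_r(\ZZ)\cong C(\widehat\ZZ)=C(\mathrm{U}(1))$, tensored with $C^*(Y_c)$.

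The main obstacle I anticipate is controlling the $C^*$-closure step rather than the algebraic identification: one must check that the norm on $\CC[X_c]^\ZZ$ inherited from $\mathcal B(L^2(X_c))$ agrees with the spatial tensor-product norm on $\CC[\ZZ]\otimes_{\mathrm{alg}}\CC[Y_c]$, i.e.\ that no ``completion mismatch'' occurs. This is where the specific geometry helps — the $\ZZ$-action is free and cocompact on fibers with the fundamental domain $U_c$ a genuine open subset carrying all of $L^2(Y_c)$ — so the decomposition $L^2(X_c)\cong\ell^2(\ZZ)\otimes L^2(Y_c)$ is isometric and the regular representation of $\ZZ$ appears honestly, which is exactly the situation in which the operator norm of a band-dominated family equals its $C^*_r(\ZZ)\otimes C^*(Y_c)$-norm; I would cite the analogous statement for Roe algebras of free cocompact-fiber covers (a minor adaptation of \cite[\S 5]{Roebook} or \cite{HR-book}). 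A secondary, purely bookkeeping point is that $C^*(Y_c)$ here is the Roe algebra of the (noncompact) base $Y_c$, not a compact-operator algebra, so one should be careful that ``locally compact'' on $X_c$ transfers to ``locally compact'' on $Y_c$ fiber-wise; this is immediate once the unitary identification of $L^2$-spaces is in place.
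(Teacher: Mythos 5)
Your proposal is correct and takes essentially the same route as the paper: both use the fundamental domain $U_c$ to obtain the unitary identification $L^2(X_c)\cong\ell^2(\ZZ)\otimes L^2(Y_c)$ intertwining the deck action with $\lambda\otimes 1$, show that the $\ZZ$-invariant finite-propagation locally compact operators form $\CC[\ZZ]\otimes_{\mathrm{alg}}\CC[Y_c]$, and pass to $C^*$-closures; the paper then applies Fourier transform on the $C^*_r(\ZZ)$ factor exactly as you do. The only difference is emphasis: you call out the potential completion-norm mismatch explicitly and appeal to the spatial tensor product via the regular representation, whereas the paper records the decomposition $T=\sum_n S^n\otimes T_n$ with $T_n=S^{-n}\Pi_n T\Pi_0$ and takes the closure without dwelling on this point.
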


\begin{proof}
The second isomorphism in Eq.\ \eqref{eqn:Roe.Fourier} 
comes from the Fourier transform $C^*_r(\ZZ)\cong C(\widehat{\ZZ})$, while
the first isomorphism arises from the decomposition $L^2(X_c)\cong \ell^2(\ZZ)\otimes L^2(Y_c)$, as follows. With the fundamental domain $U_c$ as in Eq.\ \eqref{EqDefUc}, write $\Pi_n $ for the projection onto $L^2(U_c + 2n\pi/|c|)$. Since the subspaces $\Pi_nL^2(X_c)$ are identified by shift unitaries $S^n$, we get unitary isomorphisms 
\[ L^2(X_c) \cong \bigoplus_{n \in \mathbb{Z}} \Pi_nL^2(X_c) \cong \ell^2(\mathbb{Z}) \otimes L^2(U_c). \]
A $\mathbb{Z}$-invariant locally compact operator $T \in B(L^2(X_c))$ with finite propagation is decomposed into an infinite sum 
\[T = \sum_{n \in \mathbb{Z}}\Big( \sum_{m \in \mathbb{Z} } \Pi_{m+n} T \Pi_m\Big)  = \sum_{n\in \mathbb{Z}} S^n \otimes T_n \in B(\ell^2(\mathbb{Z}) \otimes L^2(U_c)),  \]
where $T_n:= S^{-n} \Pi_nT\Pi_0 \in B(L^2(U_c))$. 
This says that we have the tensor product decomposition $\mathbb{C}[X_c] \cong \mathbb{C}(\mathbb{Z}) \otimes \mathbb{C}[U_c]$ compatible with $L^2(X_c) \cong \ell^2(\mathbb{Z}) \otimes L^2(U_c)$, where $\mathbb{C}(\mathbb{Z}) \subset B(\ell^2(\mathbb{Z}))$ denotes the group algebra of $\mathbb{Z}$. By taking the $C^*$-algebra closure, we obtain
\[C^*(X_c)^{\mathbb{Z}} \cong C^*_r(\mathbb{Z}) \otimes C^*(U_c) \cong C^*_r(\mathbb{Z}) \otimes C^*(Y_c). \qedhere \] 
\end{proof}

The isomorphism Eq.\ \eqref{eqn:Roe.Fourier} also applies to the localized Roe algebras,
\begin{equation*}
C^*_{X_c}(\partial X_c^+)^\ZZ\cong C(\widehat{\ZZ})\otimes C^*_{Y_c}(\partial Y_c^+) = C(\widehat{\ZZ}) \otimes \mathcal{K}(L^2(Y_c)).
\end{equation*}
This says that $C^*(X_c)^\ZZ/C^*_{X_c}(\partial X_c^+)^\ZZ$ is a $C^*$-subalgebra of $C(\widehat{\ZZ}) \otimes \mathcal{Q}(L^2(Y_c))$ on which $\varpi'$ given in Eq.\ \eqref{eqn:family.Calkin} of Lemma \ref{lem:quotient.projection} coincides with $\varpi $ given in Eq.\ \eqref{eqn:quotient.SES}.

The isomorphism 
\[ 
\partial \colon K_0\left(C(\widehat{\ZZ}) \otimes \mathcal{Q}(L^2(Y_c))\right) \to K_1\left(C(\widehat{\ZZ}) \otimes \mathcal{K}(L^2(Y_c))\right) \cong \ZZ
\]
is given by the spectral flow of self-adjoint Fredholm operators \cite{Phillips}.
Hence the spectral gap filling at $\mu \in ((2n-1)|b|,(2n+1)|b|)$, proved in Theorem \ref{thm:helical.gap.filling}, is now understood as the spectral flow of catenoid Landau operators across $\mu$, and the number of eigenvalues crossing $\mu$ is measured by the $K_0$-class of the spectral projection $\sum_{2k+1 <\mu} \varpi' \big( \varphi_k\big(H_\Theta^{(X_c^+)} \big) \big)$. 
Eq.\ \eqref{eqn:nontrivial.connecting.map} shows that 
\[ \partial \bigg[ \sum_{2k+1 <\mu} \varpi' \Big( \varphi_k\big( H_\Theta^{(X_c^+)}\big) \Big) \bigg] = \pm n \in K_1\left(C(\widehat{\ZZ}) \otimes \mathcal{K}(L^2(Y_c))\right) \cong \ZZ,\]
since $\partial_{\rm MV} ({\rm Ind} (D))$ is a generator of $K_1\big(C^*_{X_c^+}(\partial X_c^+)^\ZZ\big)$.

\section*{Acknowledgements}
Y.K.\ acknowledges support from RIKEN iTHEMS and JSPS KAKENHI Grant Numbers 19K14544, JP17H06461 and JPMJCR19T2. 
M.L.\ acknowledges support from SFB 1085 ``Higher invariants'' of the DFG.
G.C.T.\ acknowledges support from Australian Research Council DP200100729.

\section*{Data Availability}
Data sharing not applicable to this article as no datasets were generated or analyzed during the current study.


\begin{thebibliography}{999}
\bibitem{AS4}
Atiyah, M.F., Singer, I.M.: The index of elliptic operators. IV. Ann. Math. {\bf 93} 119--138 (1971) 
\bibitem{ASS}
Avron, J.E., Seiler, R., Simon, B.: Charge deficiency, charge transport and comparison of dimensions. Commun. Math. Phys. {\bf 159} 399--422 (1994)
\bibitem{BES}
Bellissard, J., van Elst, A., Schulz-Baldes, H.: The noncommutative geometry of the quantum Hall effect. J. Math. Phys. {\bf 35} 5373--5451 (1994)
\bibitem{Blackadar}
Blackadar, B.: $K$-theory for operator algebras. Math. Sci. Res. Inst. Publ., Vol. 5, Cambridge Univ. Press, Cambridge, 1998
\bibitem{Callias}
Callias, C.: Axial anomalies and index theorems on open spaces. Commun. Math. Phys. {\bf 62}(3) 213--234 (1978)
\bibitem{Comtet}
Comtet, A., Houston, J.: Effective action on the hyperbolic plane in
a constant external field. J. Math. Phys. {\bf 26}(1) 185--191 (1985)
\bibitem{CKS}
Cooper, F., Khare, A., Sukhatme, U.: Supersymmetry and quantum mechanics. Phys. Reports {\bf 251}(5--6) 267--385 (1995)
\bibitem{DoCarmo}
Do Carmo, M.P.: Differential geometry of curves and surfaces: revised and updated second edition. Dover, Mineola, 2016
\bibitem{Halperin}
Halperin, B.I.: Quantized Hall conductance, current-carrying edge states, and the existence of extended states in a two-dimensional disordered potential. Phys. Rev. B {\bf 25} 2185 (1982)
\bibitem{HR-book}
Higson, N., Roe, J.: Analytic K-homology, Oxford Math. Monogr.,
Oxford Univ. Press, Oxford, 2000
\bibitem{HR-coarse}
Higson, N., Roe, J.: On the coarse Baum--Connes conjecture. In: Ferry, S., Ranicki, A., Rosenberg, J. (Eds.), Novikov Conjectures, Index Theorems, and Rigidity (London Math. Soc. Lect. Note Series, pp. 227--254), Cambridge Univ. Press, 1995
\bibitem{HRY}
Higson, N., Roe, J., Yu, G.: A coarse Mayer--Vietoris principle. Math. Proc. Camb. Phil. Soc. {\bf 114} 85--97 (1993)
\bibitem{Hori}
Hori, K. et al.: Mirror symmetry. Clay Math Monogr. {\bf 1}, Amer. Math. Soc., Providence (2003)
\bibitem{Inahama}
Inahama, Y., Shirai, S.: The essential spectrum of Schr\"{o}dinger operators with asymptotically constant magnetic fields on the Poincar\'{e} upper-half plane. J. Math. Phys. {\bf 44}(1) 89--106 (2003)
\bibitem{Iwatsuka}
Iwatsuka, A.: The essential spectrum of two-dimensional Schr\"{o}dinger operators with perturbed constant magnetic fields. J. Math. Kyoto Univ. {\bf 23}(3) 475--480 (1983)
\bibitem{Kasparov}
Kasparov, G.G.: Hilbert $C^*$-modules: Theorems of Stinespring and Voiculescu. J. Operator Theory {\bf 4}(1) 133--150 (1980)
\bibitem{Kubota}
Kubota, Y.: The bulk-dislocation correspondence for weak topological insulators on screw-dislocated lattices. J. Phys. A: Math. Theor. {\bf 54} 364001 (2021)
\bibitem{Kunz}
Kunz, H.: The Quantum Hall Effect for Electrons
in a Random Potential. Commun. Math. Phys. {\bf 112} 121--145 (1987)
\bibitem{Lance} Lance, E.C.: Hilbert C*-modules, A toolkit for operator algebraists. London Math. Soc. Lect. Note Series 210, Cambridge Univ. Press, 1995
\bibitem{Landau}
Landau, L.D.: Diamagnetismus der Metalle. Zeitschrift f\"{u}r Physik {\bf 64}(9-10) 629--637 (1930)
\bibitem{LT}
Ludewig, M., Thiang, G.C.: Gaplessness of Landau Hamiltonians on hyperbolic half-planes via coarse geometry. Commun. Math. Phys. {\bf 386} 87--106 (2021)
\bibitem{LTcobordism}
Ludewig, M., Thiang, G.C.: Cobordism invariance of topological edge-following states. Adv. Theor. Math. Phys. {\bf 26}(3) (2002) (to appear), arXiv:2001.08339
\bibitem{LTwannier2}
Ludewig, M., Thiang, G.C.: Large-scale geometry obstructs localization. arXiv:2004.12895
\bibitem{Moller}
M\"{o}ller, M.: On the Essential Spectrum of a Class of Operators in Hilbert Space. Math. Nachr. {\bf 194} 185--196 (1998)
\bibitem{Phillips}
Phillips, J.: Self-adjoint Fredholm operators and spectral flow. Canad. Math. Bull. {\bf 39} (4) 460--467 (1996)
\bibitem{RZV}
Ran, Y., Zhang, Y., Vishwanath, A.: One-dimensional topologically protected modes in topological insulators with lattice dislocations. Nature Phys. {\bf 5} 298--303 (2009)
\bibitem{RS1}
Reed, M., Simon, B.: Methods of Modern Mathematical Physics Vol. 1, Acad. Press, San Diego, 1980
\bibitem{Roebook}
Roe, J.: Index Theory, Coarse Geometry, and Topology of Manifolds. CBMS Regional Conf. Series in Math., vol 90, 1996
\bibitem{Shubin}
Shubin, M.: Essential Self-Adjointness for Semi-bounded Magnetic Schr\"{o}dinger Operators on Non-compact Manifolds. J. Funct. Anal. {\bf 186} 92--116 (2001)
\bibitem{Thaller}
Thaller, B.: The Dirac Equation. Texts Monogr. Phys. Springer--Verlag, Berlin Heidelberg, 1992
\bibitem{TKNN}
Thouless, D.J., Kohmoto, M., Nightingale, M.P., den Nijs, M.: Quantized Hall Conductance in a Two-Dimensional Periodic Potential. Phys. Rev. Lett. {\bf 49} 405 (1982)
\bibitem{Trout} 
Trout, J.: On graded K-theory, elliptic operators and the functional calculus. Illinois J. Math. {\bf 44}(2) 294--309 (2000)
\bibitem{Klitzing}
von Klitzing, K., Dorda, G., Pepper, M.: New Method for High-Accuracy Determination of the Fine-Structure Constant Based on Quantized Hall Resistance. Phys. Rev. Lett. {\bf 45} 494  (1980)
\bibitem{WillettYu}
Willett, R., Guoliang, Yu.: Higher index theory. Cambridge Stud. Adv. Math., Vol. 189. Cambridge Univ. Press, Cambridge, 2020
\bibitem{Xue}
Xue, H.: Observation of Dislocation-Induced Topological Modes in a Three-Dimensional Acoustic Topological Insulator. Phys. Rev. Lett. {\bf 127} 214301 (2022)
\bibitem{Ye}
Ye, L. et al.: Topological dislocation modes in three-dimensional acoustic topological insulators. Nature Commun. {\bf 13} 508 (2022)
\end{thebibliography}
\end{document}